\newcommand{\mat}[1]{\text{\mbox{\boldmath$#1$}}}
\newcommand{\bm}[1]{\text{\mbox{\boldmath$#1$}}}
\newtheorem{theorem}{Theorem}
\newtheorem{corollary}{Corollary}
\newtheorem{lemma}{Lemma}
\begin{document}
 \title{Adaptive False Discovery Rate Control for Heterogeneous Data}
 \author{Joshua D. Habiger \\
  Department of Biostatistics \\ Kansas University Medical Center \\
 3109 Rainbow Blvd \\ Kansas City, KS \\ email:jhabiger@kumc.edu \\ }
\maketitle

\noindent {\it Abstract:}
Efforts to develop more efficient multiple hypothesis testing procedures for false discovery rate (FDR) control have focused on incorporating an estimate of the proportion of true null hypotheses (such procedures are called adaptive) or exploiting heterogeneity across tests via some optimal weighting scheme.  This paper combines these approaches using a weighted adaptive multiple decision function (WAMDF) framework.  Optimal weights for a flexible random effects model are derived and a WAMDF that controls the FDR for arbitrary weighting schemes when test statistics are independent under the null hypotheses is given.  Asymptotic and numerical assessment reveals that, under weak dependence, the proposed WAMDFs provide more efficient FDR control even if optimal weights are misspecified.  The robustness and flexibility of the proposed methodology facilitates the development of more efficient, yet practical, FDR procedures for heterogeneous data.  To illustrate, two different weighted adaptive FDR methods for heterogeneous sample sizes are developed and applied to data. \\

\noindent%
{\it Keywords}: Decision Function; Multiple Testing; P-value; Weighted P-value

\section{Introduction}\label{sec 1}

High  throughput technology routinely generates data sets that call for hundreds or thousands of null hypotheses to be tested simultaneously.  For example, in \cite{AndHab12}, RNA sequencing technology was used to measure the prevalence of bacteria living near the roots of wheat plants across $i = 1, 2, ..., 5$ treatment groups for each of $m = 1, 2, ..., M = 778$ bacteria, thereby facilitating the simultaneous testing of $778$ null hypotheses.  See Table \ref{data} for a depiction of the data, or see Section 8 for more details. See also \cite{Efr08, DudVan08, Efr10} for other, sometimes called, high-dimensional (HD) data sets.

In general,  multiple null hypotheses are simultaneously tested with a multiple testing procedure which, ideally, rejects as many null hypotheses as possible subject to the constraint that some global type 1 error rate is controlled at a prespecified level $\alpha$.  The false discovery rate (FDR) is the most frequently considered error rate in the HD setting.  It is loosely defined as the expected value of the false discovery proportion (FDP), where the FDP is the proportion of erroneously rejected null hypotheses, also called false discoveries, among rejected null hypotheses, or discoveries.  See \cite{Sar07} for other related error rates.  In their seminal paper, \cite{BenHoc95} showed that a step-up procedure based on the \cite{Sim86} line, henceforth referred to as the BH procedure, has FDR  $= \alpha a_0\leq \alpha$ under a certain dependence structure, where $a_0$ is the proportion of true null hypotheses.  Since then, much research has focused on developing more efficient procedures for FDR control.

One  approach seeks to control the FDR at a level nearer $\alpha$, as opposed to $\alpha a_0$.  For example, adaptive procedures in \cite{BenHoc00,Sto04, BenKri06, Gav09, LiaNet12} utilize an estimate of $a_0$ and typically have FDR that is greater than $\alpha a_0$ yet still less than or equal to $\alpha$.  \cite{Fin09} proposed nonlinear procedures that ``exhaust the $\alpha$'' in that, loosely speaking, their FDR converges to $\alpha$ under some least favorable configuration as $M$ tends to infinity.

Another approach aims to exploit heterogeneity across hypothesis tests.  \cite{Gen06}; \cite{BlaRoq08}; \cite{RoqVan09}; \cite{Pen11} proposed a weighted BH-type procedure, where weights are allowed to depend on the power functions of the individual tests or prior probabilities for the states of the null hypotheses. \cite{Sto07} considered a ``single thresholding procedure'' which allowed for heterogeneous data generating distributions.  \cite{SunCai09} and \cite{Hu10} provided methods for clustered data, where test statistics are heterogeneous across clusters but homogeneous within clusters, while \cite{SunMcL12} considered heteroscedastic standard errors.  Data in Table \ref{data} are heterogeneous because sample sizes $n_1, n_2, ..., n_M$ vary from test to test, with $n_m$ being as small as $6$ and as large as $911$.
\begin{table}[t!]\caption{\label{data} Depiction of the data in \cite{AndHab12}.  Shoot  biomass $x_i$ in grams for groups $i$ = 1, 2, ..., 5 was  0.86, 1.34, 1.81, 2.37, and 3.00, respectively. Row totals are in the last column. }
\centering
\begin{tabular}[h!]{ccccccc|c} \hline \hline
Bacteria ($m$) && $Y_{1m}$ & $Y_{2m}$ & $Y_{3m}$ & $Y_{4m}$ & $Y_{5m}$ & Total ($n_m$)\\ \cline{1-1} \cline{3-8} \cline{3-8} \cline{1-1}
1 && 0 & 1 & 1 & 0 & 5 & 7\\
2 && 9 & 2 & 0 & 0 & 3 & 14\\
$\vdots$&&$\vdots$&$\vdots$&$\vdots$&$\vdots$&$\vdots$ & $\vdots$\\
778 && 16 & 10 & 29 & 18 & 13 & 81 \\
\hline
\end{tabular}
\end{table}

Whatever the nature of the heterogeneity may be, recent literature suggests that it should not be ignored. \cite{RoeWas09} showed that weighted multiple testing procedures generally perform favorably over their unweighted counterparts, especially when the employed weights efficiently exploit heterogeneity. Further, \cite{SunMcL12} showed that procedures which ignore heterogeneity can produce lists of discoveries that are of little scientific interest.

The objective of this paper is to provide a general approach for exploiting heterogeneity without sacrificing efficient FDR control.  The idea is to combine adaptive FDR methods for exhausting the $\alpha$ with weighted procedures for exploiting heterogeneity using a decision theoretic framework. Sections \ref{sec 2} - \ref{sec 5} provide the general framework.  Section \ref{sec 2} introduces multiple decision functions (MDFs) and a random effects model that can accommodate many types of heterogeneity including, but not limited to, those mentioned above.  Tools which facilitate easy implementation of MDFs, such as weighted $p$-values, are also developed.  Section \ref{sec 3} derives optimal weights for the random effects model and Section \ref{sec 4} introduces an asymptotically optimal weighted adaptive multiple decision function (WAMDF) for asymptotic FDP control.  Section \ref{sec 5} provides a WAMDF for exact (nonasymptotic) FDR control.

Assessment in Sections \ref{sec 6} and \ref{sec 7} reveals that, under a weak dependence structure, WAMDFs dominate other MDFs even when weights are misspecified.  Specifically, Section \ref{sec 6} shows that the asymptotic FDP of a WAMDF is larger than the FDP of its unadaptive counterpart, yet less than or equal to the nominal level $\alpha$.  Sufficient conditions for ``$\alpha$-exhaustion'' are provided and shown to be satisfied in a variety of settings.  For example, unweighted adaptive MDFs in \cite{Sto04} and certain asymptotically optimal WAMDFs are $\alpha$-exhaustive.  In fact, $\alpha$-exhaustion is achieved even in a worst-case-scenario setting, where employed weights are generated independently of optimal weights.  Simulation studies in Section \ref{sec 7} demonstrate that WAMDFs are more powerful than competing MDFs as long as the employed weights are positively correlated with optimal weights, and only slightly less powerful in the worse-case-scenario weighting scheme.

Section \ref{sec 8} provides two different routes for implementing WAMDFs in practice and compares them to one another.  They are applied to the data in Table 1 and shown analytically and with simulation to perform better than competing unweighted procedures. Concluding remarks are in Section \ref{sec 9} and technical details are in the Supplemental Article.

\section{Background} \label{sec 2}
\subsection{Data}
Let $\mat{Z} = (Z_m, m\in\mathcal{M})$ for $\mathcal{M} = \{1, 2, ..., M\}$ be a random vector of test statistics with joint distribution function $F$ and let $\mathcal{F}$ be a model for $F$.  The basic goal is to test null hypotheses $\mat{H} = (H_m, m\in\mathcal{M})$ of the form $H_m:F\in\mathcal{F}_m$, where $\mathcal{F}_m\subseteq\mathcal{F}$ is a submodel for $\mathcal{F}$.  For short, we often denote the state of $H_m$ by $\theta_m = 1 - I(F\in\mathcal{F}_m)$, where $I(\cdot)$ is the indicator function, so that $\theta_m = 0 (1)$ means that $H_m$ is true(false), and denote the state of $\mat{H}$ by $\bm{\theta} = (\theta_m, m\in\mathcal{M})$. Let  $\mathcal{M}_0 = \{m\in\mathcal{M}:\theta_m =0\}$ and $\mathcal{M}_1 = \mathcal{M}\setminus\mathcal{M}_0$ index the set of true and false null hypotheses, respectively, and denote the number of true and false null hypotheses by $M_0 = |\mathcal{M}_0|$ and $M_1 = |\mathcal{M}_1|$, respectively.

To make matters concrete, we often consider a random effects model for $\mat{Z}$.  For  related models see \cite{EfrTib01, GenWas02, Sto03, Gen06, SunCai07,SunCai09, RoqVan09}.  In Model 1, heterogeneity across the $Z_m$'s is attributable to prior probabilities $\mat{p} = (p_m, m\in\mathcal{M})$ for the states of the $H_m$'s and parameters $\bm{\gamma} = (\gamma_m, m\in\mathcal{M})$, which we refer to as effect sizes for ease of exposition, although each $\gamma_m$ could merely index a distribution for $Z_m$ when $H_m$ is false.  See, for example, Section \ref{sec 8}.\\
{\textbf{Model 1.} Let  $(Z_m, \theta_m, p_m, \gamma_m), m\in\mathcal{M}$, be independent and identically distributed random vectors with support in $\Re \times\{0,1\}\times [0,1]\times\Re^+$  and with conditional distribution functions
$F(z_m| \theta_m, p_m, \gamma_m) = (1-\theta_m)F_0(z_m) + \theta_m F_1(z_m|\gamma_m)$
and
$F(z_m|p_m,\gamma_m) = (1-p_m)F_0(z_m) + p_mF_1(z_m|\gamma_m).$
Assume $F(\gamma_m,p_m) = F(\gamma_m)F(p_m)$, $Var(\gamma_m)<\infty$ and that $p_m$ has mean $1-a_0 \in (0,1)$.

Observe  that $Z_m$ has distribution function $F_0(\cdot)$ given $H_m:\theta_m = 0$ and has distribution function $F_1(\cdot|\gamma_m)$ otherwise. Here, parameters $\bm{\theta}$, $\mat{p}$, and $\bm{\gamma}$ are assumed to be random variables to facilitate asymptotic analysis, as in \cite{Gen06, BlaRoq08, BlaRoq09,RoqVan09,RoqVil11}. Analysis under Model 1 focuses on conditional distribution functions $F(\mat{z}|\bm{\theta},\bm{p},\bm{\gamma}) = \prod_{m\in\mathcal{M}}F(z_m|\theta_m,p_m,\gamma_m)$ and $F(\mat{z}|\bm{p},\bm{\gamma}) = \prod_{m\in\mathcal{M}}F(z_m|p_m,\gamma_m)$, and an expectation taken over $\bm{Z}$ with respect to these distributions is denoted by $E[\cdot|\bm{\theta},\bm{p},\bm{\gamma}]$ and $E[\cdot|\bm{p},\bm{\gamma}]$, respectively. 

\subsection{Multiple decision functions}

A  multiple decision function (MDF) framework is used to formally define a multiple testing procedure. For similar frameworks see \cite{GenWas04, Sto04, SunCai07, Pen11}. Let $\delta_m(Z_m;t_m)$ denote a decision function taking values in $\{0,1\}$, where $\delta_m = 1 (0)$ means that $H_m$ is rejected(retained).  A decision function depends functionally on data $Z_m$ and (possibly random) ``size threshold'' $t_m\in[0,1]$.  To illustrate, suppose that large values of $Z_m$ are evidence against $H_m:\theta_m = 0$ under Model 1.  Then we may define
\begin{equation}\label{model1example} \delta_m(Z_m;t_m) = I(Z_m\geq F_0^{-1}(1-t_m)).\end{equation}
Observe  that $E[\delta_m(Z_m;t_m) |\theta_m = 0] = 1 - F_0(F_0^{-1}(1-t_m)) = t_m$ so that $t_m$ indeed represents the size of $\delta_m$, hence the terminology ``size threshold''.  An MDF is denoted $\bm{\delta}(\mat{Z};\mat{t}) = [\delta_m(Z_m;t_m), m\in\mathcal{M}]$, where $\mat{t} = (t_m, m\in\mathcal{M})$ is called a threshold vector. If $t_m = \alpha/M$ for each $m$ then $\bm{\delta}(\mat{Z};\mat{t})$ represents the well-known Bonferroni procedure.

Assume that, for each $m$, $t_m\mapsto\delta_m(Z_m;t_m)$ is nondecreasing and right continuous with $\delta_m = 0(1)$ whenever $t_m = 0 (1)$, almost surely, and that $t_m\mapsto E[\delta_m(Z_m;t_m)]$ is continuous and strictly increasing for $t_m\in(0,1)$, with $E[\delta_m(Z_m;t_m)]= t_m$ whenever $m\in\mathcal{M}_0$. These assumptions are referred to as the nondecreasing-in-size (NS) assumptions and are satisfied, for example, under Model 1 for decision functions defined as in (\ref{model1example}).  For additional details and examples see \cite{HabPen11, Pen11, Hab12}.

\subsection{Tools for implementation}
We break $\mat{t}$ down into the product of a positive valued weight vector $\bm{w} = (w_m, m\in\mathcal{M})$ satisfying $\bar{w} = M^{-1}\sum_{m\in\mathcal{M}} w_m = 1$ and an overall or average threshold $t$, $\mat{t} = t\mat{w}$.  First, weights are specified and then data $\mat{Z}=\mat{z}$ are collected, the overall threshold $t$ is computed, and the MDF $\bm{\delta}(\mat{z};t\mat{w})$ is computed.  If weights are based on Model 1, for example, then they are allowed to depend functionally on $\mat{p}$ and $\bm{\gamma}$.  The overall threshold is allowed to depend functionally on $\mat{z}$ and $\mat{w}$.

It is useful to exploit the link between weighted $p$-values and decision functions.  Define the (unweighted) $p$\textit{-value} statistic corresponding to $\delta_m$ by

$
P_m =\inf\{t_m\in[0,1]:\delta_m(Z_m;t_m) = 1\}.
$

This definition, see \cite{HabPen11,Pen11}, has the usual interpretation that $P_m$ is the smallest size $t_m$ allowing for $H_m$ to be rejected, and ensures that
$
\delta_m(Z_m;t_m) = I(P_m\leq t_m)
$
almost surely under the NS assumptions.  For example, it can be verified  that the $p$-value statistic corresponding to (\ref{model1example}) is $P_m = 1 - F_0(Z_m)$ and that $I(Z_m\leq F_0^{-1}(1-t_m)) = I(P_m\leq t_m)$ almost surely. See \cite{Hab12, HabPen14} for more details or for derivations of more complex $p$-values, such as the $p$-value for the local FDR statistic in \cite{EfrTib01, SunCai07} or for the optimal discovery procedure in \cite{Sto07}. Define the \textit{weighted $p$-value} statistic by

$
Q_m =\inf\{t:\delta_m(Z_m; tw_m)=1\}.
$

For $w_m$ fixed, and writing $t_m = t w_m$,
$$
P_m = \inf\{tw_m:\delta_m(Z_m;tw_m)=1\} = w_m\inf\{t:\delta_m(Z_m;tw_m) = 1\} = w_mQ_m
$$
almost surely.  Thus, a weighted $p$-value can be computed by $Q_m = P_m/w_m$.  Hence,  we have established the almost surely equivalent expressions for a decision function under the NS assumptions:
\begin{equation}\label{form}
\delta_m(Z_m;t_m) = \delta_m(Z_m;tw_m) = I(P_m\leq t w_m) = I(Q_m\leq t).
\end{equation}

\section{Optimal weights} \label{sec 3}
 Though results regarding exact FDR control in Section \ref{sec 5} or asymptotic FDP control in Section \ref{sec 6.1} apply more generally (see assumptions (A3) and  (A4) - (A6), respectively), optimal weights in this paper are developed for Model 1.  We first derive optimal weights assuming that $t$ is fixed/known.

\subsection{Optimal fixed-t weights} \label{sec 3.1}
We consider $\delta(\bm{Z};\bm{t})$ and the constraint that $\bar{w} = 1$ is replaced with the constraint that $\bar{t} = t$, where $\bar{t} = M^{-1}\sum_{m\in\mathcal{M}}t_m$. As weights are allowed to depend on $\mat{p}$ and $\bm{\gamma}$ under Model 1, the focus is on the conditional expectation of $\delta_m(Z_m;t_m)$ denoted by 
$
G_m(t_m) \equiv E[\delta_m(Z_m;t_m)|\bm{p},\bm{\gamma}] = (1-p_m)t_m + p_m\pi_{\gamma_m}(t_m),
$
where $\pi_{\gamma_m}(t_m) = E[\delta_m(Z_m;t_m)|\theta_m = 1,\gamma_m]$ is the power function for $\delta_m$. As in \cite{Gen06, RoqVan09, Pen11}, assume power functions (as a function of $t_m$) are concave.
\begin{enumerate}
\item[(A1)] For  each $m\in\mathcal{M}$, $t_m\mapsto\pi_{\gamma_m}(t_m)$ is concave and twice differentiable for $t_m\in(0,1)$, with $\lim_{t_m\uparrow1}\pi_{\gamma_m}'(t_m) = 0$ and $\lim_{t_m\downarrow 0}\pi_{\gamma_m}'(t_m) =\infty$ almost surely, where $\pi_{\gamma_m}'(t_m)$ is the derivative of $\pi_{\gamma_m}(t_m)$ with respect to $t_m$.
\end{enumerate}
This concavity  condition is satisfied, for example, under monotone likelihood ratio considerations (\cite{Pen11}) and under the generalized monotone likelihood ratio (GMLR) condition in \cite{Cao13}.

Given $\mat{p}$,  $\bm{\gamma}$, and $t$, the goal is to maximize the expected number of correctly rejected null hypotheses

$\pi(\mat{t},\mat{p},\bm{\gamma}) \equiv E\left[\sum_{m\in\mathcal{M}}\theta_m\delta_m(Z_m;t_m) \Big| \bm{\gamma},\mat{p}\right] = \sum_{m\in\mathcal{M}}p_m\pi_{\gamma_m}(t_m)$ subject to the constraint that $\bar{t} = t$.
\begin{theorem}\label{opt solution}
Suppose that (A1)  is satisfied, and fix $t\in(0,1)$.  Then under Model 1 the maximum of $\pi(\mat{t},\mat{p}, \bm{\gamma})$ with respect to $\mat{t}$ subject to constraint $\bar{t} = t$ exists, is unique, and satisfies
\begin{equation}\label{derivative}
\pi_{\gamma_m}'(t_m) = k/p_m
\end{equation}
for every $m\in\mathcal{M}$ and some $k>0$.
\end{theorem}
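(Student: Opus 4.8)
The plan is to recognize this as a concave maximization over a compact convex set and to exploit the Inada-type limits in (A1) to force an interior optimum, at which a single Lagrange multiplier produces (\ref{derivative}); uniqueness will then come from strict monotonicity of $\pi'_{\gamma_m}$. First I would record the structural facts. The feasible set $\{\mat{t}\in[0,1]^M:\bar t=t\}$ is nonempty (it contains the point $t_m\equiv t$ since $t\in(0,1)$), closed, bounded and convex, and the objective $\sum_m p_m\pi_{\gamma_m}(t_m)$ is continuous, so a maximizer exists by the extreme value theorem. Because each $\pi_{\gamma_m}$ is concave by (A1) and $p_m\ge 0$, the objective is concave; hence every stationary point is a global maximizer and the set of maximizers is convex. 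I would also dispose of the degenerate coordinates with $p_m=0$ here: they contribute nothing to the objective, so I restrict the characterization (\ref{derivative}) to $m$ with $p_m>0$ (which carry all of the objective) and fix the remaining $t_m$ to meet the constraint.

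Second, and this is the technical heart, I would show that every maximizer lies in the open cube $(0,1)^M$, so that the box constraints are inactive and only $\bar t=t$ can bind. Suppose $t_{m^*}=0$ for some $m^*$ with $p_{m^*}>0$. Since $\lim_{s\downarrow 0}\pi'_{\gamma_{m^*}}(s)=\infty$ while the marginal objective at any coordinate $m'$ with $t_{m'}\in(0,1)$ is the finite quantity $p_{m'}\pi'_{\gamma_{m'}}(t_{m'})$, transferring an infinitesimal amount of size from $m'$ to $m^*$ preserves $\bar t=t$ yet strictly increases the objective, contradicting optimality; the symmetric argument using $\lim_{s\uparrow 1}\pi'_{\gamma_{m^*}}(s)=0$ rules out $t_{m^*}=1$. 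The boundary conditions in (A1) exist precisely to make this step go through.

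Third, with an interior optimum I would form the Lagrangian $L=\sum_m p_m\pi_{\gamma_m}(t_m)-k\big(\sum_m t_m-Mt\big)$, whose stationarity conditions $p_m\pi'_{\gamma_m}(t_m)=k$ rearrange to $\pi'_{\gamma_m}(t_m)=k/p_m$, which is (\ref{derivative}); by concavity this condition is sufficient as well as necessary for a global maximum. Positivity of $k$ is then immediate, since on $(0,1)$ the derivative $\pi'_{\gamma_m}$ runs strictly between its limits $\infty$ and $0$ and so is positive, whence $k=p_m\pi'_{\gamma_m}(t_m)>0$.

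Finally, for uniqueness I would invert the first-order conditions, and this is where the main obstacle lies: uniqueness rests on $\pi'_{\gamma_m}$ being \emph{strictly} decreasing, which is the substantive content of the monotone likelihood ratio and GMLR conditions cited under (A1) and is more than bare concavity supplies. Granting it, $\pi'_{\gamma_m}$ is a decreasing bijection of $(0,1)$ onto $(0,\infty)$, so for each $k>0$ the value $t_m(k)=(\pi'_{\gamma_m})^{-1}(k/p_m)$ is the unique solution of (\ref{derivative}), and $k\mapsto\sum_m t_m(k)$ is continuous and strictly decreasing from $M$ (as $k\downarrow 0$) to $0$ (as $k\uparrow\infty$). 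Since $Mt\in(0,M)$, exactly one $k$ satisfies $\sum_m t_m(k)=Mt$; this pins down $\mat{t}$ and, by concavity, certifies it as the unique global maximizer, simultaneously re-proving existence of the stationary point. Were $\pi'_{\gamma_m}$ only weakly decreasing, a flat stretch would permit a continuum of solutions to (\ref{derivative}) at a single $k$, so strict monotonicity is exactly the hypothesis that cannot be dropped.
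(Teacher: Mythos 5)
Your proof is correct, and its skeleton coincides with the paper's: both form the Lagrangian, read off the first-order condition $\pi_{\gamma_m}'(t_m)=k/p_m$, and pin down $k$ uniquely from the strict monotonicity and range of $k\mapsto\bar t_M(k,\mat{p},\bm{\gamma})$, which runs from $1$ to $0$. Where you genuinely diverge is in how optimality is certified. The paper verifies second-order conditions by computing the bordered Hessian and showing, via the recursion $|\mat{H}_j| = d_j|\mat{H}_{j-1}| + (-1)^j\prod_{m<j}(-d_m)$ with $d_m=\pi_{\gamma_m}''(t_m)<0$, that the principal minors alternate in sign; you instead invoke global concavity of the objective over the compact convex feasible set, together with an explicit Inada-type argument that the Dirichlet limits $\pi_{\gamma_m}'(0^+)=\infty$ and $\pi_{\gamma_m}'(1^-)=0$ force any maximizer into the open cube so the box constraints never bind. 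Your route buys more: the bordered-Hessian test only certifies a constrained \emph{local} maximum and the paper never explicitly compares against boundary configurations (e.g.\ $t_m\in\{0,1\}$), whereas your interior-point step plus concavity delivers the global statement directly; you also treat the $p_m=0$ coordinates, where $k/p_m$ is ill-defined, which the paper passes over. Your closing observation is also apt and worth retaining: both arguments tacitly require $\pi_{\gamma_m}'$ to be \emph{strictly} decreasing (the paper uses this when it asserts $d_m<0$ and that $t_m(k/p_m,\gamma_m)$ is a well-defined, strictly decreasing function of $k$), which is slightly stronger than the bare concavity stated in (A1) but is what the MLR/GMLR justifications actually supply.
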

\cite{Spj72} and \cite{Sto07} also derived expressions for optimal fixed-$t$ thresholds, but did not allow for the states of the $H_m$'s to be random.  Specifically, \cite{Spj72} proposed maximizing $\sum_{m \in\mathcal{M}}\pi_{\gamma_m}(t_m)$ (see \cite{RoeWas09} for an illustration in the normal distribution setting) while \cite{Sto07} proposed maximizing $\sum_{m \in\mathcal{M}}\theta_m\pi_{\gamma_m}(t_m)$.

The important quantity in \eqref{derivative} is the constant $k$. In particular it suffices to find the unique value of $k$, say $k^*$, that satisfies $\bar{t} = t$.  For any value of $k$ denote the (unique) solution to (\ref{derivative}) in terms of $t_m$ as $t_m(k/p_m, \gamma_m)$, and take  $\mat{t}(k,\mat{p},\bm{\gamma}) = [t_m(k/p_m,\gamma_m), m\in\mathcal{M}]$. Then to compute weights
\begin{enumerate}
\item find the $k^*$ satisfying $\bar{t}_M(k^*,\mat{p},\bm{\gamma}) = t$,  where $\bar{t}_M(k,\mat{p},\bm{\gamma}) = M^{-1}\sum_{m\in\mathcal{M}}t_m(k/p_m,\gamma_m)$,
\item compute each optimal fixed-t weight
\begin{equation}\label{w star}
w_m(k^*,\mat{p},\bm{\gamma}) = \frac{t_m(k^*/p_m,\gamma_m)}{\bar{t}_M( k^*,\mat{p},\bm{\gamma})}.
\end{equation}
\end{enumerate}
We sometimes denote $w_m(k^*,\mat{p},\bm{\gamma})$ by $w_m^*$ and the vector of optimal fixed-t weights $\mat{w}(k^*,\mat{p},\bm{\gamma}) = [w_m(k^*,\mat{p},\bm{\gamma}),m\in\mathcal{M}]$ by $\mat{w}^* = (w_m^*,m\in\mathcal{M})$.

To better understand how the solution is found and related to the values of $p_m$, $\gamma_m$ and $t$ consider an example. \\
\textbf{Example 1.}
Suppose $Z_m|\gamma_m,\theta_m \sim N(\theta_m\gamma_m, 1)$ for $\gamma_m>0$ and consider testing $H_m:\theta_m = 0$.  Denote the standard normal cumulative distribution function and density function by $\Phi(\cdot)$ and $\phi(\cdot)$, respectively, and let $\bar{\Phi}(\cdot) = 1 - \Phi(\cdot)$.  Take $\delta_m(Z_m;t_m) = I(Z_m\geq \bar{\Phi}^{-1}(t_m))$.  The power function is
$
\pi_{\gamma_m}(t_m) = \bar{\Phi}(\bar{\Phi}^{-1}(t_m)-\gamma_m)
$
and has derivative
$\pi_{\gamma_m}'(t_m) = \frac{\phi(\bar{\Phi}^{-1}(t_m)-\gamma_m)}{\phi(\bar{\Phi}^{-1}(t_m))}.$
Setting the derivative equal to $k/p_m$ and solving yields
\begin{equation}\label{t example}
t_m(k/p_m, \gamma_m) = \bar{\Phi}\left(0.5\gamma_m + \log(k/p_m)/\gamma_m\right).
\end{equation}
The optimal fixed-$t$ threshold vector is computed as $\mat{t}(k^*,\mat{p},\bm{\gamma})$, where $k^*$  satisfies $\bar{t}_M(k^*,\mat{p},\bm{\gamma}) = t$, and the optimal fixed-$t$ weights are computed as in (\ref{w star}).

Observe in (\ref{t example}) that $t_i(k/p_i, \gamma_i) = t_{j}(k/p_{j},\gamma_{j})$ if  $\gamma_i=\gamma_{j}$ and $p_i=p_{j}$ regardless of $k$ and, consequently, the optimal fixed-$t$ weight vector is $\bm{1}$ for any $t$ when data are homogeneous. On the other hand, we see that $t_m(k/p_m,\gamma_m)$ is increasing in $p_m$ and hence $$w_m(k^*,\mat{p},\bm{\gamma}) = M\frac{t_m(k^*/p_m, \gamma_m)}{t_m(k^*/p_m,\gamma_m) + \sum_{j\neq m}t_j(k^*/p_j,\gamma_j)}$$ is increasing in $p_m$, as we might expect.  
\begin{figure}\centering
\epsfig{file = 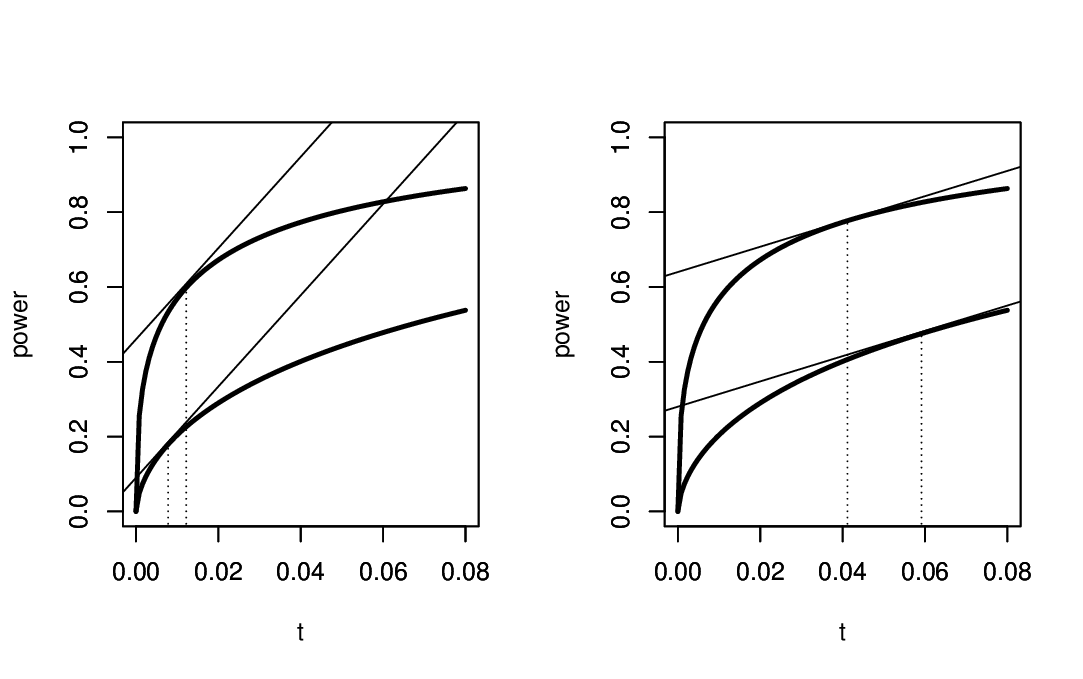, width = 5in}
\caption{\label{opt figure}A depiction of the optimal thresholds for $M = 2$ hypotheses tests when power functions vary
under constraint $0.5(t_1 + t_2) = 0.01$ (left) and $0.5(t_1 + t_2) = 0.05$ (right). }
\end{figure}

The relationship between $w_m(k^*,\mat{p},\bm{\gamma})$ and $\gamma_m$ is more complex. To illustrate, consider testing $M = 2$ null hypotheses and suppose $\gamma_1 = 1.5$, $\gamma_2 = 2.5$, and $p_1=p_2 = 0.5$.  In Figure \ref{opt figure}, observe that for $t = 0.01$, $\bar{t}_M(k^*,\mat{p},\bm{\gamma}) = 0.01$ when $k^* = 6.1$, which gives $t_1(k^*/p_1,\gamma_1) = 0.003$, $t_2(k^*/p_2, \gamma_2) = 0.017$, $w_1^* =0.003/0.01 = 0.3$ and $w_2^* = 0.017/0.01 = 1.7$. Because $p_1=p_2$, the slopes of the power functions evaluated at $0.003$ and $0.017$, respectively, are equal; see equation (\ref{derivative}).  Now consider the fixed threshold $t = 0.05$.  Here $k^* =  1.7$, which leads to weights $w_1^* = 0.059/0.05 = 1.18$ and $w_2^* = 0.041/0.05 = 0.82$.  Thus, when $t = 0.01$, the hypothesis with the larger effect size is given more weight, but when $t = 0.05$ it is given less weight.  For a more detailed discussion on this phenomenon see \cite{Pen11}.  The important point is that the optimal fixed-$t$ weights are only implementable if $t$ is fixed or specified before data collection.

\subsection{Asymptotically optimal weights}
The overall threshold $t$ in Section \ref{sec 4} depends on data $\mat{Z}$ because it depends on the FDP estimator, which depends functionally on $\mat{Z}$; see (\ref{M0 hat}) and (\ref{FDRhat}). The idea in this subsection is to approximate the FDP estimator using $\mat{p}$ and $\bm{\gamma}$.  This allows $t$ to be approximated before data collection so that the optimal fixed-$t$ weights can be utilized.

The FDP ``approximator'' plugs $G_m(t_m(k/p_m,\gamma_m)) = E[\delta_m(Z_m; t_m(k/p_m,\gamma_m))|\mat{p},\bm{\gamma}]$ in for each $\delta_m$ in (\ref{M0 hat}) and (\ref{FDRhat}). Formally, write $\bar{G}_M(\mat{t}(k,\mat{p},\bm{\gamma})) =M^{-1}\sum_{m\in\mathcal{M}}G_m(t_m(k/p_m,\gamma_m))$ and define the FDP approximator by
$$\widetilde{FDP}_M(\mat{t}(k,\mat{p},\bm{\gamma})) = \frac{1-\bar{G}_M(\mat{t}(k,\mat{p},\bm{\gamma}))}{1-\bar{t}_M(k,\mat{p},\bm{\gamma})}\frac{\bar{t}_M(k,\mat{p},\bm{\gamma})}{\bar G_M(\mat{t}(k,\mat{p},\bm{\gamma}))}.$$
Now, the asymptotically optimal weights are computed as follows.

\noindent{\textbf{Weight selection procedure}: \it \label{approx t wproc} For $0<\alpha\leq 1 - p_{(M)}$, where $p_{(M)} = \max\{\mat{p}\}$,
\begin{enumerate}
\item[a.] get $k_M^* = \inf\left\{k: \widetilde{FDP}_M(\mat{t}(k,\mat{p},\bm{\gamma})) = \alpha\right\},$ and
\item[b.] for each $m\in\mathcal{M}$, compute $w_m^* = w_m(k_M^*,\mat{p},\bm{\gamma})$ as in (\ref{w star}).
\end{enumerate}
}
In Theorem \ref{exist} we find that the restriction $0<\alpha\leq 1 - p_{(M)}$ ensures that a solution to $\widetilde{FDP}_M(\mat{t}(k,\mat{p},\bm{\gamma})) = \alpha$ exists.  In practice, this restriction amounts to choosing $\alpha$ and $\mat{p}$ so that $0<\alpha\leq 1-p_m$ for each $m$.  That is, the prior probability that the null hypothesis is true should be at least $\alpha$, which is reasonable in practice.
\begin{theorem}\label{exist}
Under (A1) and Model 1, $k^*_M$ exists for $0<\alpha\leq 1-p_{(M)}$.
\end{theorem}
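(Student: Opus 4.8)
The plan is to prove existence by applying the intermediate value theorem to the scalar map $k\mapsto \widetilde{FDP}_M(\mat{t}(k,\mat{p},\bm{\gamma}))$ on $(0,\infty)$. This reduces to three ingredients: continuity of the map, identification of its limits as $k\downarrow 0$ and $k\uparrow\infty$, and the observation that these two limits straddle $\alpha$ exactly when $0<\alpha\le 1-p_{(M)}$. Throughout I would work on the almost sure event on which (A1) holds and treat a realization of $(\mat{p},\bm{\gamma})$ as fixed; existence then follows realization-wise, hence almost surely.

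For continuity I would note that, by (A1), each $\pi_{\gamma_m}'$ is continuous and (strictly, by the uniqueness asserted in Theorem \ref{opt solution}) decreasing from $\infty$ to $0$ on $(0,1)$, so the root $k\mapsto t_m(k/p_m,\gamma_m)$ of $\pi_{\gamma_m}'(t_m)=k/p_m$ is continuous and decreasing in $k$. Hence $\bar t_M$, and $\bar G_M$ through the continuity of $G_m$, are continuous in $k$, and since the denominators $1-\bar t_M$ and $\bar G_M$ remain strictly positive for $k\in(0,\infty)$, the composite $\widetilde{FDP}_M$ is continuous there.

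The heart of the argument is the two limits. As $k\uparrow\infty$ every $t_m\downarrow 0$, so $\bar t_M,\bar G_M\downarrow 0$ and $\frac{1-\bar G_M}{1-\bar t_M}\to 1$; writing $\bar t_M/\bar G_M=\sum_m t_m/\sum_m G_m$ and using $G_m/t_m=(1-p_m)+p_m\pi_{\gamma_m}(t_m)/t_m\to\infty$ (because $\pi_{\gamma_m}(t_m)/t_m\to\pi_{\gamma_m}'(0^+)=\infty$ by (A1)) forces $\bar t_M/\bar G_M\to 0$, whence $\widetilde{FDP}_M\to 0<\alpha$. As $k\downarrow 0$ every $t_m\uparrow 1$, so $\bar t_M,\bar G_M\uparrow 1$ and $\bar t_M/\bar G_M\to 1$; the remaining factor $\frac{1-\bar G_M}{1-\bar t_M}=\sum_m(1-G_m)/\sum_m(1-t_m)$ is the indeterminate term, and I would resolve it by writing $\frac{1-G_m}{1-t_m}=1-p_m\frac{\pi_{\gamma_m}(t_m)-t_m}{1-t_m}$ and applying L'H\^opital's rule with $\pi_{\gamma_m}'(1^-)=0$ to obtain $\frac{\pi_{\gamma_m}(t_m)-t_m}{1-t_m}\to 1$, hence $\frac{1-G_m}{1-t_m}\to 1-p_m$. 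Since $\sum_m(1-G_m)/\sum_m(1-t_m)$ is a convex combination of these ratios with weights $(1-t_m)/\sum_j(1-t_j)$, it is bounded below by $\min_m\frac{1-G_m}{1-t_m}$, giving $\liminf_{k\downarrow0}\widetilde{FDP}_M\ge 1-p_{(M)}\ge\alpha$. Continuity together with the value $0<\alpha$ at one end and $\liminf\ge\alpha$ at the other then yields, by the intermediate value theorem, a $k$ with $\widetilde{FDP}_M=\alpha$; the solution set is nonempty and bounded below by $0$, so $k_M^*=\inf\{k:\widetilde{FDP}_M=\alpha\}$ is well defined.

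I expect the delicate point to be precisely the $k\downarrow 0$ limit: it is an indeterminate $0/0$ whose resolution relies on both boundary derivative conditions in (A1), and because it is a ratio of sums rather than a single ratio one only obtains the inequality $\liminf\ge 1-p_{(M)}$ and not an exact value. This makes the boundary case $\alpha=1-p_{(M)}$ genuinely subtle—in a homogeneous configuration the limit equals $1-p_{(M)}$ and is approached from below, so no finite root exists—which is exactly why the conclusion is stated almost surely: under Model \ref{randeffmodel} the heterogeneity of $(p_m)$ makes the limiting convex combination strictly exceed $1-p_{(M)}$ (equivalently, $\{\alpha=1-p_{(M)}\}$ carries no probability), restoring a strict crossing and hence existence of $k_M^*$.
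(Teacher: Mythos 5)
Your core argument is essentially the paper's: both proofs establish continuity of $k\mapsto\widetilde{FDP}_M(\mat{t}(k,\mat{p},\bm{\gamma}))$ under (A1) and apply the intermediate value theorem after computing the two boundary limits, and the $k\to\infty$ limit is handled identically (H\^{o}pital's rule at $t_m\downarrow 0$ using $\pi_{\gamma_m}'(0^+)=\infty$, plus bounding a ratio of sums below by the minimum of the termwise ratios). The one substantive difference is at $k\downarrow 0$: you resolve the $0/0$ form $(1-G_m)/(1-t_m)$ by H\^{o}pital's rule using $\pi_{\gamma_m}'(1^-)=0$ to get the exact termwise limit $1-p_m$, whereas the paper uses only the crude bound $\pi_{\gamma_m}(t_m)\leq 1$, which yields $1-G_m\geq(1-p_m)(1-t_m)$ for \emph{every} $k$ and hence $\widetilde{FDP}_M\geq(1-p_{(M)})\,\bar{t}_M(k,\mat{p},\bm{\gamma})/\bar{G}_M(\mat{t}(k,\mat{p},\bm{\gamma}))$ uniformly in $k$; your route is sharper per term but lands on the same $\liminf\geq 1-p_{(M)}$ inequality, so nothing is gained or lost. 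Where you go astray is the closing paragraph: the ``almost surely'' in the statement is not there to exclude the boundary case $\alpha=1-p_{(M)}$ via heterogeneity of $\mat{p}$ --- Model \ref{randeffmodel} permits the $p_m$ to be degenerate at a common value --- but simply reflects that $(\mat{p},\bm{\gamma})$ and the power functions are random and (A1) holds almost surely, so the deterministic argument is run realization-wise; the paper's proof rests on exactly the same $\liminf$ bound at the endpoint that yours does. Your further claim that in a homogeneous configuration the limit $1-p_{(M)}$ is necessarily approached from below (so that no root exists at $\alpha=1-p_{(M)}$) is asserted without justification and does not hold in general --- in the Gaussian setting of Example \ref{example1} the correction term $p_m[1-\pi_{\gamma_m}(t_m)]/[1-t_m]$ in $(1-G_m)/(1-t_m)=(1-p_m)+p_m[1-\pi_{\gamma_m}(t_m)]/[1-t_m]$ dominates the deficit in $\bar{t}_M/\bar{G}_M$ as $t_m\uparrow 1$, so the limit is approached from above --- but since this remark sits outside the main argument it does not invalidate your proof of the theorem.
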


Observe that $\bar{t}_M(k^*_M,\mat{p},\bm{\gamma}) = t$ for some $t\in(0,1)$ so that indeed these weights could be viewed as optimal fixed-t weights.  However, here weight computation is based on the constraint $\widetilde{FDP}_M(\mat{t}(k_M^*,\mat{p},\bm{\gamma})) = \alpha$.  These weights are henceforth referred to as asymptotically optimal for reasons that will be formalized later.

\section{The procedure}\label{sec 4}
Now we are now in position to formally define the proposed adaptive threshold which, when used in conjunction with asymptotically optimal weights in $\bm{\delta}(\mat{Z};t\mat{w})$, yields the asymptotically optimal WAMDF.

\subsection{Threshold selection}
For the moment, let $\mat{w}$ be any fixed vector of positive weights satisfying $\bar{w} = 1$.  For brevity, we sometimes suppress the $Z_m$ in each $\delta_m$ and write $\delta_m(tw_m)$ and denote $\bm{\delta}(\mat{Z};t\mat{w})$ by $\bm{\delta}(t\mat{w})$.  Further, denote the number of
discoveries at $t\mat{w}$ by $R(t\mat{w}) = \sum_{m\in\mathcal{M}}\delta_m(tw_m)$.

We make use of an ``adaptive'' estimator of the FDP that utilizes an estimator of $M_0$ defined by
\begin{equation} \label{M0 hat}
\hat{M}_0(\lambda \mat{w}) = \frac{M - R(\lambda\mat{w}) + 1}{1-\lambda}
\end{equation}
for some fixed tuning parameter $\lambda\in(0,1)$.  This  estimator is essentially the weighted version of the estimator in \cite{Sto02} defined by $\hat{M}_0(\lambda\bm{1}) = [M - R(\lambda\bm{1})]/[1-\lambda]$.  For earlier work on the estimation of $M_0$, see \cite{Sch82}.   As outlined in \cite{Sto04} in the unweighted setting, the idea is that for $m\in\mathcal{M}_1$, $E[\delta_m(\lambda)]\leq 1$, but the inequality is relatively sharp if all tests have reasonable power, which should be the case for large enough $\lambda$.  Hence $$E[M - R(\lambda\bm{1})]  = \sum_{m\in\mathcal{M}} E[1-\delta_m(\lambda)] \geq \sum_{m\in\mathcal{M}_0} E[1-\delta_m(\lambda)] = (1-\lambda)M_0$$ and $E[\hat{M}_0(\lambda\bm{1})] \geq M_0$. That is, $\hat{M}_0$ is positively biased but the bias is minor.  Similar intuition applies for $\hat{M}_0(\lambda\mat{w})$. As in \cite{Sto04}, we add 1 to the numerator in (\ref{M0 hat}) to ensure that $\hat{M}_0(\lambda\mat{w})>0$ for finite sample results.

The adaptive FDP estimator is defined by
\begin{equation}\label{FDRhat}
\widehat{FDP}^{\lambda}(t\mat{w}) =
\frac{\hat{M}_0(\lambda\mat{w}) t}{\max\{R(t\mat{w}), 1\}}.%
\end{equation}
The adaptive threshold,  which essentially chooses $t$ as large as possible subject to the constraint that the estimate of the FDP is less than or equal to $\alpha$,  is defined by
\begin{equation}
\hat t_\alpha^\lambda = \sup\{0 \leq t \leq u: \widehat{FDP}^\lambda(t\mat{w})\leq \alpha\}.\label{t stop}
\end{equation}
We assume that $u$, the  upper bound for $\hat{t}_\alpha^\lambda$, and the tuning parameter $\lambda$ satisfy
\begin{itemize}
\item[(A2)] $\lambda \leq u \leq 1/w_{(M)}$,
\end{itemize}
where $w_{(M)}\equiv \max\{\mat{w}\}$.  This  ensures that $\hat t_\alpha^\lambda w_m\leq 1$ and $\lambda w_m\leq 1$ for every $m$.  For $\mat{w} = \bm{1}$ and $u = \lambda$ (which implies $\hat{t}_\alpha^\lambda\leq \lambda$), we recover the unweighted adaptive MDF for
finite FDR control in \cite{Sto04}.

In practice $\hat{t}_{\alpha}^\lambda$ can be  difficult to compute.  Alternatively, we can apply the original BH procedure to the weighted $p$-values at level $\alpha M/\hat{M}_0(\lambda\mat{w})$. Due to (\ref{form}), we can also use weighted $p$-values to estimate $M_0$ via $\hat{M}_0(\lambda\mat{w}) = [M - \sum_{m\in\mathcal{M}}I(Q_m\leq \lambda)+1]/[1-\lambda].$
This threshold selection procedure can be implemented as follows.

\noindent{\textbf{Threshold selection procedure}: \it Fix $\lambda$ and $u$ satisfying (A2).  Then
\begin{enumerate}
\item[a.] compute $Q_m = P_m/w_m$ and ordered weighted $p$-values via $Q_{(1)}\leq Q_{(2)}\leq \ldots \leq Q_{(M)}$.
\item[b.] If $Q_{(m)}>\alpha m/\hat{M}_0(\lambda\mat{w})$ for each $m$, set $j = 0$, otherwise  take $$j = \max\left\{m\in\mathcal{M}:Q_{(m)}\leq
    \alpha/\hat{M}_0(\lambda\mat{w})\right\}.$$
\item[c.] Get $\hat t_{\alpha}^{\lambda *} = \min\{j\alpha/\hat{M}_0(\lambda\mat{w}), u\}$ and reject $H_m$ if $Q_m\leq \hat t_{\alpha}^{\lambda *}$.
\end{enumerate}
}

The WAMDF implemented above is equivalent to $\bm{\delta}(\mat{Z};\hat t_{\alpha}^\lambda\mat{w})$ in that
\begin{equation}\label{implement eqn}
\delta_m(Z_m;\hat t_\alpha^{\lambda}w_m) = I(Q_m\leq  \hat{t}_{\alpha}^\lambda) = I(Q_m\leq \hat{t}_{\alpha}^{\lambda*})
\end{equation}
almost surely for each $m$, so both procedures reject the same set of null hypotheses.   The first equality in (\ref{implement eqn}) follows from (\ref{form}) and the last equality in (\ref{implement eqn}) is a consequence of Lemma 2 in \cite{Sto04}.

\subsection{The asymptotically optimal WAMDF}

The asymptotically optimal WAMDF is formally defined as $\bm{\delta}(\mat{Z};\hat t_{\alpha}^\lambda \mat{w}^*)$ for  $0<\alpha\leq 1 - p_{(M)}$ and $\lambda = \bar{t}_M(k_M^*,\mat{p},\bm{\gamma})$, where $k_M^*$ and $\mat{w}^*$ are defined as in the Weight Selection Procedure.  This particular choice of $\lambda$ ensures that the employed weights are indeed ``asymptotically optimal'' (see Theorem \ref{asymp opt w}) and additionally that (A2) is satisfied if we take $u = 1/w_{(M)}$.  Other values of $\lambda$ could be considered, as in Section \ref{sec 8}.  To implement the the asymptotically optimal WAMDF, we compute $\mat{w}^*$ using the Weight Selection Procedure, then choose $\lambda = \bar{t}_M(k_M^*,\mat{p},\bm{\gamma})$ and $u$ satisfying (A2), collect data $\mat{Z}=\mat{z}$, and compute
 $\bm{\delta}(\mat{z};\hat t_\alpha^\lambda\mat{w}^*)$ using the Threshold Selection Procedure.

To illustrate, consider  testing $M = 10$ null hypotheses under the setting outlined in Example 1, with $p_m = 0.5$ for $m = 1, 2, ..., 10$, $\gamma_m =2$ for $m = 1, 2, \ldots, 5$, $\gamma_m = 3$ for $m = 6, 7, \ldots, 10$, and $\alpha = 0.05$.  The goal is to test $H_m:\theta_m = 0$ with decision functions $\delta_m(Z_m;t_m) = I(Z_m\geq \bar{\Phi}^{-1}(t_m))$ or their corresponding $p$-values $P_m = \bar{\Phi}(Z_m)$ and weighted $p$-values $Q_m = P_m/w_m$.  See Table \ref{illustrate} for summaries of parameters, weights, simulated data, $p$-values and weighted $p$-values.  The Weight Selection Procedure is broken down into 2 sub-steps and the Threshold Selection Procedure is split into three sub-steps.  To test these null hypotheses we
\begin{table}\center
\caption{\label{illustrate} A portion of the parameters, data, weights, $p$-values, and weighted $p$-values in columns 1 - 5, respectively.  Each row is sorted in ascending order according to $Q_1, Q_2, ..., Q_M$.}
\begin{tabular}{ccccccc} \hline \hline \scriptsize
$\theta_m$&$\gamma_m$ & $w_m^*$& $Z_m$ & $P_m$ & $Q_m$ & $0.05 m /\hat{M}_0$ \\ \hline
1 & 3 & 0.74 & 3.14 & 0.001 & 0.001 & 0.006 \\
1 & 2 & 1.26 & 2.55 & 0.005 & 0.005 & 0.012 \\
1 & 3 & 0.74 & 2.56 & 0.005 & 0.006 & 0.018 \\ \hline
1 & 2 & 1.26 & 1.47 & 0.070 & 0.062 & 0.024 \\
0 & 2 & 1.74 & 1.17 & 0.121 & 0.106 & 0.030 \\
$\vdots$ & $\vdots$ &$\vdots$ &$\vdots$ &$\vdots$ &$\vdots$ &  \\
0 & 3 & 0.74 & -0.60 & 0.724 & 0.844 & 0.061\\ \hline
\end{tabular}
\end{table}
\begin{enumerate}
\item[1a.] specify  $\bm{\gamma}$ (see column 2 of Table \ref{illustrate}), $\mat{p}$ and $\alpha$ and find $k_M^* = 2.52$.
\item[1b.] Compute  asymptotically optimal weights $w_m^* = w_m(k_M^*, \bm{p}, \bm{\gamma})$ as in (\ref{w star}).  See column 3 in Table
    \ref{illustrate}.
\item[2a.] Take  $\lambda = \bar{t}_M(k_M^*, \bm{p},\bm{\gamma}) = 0.028$ and $u = 1/1.26 = 0.79$.  Collect data $\mat{Z} = \mat{z}$ and compute and order weighted p-values (see columns 4 - 6 in Table \ref{illustrate}).
\item[2b.] Observe  that $Q_{(m)}\leq \alpha m/\hat{M}_0(\lambda\mat{w}^*)$ for $m = 3$ but not for $m = 4, 5, ..., 10$ and hence $\alpha
    j/\hat{M}_0(\lambda\mat{w}^*) = 0.05 \tfrac{3}{8.23} = 0.013$.
\item[2c.] Compute  $\hat{t}_{\alpha}^{\lambda*} = \min\{0.013, 0.79\} = 0.013$ and reject null hypotheses with weighted $p$-values 0.001, 0.005
    and 0.006 because they are less than 0.013.
\end{enumerate}

\section{Finite FDR control}\label{sec 5}
An upper bound  for the FDR is given for arbitrary weights satisfying $w_m>0$ for each $m$ and $\bar{w} = 1$.  The bound is computed under a dependence structure for $\mat{Z}$:
\begin{enumerate}
\item[(A3)] $(Z_m, m\in\mathcal{M}_0)$ are mutually independent and independent of $(Z_m, m\in\mathcal{M}_1)$.
\end{enumerate}
This structure has been utilized in \cite{BenHoc95, Gen06, Pen11, Sto04} to prove FDR control for unweighted unadaptive, weighted unadaptive, and unweighted adaptive procedures.  It is satisfied under Model 1 conditionally upon $(\bm{\theta},\mat{p},\bm{\gamma})$, but it is not limited to this setting.

To define the FDR,  let $V(t\mat{w}) = \sum_{m\in\mathcal{M}_0}\delta_m(tw_m)$ denote the number of erroneously rejected null hypotheses (false discoveries) at $t\mat{w}$, with $R(t\mat{w}) = \sum_{m\in\mathcal{M}}\delta_m(t\mat{w})$ the number of rejected null hypotheses.  Define the FDP at $t\mat{w}$ by
\begin{equation}\label{FDP}
FDP(t\mat{w}) = \frac{V(t\mat{w})}{\max\{R(t\mat{w}), 1\}}.
\end{equation}
The FDR at $t\mat{w}$ is defined by $FDR(t\mat{w}) = E[FDP(t\mat{w})]$, where the expectation is taken over $\mat{Z}$ with respect to an arbitrary $F\in\mathcal{F}$.

The bound is presented in Lemma \ref{bound}.  The focus is on the setting when $M_0\geq 1$ because the FDR is trivially 0 if $M_0 = 0$.   As in \cite{Sto04}, we force $\hat{t}_{\alpha}^\lambda\leq \lambda$ by taking $u = \lambda$ in (\ref{t stop}).  This facilitates the use of the Optional Stopping Theorem in the proof.
\begin{lemma}\label{bound}
Suppose $M_0\geq 1$ and that (A2) and (A3) are satisfied.   Then for $u = \lambda$,
\begin{equation}\label{lem1 eqn}
FDR(\hat t_\alpha^{\lambda}\mat{w}) \leq
\alpha\bar{w}_0\frac{1-\lambda}{1-\lambda\bar{w}_0}[1-(\lambda\bar{w}_0)^{M_0}]\leq\alpha\bar{w}_0\frac{1-\lambda}{1-\lambda\bar{w}_0},
\end{equation} where $\bar{w}_0 = M_0^{-1}\sum_{m\in\mathcal{M}_0}w_m$ is the mean of the weights from true null hypotheses.
\end{lemma}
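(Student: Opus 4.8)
The plan is to follow the reverse‑martingale and optional‑stopping strategy of \cite{Sto04}, adapted to weighted $p$-values. Write $\hat t = \hat t_\alpha^\lambda$ and recall from (\ref{form}) that $\delta_m(tw_m)=I(Q_m\le t)$, so $V(t\mat w)=\sum_{m\in\mathcal M_0}I(Q_m\le t)$ and $R(t\mat w)=\sum_{m\in\mathcal M}I(Q_m\le t)$. First I would extract the deterministic consequence of stopping. On $\{R(\hat t\mat w)\ge 1\}$ the sup in (\ref{t stop}) gives $\widehat{FDP}^\lambda(\hat t\mat w)\le\alpha$ by right continuity (with equality when the cap $u=\lambda$ is inactive, as is transparent from the Threshold Selection Procedure, where $\hat t=j\alpha/\hat M_0(\lambda\mat w)$ and $R(\hat t\mat w)=j$). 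By (\ref{FDRhat}) this reads $R(\hat t\mat w)\ge \hat M_0(\lambda\mat w)\hat t/\alpha$, whence
\[
FDP(\hat t\mat w)=\frac{V(\hat t\mat w)}{R(\hat t\mat w)}\le \alpha\,\frac{V(\hat t\mat w)}{\hat M_0(\lambda\mat w)\,\hat t};
\]
on $\{R(\hat t\mat w)=0\}$ the FDP is $0$. Next I would lower‑bound the estimator by discarding the nonnull exceedances: since $M-R(\lambda\mat w)=\sum_m I(Q_m>\lambda)\ge\sum_{m\in\mathcal M_0}I(Q_m>\lambda)=M_0-V(\lambda\mat w)$, expression (\ref{M0 hat}) gives $\hat M_0(\lambda\mat w)\ge\bigl(M_0-V(\lambda\mat w)+1\bigr)/(1-\lambda)$, so that
\[
FDP(\hat t\mat w)\le \alpha(1-\lambda)\,\frac{V(\hat t\mat w)}{\bigl(M_0-V(\lambda\mat w)+1\bigr)\,\hat t}.
\]

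The heart of the argument is to control the expectation of the right‑hand side via a reverse martingale. Because (A2) forces $sw_m\le\lambda w_m\le 1$ for all $m$ and all $s\le\lambda$, each null weighted $p$-value satisfies $P(Q_m\le t)=tw_m$ under (A3), and the nulls are mutually independent and independent of the nonnulls. Letting $\mathcal F_t$ be the $\sigma$-field generated by the nonnull statistics together with the values and ranks of the null $Q_m$ exceeding $t$, revealing the nulls from $\lambda$ downward makes the survivors conditionally uniform, so $E[V(s\mat w)\mid\mathcal F_t]=(s/t)\,V(t\mat w)$ for $s\le t$ (binomial thinning). Since $M_0-V(\lambda\mat w)+1$ is measurable with respect to the coarsest field $\mathcal F_\lambda$, the process
\[
M_t\;\equiv\;\frac{V(t\mat w)}{\bigl(M_0-V(\lambda\mat w)+1\bigr)\,t},\qquad t\in(0,\lambda],
\]
is a reverse martingale. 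One checks that $\hat t$ is a stopping time for the time‑reversed filtration—$\{\hat t\ge a\}$ depends only on $R(t\mat w)$ for $t\in[a,\lambda]$—so the Optimal Stopping Theorem yields $E[M_{\hat t}]\le E[M_\lambda]$, and therefore $FDR(\hat t\mat w)\le\alpha(1-\lambda)E[M_\lambda]$.

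It remains to evaluate $E[M_\lambda]=\lambda^{-1}E\bigl[V(\lambda\mat w)/(M_0-V(\lambda\mat w)+1)\bigr]$, where $V(\lambda\mat w)$ is a sum of independent $\mathrm{Bernoulli}(\lambda w_m)$, $m\in\mathcal M_0$. Using $x/(n+1-x)=(n+1)/(n+1-x)-1$ together with the identity $E[1/(n+1-V)]=(1-q^{n+1})/\{(n+1)(1-q)\}$ for $V\sim\mathrm{Binomial}(n,q)$, the equal‑weight case gives exactly $E[V/(M_0-V+1)]=\lambda\bar w_0\{1-(\lambda\bar w_0)^{M_0}\}/(1-\lambda\bar w_0)$, which reproduces the stated bound. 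To pass from the genuine Poisson–binomial (unequal $w_m$) to this equal‑weight expression I would invoke a Hoeffding/Schur‑concavity argument: the map $x\mapsto x/(M_0+1-x)$ is convex, and for a convex $\phi$ the quantity $E[\phi(\sum_m B_m)]$ over independent Bernoullis with a fixed mean is maximized when all success probabilities coincide, so replacing each $\lambda w_m$ by $\lambda\bar w_0$ can only increase $E[M_\lambda]$. Combining the three displays gives the first inequality in (\ref{lem1 eqn}); the second is immediate from $1-(\lambda\bar w_0)^{M_0}\le 1$.

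I expect the main obstacle to be the martingale and optional‑stopping step: verifying the reverse‑martingale property with weighted (hence non‑identically‑distributed) null $p$-values, confirming the stopping‑time measurability and integrability as $t\downarrow 0$, and correctly handling the no‑rejection event $\{R(\hat t\mat w)=0\}$ and the exact crossing $\widehat{FDP}^\lambda(\hat t\mat w)=\alpha$. A secondary but essential subtlety is the Schur‑concavity comparison in the final step, which is precisely what converts the per‑weight Poisson–binomial expectation into the clean $\bar w_0$-based closed form appearing in the lemma.
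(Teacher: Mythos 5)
Your proposal is correct and follows essentially the same route as the paper's proof: the stopping-time inequality $R(\hat t\mat{w})\ge \hat M_0(\lambda\mat{w})\hat t/\alpha$, the reverse martingale $V(t\mat{w})/t$ with the Optional Stopping Theorem, the bound $M-R(\lambda\mat{w})\ge M_0-V(\lambda\mat{w})$, and the Hoeffding convexity comparison reducing the Poisson--binomial to the $\mathrm{Binomial}(M_0,\lambda\bar w_0)$ case. The only (immaterial) difference is that you fold the $\mathcal{F}_\lambda$-measurable factor $(M_0-V(\lambda\mat{w})+1)^{-1}$ into the martingale before stopping, whereas the paper pulls $1/\hat M_0(\lambda\mat{w})$ out by iterated expectation and bounds it afterward.
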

\noindent Observe that $1-(\lambda\bar{w}_0)^{M_0} \leq 1 $ due to (A2).   Further, if $\mat{w} = \bm{1}$ then $\bar{w}_0 = 1$ and we recover Theorem 3 in \cite{Sto04} as a corollary.

If $\mat{w}\neq \bm{1}$, the bound in Lemma \ref{bound} is  not immediately applicable because $\mathcal{M}_0$, and consequently $\bar{w}_0$, is unobservable. One solution is to use an upper bound for $\bar{w}_0$ and adjust the ``$\alpha$'' at which the procedure is applied. This adjustment is described below.
\begin{theorem}\label{FDRthm}
If
$$\alpha^* = \alpha \frac{1}{w_{(M)}}\frac{1-\lambda w_{(M)}}{1-\lambda},$$
then under the conditions of Lemma \ref{bound}, $FDR(\hat{t}_{\alpha^*}^\lambda\mat{w})\leq \alpha$.
\end{theorem}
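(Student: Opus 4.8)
The plan is to apply Lemma \ref{bound} directly, but with the nominal level $\alpha$ replaced by the adjusted level $\alpha^*$, and then to check that the resulting upper bound does not exceed $\alpha$. Because the level enters the bound in Lemma \ref{bound} linearly (and using the looser, right-hand inequality there, since the factor $[1-(\lambda\bar{w}_0)^{M_0}]\leq 1$ by (A2)), replacing $\alpha$ by $\alpha^*$ immediately yields
\begin{equation*}
FDR(\hat t_{\alpha^*}^\lambda \mat{w}) \leq \alpha^* \bar{w}_0\, \frac{1-\lambda}{1-\lambda\bar{w}_0}.
\end{equation*}
The hypotheses (A2), (A3), and $M_0\geq 1$ needed to invoke Lemma \ref{bound} concern only $\mat{w}$, $\lambda$, $u$, and the dependence structure of $\mat{Z}$, none of which is altered by the choice of level, so no additional verification is required; one need only note that $\alpha^*>0$, which holds because $\lambda w_{(M)}\leq 1$ under (A2).

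Next I would substitute the definition $\alpha^* = \alpha\, w_{(M)}^{-1}(1-\lambda w_{(M)})/(1-\lambda)$ into this bound. The factor $1-\lambda$ cancels, so establishing $FDR(\hat t_{\alpha^*}^\lambda \mat{w})\leq\alpha$ reduces to verifying
\begin{equation*}
\frac{\bar{w}_0}{w_{(M)}}\,\frac{1-\lambda w_{(M)}}{1-\lambda\bar{w}_0}\leq 1.
\end{equation*}
Since $w_{(M)}>0$ and the denominators $1-\lambda w_{(M)}$ and $1-\lambda\bar{w}_0$ are positive under (A2) (note $\lambda\bar{w}_0\leq\lambda w_{(M)}\leq\lambda u\leq 1$, with the denominator in Lemma \ref{bound} already presumed positive), I would clear them to obtain the equivalent inequality $\bar{w}_0(1-\lambda w_{(M)})\leq w_{(M)}(1-\lambda\bar{w}_0)$. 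Expanding both sides, the cross terms $-\lambda\,\bar{w}_0\,w_{(M)}$ cancel, leaving simply $\bar{w}_0\leq w_{(M)}$.

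The final step is to observe that this last inequality always holds: by definition $\bar{w}_0 = M_0^{-1}\sum_{m\in\mathcal{M}_0}w_m$ is an average of the subcollection of weights indexed by $\mathcal{M}_0$, and each such $w_m$ is at most $w_{(M)}=\max\{\mat{w}\}$, so the average is bounded by $w_{(M)}$ as well. This closes the argument. In truth there is no genuinely hard step here; the entire proof is an algebraic reduction of the bound in Lemma \ref{bound}, and the only substantive ingredient is that the \emph{unobservable} quantity $\bar{w}_0$ is dominated by the \emph{observable} quantity $w_{(M)}$ — which is exactly what motivates the form of the $\alpha$-adjustment. The only point deserving a little care is confirming positivity of the denominators before clearing them, which follows from (A2) together with $\bar{w}_0\leq w_{(M)}$.
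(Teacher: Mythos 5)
Your proposal is correct and follows essentially the same route as the paper: invoke Lemma \ref{bound} at level $\alpha^*$, substitute the definition of $\alpha^*$, and reduce the claim to the elementary fact that $\bar{w}_0\leq w_{(M)}$. The only cosmetic difference is that you clear denominators to reach $\bar{w}_0\leq w_{(M)}$ explicitly, whereas the paper bounds the two factors $\bar{w}_0/w_{(M)}$ and $(1-\lambda w_{(M)})/(1-\lambda\bar{w}_0)$ by $1$ directly.
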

As $\bar{w}_0$ is typically less than or equal to 1, asymptotically, this $\alpha$ adjustment is not needed for large $M$.

\section{Asymptotic results} \label{sec 6}
We show that WAMDFs always reject more null hypotheses than their unadaptive counterparts, and provide sufficient conditions for asymptotic FDP control and $\alpha$-exhaustion.  These results are then used in the asymptotic analysis of the asymptotically optimal WAMDF.

To facilitate asymptotic analysis, denote weight vectors of length $M$ by $\mat{w}_M$ and the $m$th element of $\mat{w}_M$ by $w_{m,M}$.  Write the mean of the weights from true null hypotheses as $\bar{w}_{0,M}$.  Denote the adaptive FDP estimator in (\ref{FDRhat}) by $\widehat{FDP}^\lambda_M(t\bm{w}_M)$ and the FDP in (\ref{FDP}) by $FDP_M(t\bm{w}_M)$.  We also consider an unadaptive FDP estimator that uses $M$ in the place of an estimate of $M_0$, defined by
$$\widehat{FDP}_M^0(t\mat{w}_M) = \frac{M t}{\max\{R(t\mat{w}_M), 1\}}.$$
When necessary, we denote the tuning parameter in (\ref{M0 hat}) by $\lambda_M$ because,  as in the asymptotically optimal WAMDF where $\lambda_M = \bar{t}_M(k_M^*,\mat{p},\bm{\gamma})$, it may depend on $M$.

For asymptotic analysis, (A2) is redefined:
\begin{enumerate}
\item[(A2)]$\lambda_M\rightarrow \lambda\leq u = 1/k$ almost surely, where $k$ satisfies $\lim_{M\rightarrow \infty} w_{(M)}\leq k$ almost surely.
\end{enumerate}
The adaptive threshold in (\ref{t stop}) is denoted $\hat{t}_{\alpha,M}^\lambda$.   We find that (A2) is satisfied, for example,  under Model 1 and (A1) for the asymptotically optimal WAMDF.  The unadaptive threshold is defined by
$$\hat t_{\alpha,M}^0 = \sup\{0\leq t\leq u: \widehat{FDP}_M^0(t\mat{w}_M)\leq \alpha\}.$$

\subsection{Arbitrary weights}\label{sec 6.1}
Convergence criteria considered here are similar to criteria in \cite{Sto04, Gen06} and allow for weak dependence structures.  See \cite{Bil99}, \cite{Sto03}, or see Theorem \ref{optimal conditions} for examples.  For $u$ defined as in (A2) and $t\in(0,u]$, we assume the following.
\begin{enumerate}
\item[(A4)]$R(t\mat{w}_M)/{M} \rightarrow G(t)$ almost surely.
\item[(A5)]$V(t\mat{w}_M)/M \rightarrow a_0 \mu_0 t$ almost surely,  for $0<\mu_0<\infty$ and $0<a_0<1$, where $\bar{w}_{0,M} \rightarrow \mu_0$
    and  $M_0/M \rightarrow a_0$.
\item[(A6)] $t/G(t)$ is strictly increasing and continuous over (0,u) with $\lim_{t\downarrow 0}t/G(t) = 0$ and
    $\lim_{t\uparrow u}u/G(u) \leq 1.$
\end{enumerate}
Here $\mu_0$ is the asymptotic mean of the weights corresponding to true  null hypotheses and $a_0$ is the asymptotic proportion of true null hypotheses. The last condition is natural as it ensures that, asymptotically, the FDP is continuous and increasing in $t$ and takes on value 0, thereby ensuring that it can be controlled.   Writing $R(t\mat{w}_M)/M = \sum_{m\in\mathcal{M}}I(Q_m\leq t)/M$ via (\ref{form}), we see that (A4) corresponds to the assumption that the empirical process of the weighted $p$-values converges pointwise to $G(t)$ almost surely.

Asymptotic analysis for arbitrary weights focuses on comparing random  thresholds $\hat t_{\alpha,M}^\lambda$ and $\hat t_{\alpha,M}^0$ to their corresponding asymptotic (nonrandom) thresholds, which are based on the limits of the unadaptive and adaptive FDP estimators. Denote the pointwise limits of the  unadaptive FDP estimator, the adaptive FDP estimator, and the FDP by
$$FDP_\infty^0(t) = \frac{t}{G(t)}, \hspace{.1 in} FDP_\infty^\lambda(t) = \frac{1-G(\lambda)}{1-\lambda}\frac{t}{G(t)}, \mbox{\hspace{.1in} and \hspace{.1in} } FDP_{\infty}(t) = \frac{a_0\mu_0 t}{G(t)},$$ respectively (see Lemma S1 in the Supplemental Article for verification and details).   Define asymptotic unadaptive and asymptotic adaptive thresholds by, respectively,
$$t_{\alpha,\infty}^0 = \sup\{0\leq t\leq u:FDP_{\infty}^0(t)\leq \alpha\} \mbox{ and }t_{\alpha,\infty}^\lambda = \sup\{0\leq t\leq u:FDP_{\infty}^\lambda(t)\leq \alpha\}.$$

The unadaptive and adaptive thresholds converge to their asymptotic (nonrandom) counterparts, with the asymptotic adaptive threshold larger than the asymptotic unadaptive threshold.  As $E[\delta_m(tw_m)]$ is strictly increasing in $t$ for each $m$, it follows that the adaptive procedure leads to a higher proportion of rejected null hypotheses, asymptotically.  Our result generalizes Corollary 2 in \cite{Sto04}, which focused on the unweighted setting.
\begin{theorem}\label{asymptotic t}
Fix $\alpha\in(0,1)$.  Then under (A2) and (A4) - (A6), almost surely,
\begin{equation}\label{asymptotic t eqn}
\lim_{M\rightarrow\infty} \hat{t}_{\alpha,M}^0 = t_{\alpha,\infty}^0 \leq \lim_{M\rightarrow\infty}\hat{t}_{\alpha,M}^{\lambda} = t_{\alpha,\infty}^{\lambda}.
\end{equation}

\end{theorem}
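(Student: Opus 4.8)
The plan is to reduce convergence of the random thresholds $\hat t_{\alpha,M}^0$ and $\hat t_{\alpha,M}^\lambda$ to convergence of the empirical FDP estimators, and then to exploit monotonicity of the limiting objects to pass from pointwise convergence to convergence of the level-set suprema. First I would record the pointwise almost sure limits (the content of Lemma S1): writing $\widehat{FDP}_M^0(t\mat{w}_M) = t/\max\{R(t\mat{w}_M)/M,\,1/M\}$ and invoking (A4), one gets $\widehat{FDP}_M^0(t\mat{w}_M)\to t/G(t) = FDP_\infty^0(t)$ for each fixed $t\in(0,u]$. Likewise $\hat M_0(\lambda_M\mat{w}_M)/M = (M-R(\lambda_M\mat{w}_M)+1)/(M(1-\lambda_M))\to (1-G(\lambda))/(1-\lambda)$ by (A2), (A4) and continuity of $G$ at $\lambda$, whence $\widehat{FDP}_M^\lambda(t\mat{w}_M)\to FDP_\infty^\lambda(t)$ pointwise.

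The essential upgrade is that $t\mapsto R(t\mat{w}_M)/M = M^{-1}\sum_{m}I(Q_m\le t)$ is nondecreasing, being an empirical distribution function of the weighted $p$-values, and its limit $G$ is continuous on $(0,u)$: since (A6) makes $t/G(t)$ continuous and strictly positive there, $G(t)=t/(t/G(t))$ is continuous. By P\'olya's theorem, pointwise almost sure convergence of the \emph{monotone} functions $R(\cdot\,\mat{w}_M)/M$ to the continuous $G$ is uniform on compact subintervals of $(0,u]$, and since $G$ is bounded away from $0$ there, $\widehat{FDP}_M^0(\cdot\,\mat{w}_M)\to FDP_\infty^0$ and $\widehat{FDP}_M^\lambda(\cdot\,\mat{w}_M)\to FDP_\infty^\lambda$ uniformly on each $[\eta,u]$. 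I expect this passage from pointwise to uniform convergence to be the main obstacle, since $\widehat{FDP}_M^0$ and $\widehat{FDP}_M^\lambda$ are themselves non-monotone in $t$ and only their denominators inherit monotonicity; everything downstream is routine once this uniformity is secured.

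With uniform convergence in hand, the threshold convergence follows by a two-sided squeeze. Take the unadaptive case and suppose $t_{\alpha,\infty}^0\in(0,u)$, so by (A6) $FDP_\infty^0$ is continuous and strictly increasing with $FDP_\infty^0(t_{\alpha,\infty}^0)=\alpha$. Fix small $\eta>0$. Strict monotonicity gives $FDP_\infty^0(t_{\alpha,\infty}^0-\eta)<\alpha$, so pointwise convergence yields $\widehat{FDP}_M^0((t_{\alpha,\infty}^0-\eta)\mat{w}_M)\le\alpha$ eventually, forcing $\hat t_{\alpha,M}^0\ge t_{\alpha,\infty}^0-\eta$. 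Conversely, $FDP_\infty^0(t)\ge FDP_\infty^0(t_{\alpha,\infty}^0+\eta)=\alpha+\delta$ for all $t\in[t_{\alpha,\infty}^0+\eta,u]$ and some $\delta>0$; uniform convergence on this interval then gives $\widehat{FDP}_M^0(t\mat{w}_M)>\alpha$ for every such $t$ eventually, so $\hat t_{\alpha,M}^0\le t_{\alpha,\infty}^0+\eta$. Letting $\eta\downarrow 0$ gives $\hat t_{\alpha,M}^0\to t_{\alpha,\infty}^0$ a.s.; the boundary case $t_{\alpha,\infty}^0=u$ needs only the lower bound since $\hat t_{\alpha,M}^0\le u$. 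The adaptive threshold is identical, with $t/G(t)$ replaced by $FDP_\infty^\lambda(t)=\frac{1-G(\lambda)}{1-\lambda}\cdot\frac{t}{G(t)}$, a positive constant multiple that is again continuous and strictly increasing by (A6).

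Finally, the inequality $t_{\alpha,\infty}^0\le t_{\alpha,\infty}^\lambda$ reduces to showing $(1-G(\lambda))/(1-\lambda)\le 1$. Since $t/G(t)$ is increasing with $\lim_{t\uparrow u}t/G(t)\le 1$ by (A6), we have $t/G(t)\le 1$, i.e. $G(t)\ge t$, for every $t\in(0,u]$; in particular $G(\lambda)\ge\lambda$ because $\lambda\le u$ by (A2), so the constant factor is at most $1$ and hence $FDP_\infty^\lambda(t)\le FDP_\infty^0(t)$ for all $t$. The defining set $\{t:FDP_\infty^\lambda(t)\le\alpha\}$ therefore contains $\{t:FDP_\infty^0(t)\le\alpha\}$, and taking suprema gives $t_{\alpha,\infty}^\lambda\ge t_{\alpha,\infty}^0$, completing the chain in (\ref{asymptotic t eqn}).
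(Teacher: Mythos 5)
Your proposal is correct and follows essentially the same route as the paper: the paper's Lemma S1 establishes exactly the uniform convergence of $\widehat{FDP}_M^0$ and $\widehat{FDP}_M^\lambda$ on $[\delta,u]$ (citing the Glivenko--Cantelli theorem where you make the monotonicity-plus-continuity upgrade explicit via P\'olya's theorem), and the paper then runs the same two-sided squeeze using continuity and strict monotonicity of $FDP_\infty^0$ and $FDP_\infty^\lambda$ from (A6), with the same boundary case at $u$. The final inequality is also obtained identically, from $G(\lambda)\geq\lambda$ forcing $a_{0,\infty}^\lambda=(1-G(\lambda))/(1-\lambda)\leq 1$ and hence $FDP_\infty^\lambda\leq FDP_\infty^0$ pointwise.
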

\noindent

It is useful to formally describe the notion  of an $\alpha$-exhaustive MDF.  Loosely speaking, \cite{Fin09} referred to an unweighted multiple decision function, say $\bm{\delta}(\hat{t}_{\alpha,M}^*\bm{1}_M)$, as ``asymptotically optimal'' (we will use the terminology $\alpha$-exhaustive) if $FDR(\hat{t}_{\alpha,M}^*\bm{1}_M)\rightarrow \alpha$ under some least favorable distribution.  A Dirac Uniform (DU) distribution was shown to often be least favorable for the FDR in that, among all $F$s that satisfy $E[\delta_m(t)] = t$ for every $t\in[0,1]$ when $m\in\mathcal{M}_0$ and dependency structure (A3), $FDR(\hat{t}_{\alpha,M}^*\bm{1}_M)$ is the largest under a DU distribution.  In our notation, a DU distribution is any distribution satisfying $E[\delta_m(t)] = t$ if $m\in\mathcal{M}_0$ and $E[\delta_m(t)] = 1$ otherwise.  If (A4) - (A5) are satisfied, then $G(t) = a_0 \mu_0 t + (1-a_0)$ under a DU distribution for $t\leq u$.  Write this $G(t)$ as $G^{DU}(t)$.

To study the FDP of WAMDFs consider
\begin{equation}\lim_{M\rightarrow\infty}FDP_M(\hat{t}_{\alpha,M}^0\mat{w}_M)  \leq  \lim_{M\rightarrow\infty}FDP_M(\hat{t}_{\alpha,M}^\lambda\mat{w}_M)
\label{asymptotic FDP eqn1} \leq \alpha
\end{equation}
and three claims regarding these inequalities.
\begin{enumerate}
\item[(C1)] The first inequality in (\ref{asymptotic FDP eqn1}) is satisfied almost surely.
\item[(C2)] The second inequality in (\ref{asymptotic FDP eqn1}) is satisfied almost surely.
\item[(C3)] The second inequality in (\ref{asymptotic FDP eqn1}) is an equality almost surely under a DU distribution.
\end{enumerate}
Informally, Claim (C1) states that the FDP of  the WAMDF is asymptotically always larger than the FDP of its unadaptive counterpart and is referred to as the \textit{asymptotically less conservative} claim.  Claim (C2) states that the WAMDF has asymptotic FDP that is less than or equal to $\alpha$ and is referred to as the \textit{asymptotic FDP control} claim.  Claim (C3) is the $\alpha$\textit{-exhaustive} claim and states that the asymptotic FDP of the WAMDF is equal to $\alpha$ under a DU distribution.   Theorem \ref{asymptotic FDP} provides sufficient conditions for each claim.
\begin{theorem} \label{asymptotic FDP}
Fix $\alpha\in (0,1)$ and suppose that (A2)  and (A4) - (A6) are satisfied.  Then Claim (C1) holds.  Claim (C2) holds if, additionally, $\mu_0 \leq 1$.  Claim (C3) holds for $0<\alpha\leq FDP_\infty(u)$ if, additionally, $\mu_0 = 1$.
\end{theorem}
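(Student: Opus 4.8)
The plan is to reduce all three claims to statements about the deterministic limiting functions $FDP_\infty^0$, $FDP_\infty^\lambda$ and $FDP_\infty$ evaluated at the asymptotic thresholds $t_{\alpha,\infty}^0$ and $t_{\alpha,\infty}^\lambda$ that Theorem \ref{asymptotic t} already supplies. The first task is to show that the random quantities $FDP_M(\hat t_{\alpha,M}^0\mat{w}_M)$ and $FDP_M(\hat t_{\alpha,M}^\lambda\mat{w}_M)$ converge almost surely to $FDP_\infty(t_{\alpha,\infty}^0)$ and $FDP_\infty(t_{\alpha,\infty}^\lambda)$, respectively, after which each claim reduces to an inequality between continuous functions.

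To justify evaluation at a data-dependent threshold, I would first upgrade the pointwise convergences in (A4) and (A5) to uniform convergence on $[0,u]$: because $t\mapsto R(t\mat{w}_M)/M$ and $t\mapsto V(t\mat{w}_M)/M$ are nondecreasing (the NS assumptions make each $\delta_m$ nondecreasing in $t$) while their limits $G(t)$ and $a_0\mu_0 t$ are continuous by (A6), a P\'olya-type argument yields $\sup_{t\in[0,u]}|R(t\mat{w}_M)/M - G(t)|\to 0$ and $\sup_{t\in[0,u]}|V(t\mat{w}_M)/M - a_0\mu_0 t|\to 0$ almost surely. Combining this with $\hat t_{\alpha,M}\to t_{\alpha,\infty}$ from Theorem \ref{asymptotic t}, and noting that (A6) forces $G(t)>0$ for every $t>0$ (else $t/G(t)$ could not tend to $0$) so that $FDP_\infty$ is continuous at the strictly positive limiting thresholds, gives $FDP_M(\hat t_{\alpha,M}\mat{w}_M)\to a_0\mu_0 t_{\alpha,\infty}/G(t_{\alpha,\infty})=FDP_\infty(t_{\alpha,\infty})$. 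I expect this passage from a fixed $t$ to the random threshold to be the main obstacle, since it is where uniform convergence, continuity of $G$, and strict positivity of the limit thresholds must all be invoked together.

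Claim (C1) then follows at once: writing $FDP_\infty(t)=a_0\mu_0\,t/G(t)$, condition (A6) makes $t/G(t)$ strictly increasing, so $t_{\alpha,\infty}^0\le t_{\alpha,\infty}^\lambda$ gives $FDP_\infty(t_{\alpha,\infty}^0)\le FDP_\infty(t_{\alpha,\infty}^\lambda)$.

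For (C2) and (C3) I would compare $FDP_\infty$ with $FDP_\infty^\lambda$ directly. Since both equal the common factor $t/G(t)$ times a constant, the bound $FDP_\infty(t)\le FDP_\infty^\lambda(t)$ for all $t>0$ is equivalent to $a_0\mu_0(1-\lambda)\le 1-G(\lambda)$. Decomposing $G(\lambda)=a_0\mu_0\lambda+G_1(\lambda)$, where $G_1(\lambda)=\lim M^{-1}\sum_{m\in\mathcal{M}_1}\delta_m(\lambda w_m)\le 1-a_0$, this reduces to $a_0\mu_0+G_1(\lambda)\le 1$, which holds under $\mu_0\le 1$ because $a_0\mu_0\le a_0$ and $G_1(\lambda)\le 1-a_0$. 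Hence $FDP_\infty(t_{\alpha,\infty}^\lambda)\le FDP_\infty^\lambda(t_{\alpha,\infty}^\lambda)\le\alpha$, the last step holding because continuity of $FDP_\infty^\lambda$ makes its $\alpha$-sublevel set closed, so the defining supremum is attained; this is (C2). For (C3), under a DU distribution $G(t)=G^{DU}(t)=a_0\mu_0 t+(1-a_0)$, so with $\mu_0=1$ one computes $1-G(\lambda)=a_0(1-\lambda)$ and thus $a_0\mu_0(1-\lambda)=1-G(\lambda)$, turning the previous inequality into the identity $FDP_\infty^\lambda\equiv FDP_\infty$. The restriction $0<\alpha\le FDP_\infty(u)$, together with strict monotonicity of $t/G(t)$ and the boundary value $FDP_\infty^\lambda(0^+)=0$ from (A6), places $t_{\alpha,\infty}^\lambda$ at the interior point where $FDP_\infty^\lambda=\alpha$ exactly, so $FDP_\infty(t_{\alpha,\infty}^\lambda)=FDP_\infty^\lambda(t_{\alpha,\infty}^\lambda)=\alpha$, establishing (C3).
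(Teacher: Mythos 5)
Your proposal is correct and follows essentially the same route as the paper: a uniform-convergence (Glivenko--Cantelli/P\'olya-type) lemma to pass from fixed $t$ to the random thresholds via Theorem \ref{asymptotic t} and continuity, then the comparison $FDP_\infty(t_{\alpha,\infty}^\lambda)\leq FDP_\infty^\lambda(t_{\alpha,\infty}^\lambda)\leq \alpha$, which your reduction to $a_0\mu_0+G_1(\lambda)\leq 1$ renders in a form algebraically equivalent to the paper's bound $G(\lambda)\leq a_0\lambda+1-a_0$, with equality under a DU distribution when $\mu_0=1$ and $0<\alpha\leq FDP_\infty(u)$. The only cosmetic caveat is that the uniform convergence should be asserted on $[\delta,u]$ for fixed $\delta>0$ rather than on all of $[0,u]$ (under a DU distribution $G$ jumps at $0$, so the P\'olya argument fails at the left endpoint), which is harmless because, as you note, the limiting thresholds are strictly positive.
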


Asymptotic FDP control (C2)  and $\alpha$-exhaustion (C3) depend on the unobservable value of $\mu_0$, which necessarily depends on the weighting scheme at hand.
The next theorem is useful for verifying (C2) and/or (C3).
\begin{theorem}\label{EFDR}
Suppose that $(W_{m,M} , \theta_{m,M}), m \in\mathcal{M}$, are  identically distributed random vectors with support $\Re^+\times \{0,1\}$, and with $E[W_{m,M}]=1$ and $E[\theta_{m,M}] \in (0,1)$.  Take
$$\bar{W}_{0,M} = \frac{\sum_{m\in\mathcal{M}}(1-\theta_{m,M})W_{m,M}}{\sum_{m\in\mathcal{M}}(1-\theta_{m,M})}$$
whenever $\bm{\theta}_M \neq \bm{1}_M$ and $\bar{W}_{0,M} = 1$ otherwise. If $\bar{W}_{0,M}\rightarrow \mu_0$ almost surely, then $\mu_0\leq 1$ if $Cov(W_{m,M},\theta_{m,M})\geq 0$ and $\mu_0 = 1$ if $Cov(W_{m,M},\theta_{m,M}) = 0$.
\end{theorem}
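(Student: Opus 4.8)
The plan is to recognize $\bar{W}_{0,M}$ as a ratio of two sample averages and to identify the assumed limit $\mu_0$ as the ratio of their almost sure limits. On the event $\bm{\theta}_M\neq\bm{1}_M$ I would rewrite the definition as
$$
\bar{W}_{0,M} = \frac{M^{-1}\sum_{m\in\mathcal{M}}(1-\theta_{m,M})W_{m,M}}{M^{-1}\sum_{m\in\mathcal{M}}(1-\theta_{m,M})}.
$$
The key observation is that this is exactly the empirical average of $W$ restricted to the subpopulation of true null hypotheses, so that its limit ought to be the conditional mean $E[W_{m,M}\mid\theta_{m,M}=0]$. To make this precise I would apply the strong law of large numbers to numerator and denominator separately: the denominator converges to $E[1-\theta_{m,M}] = 1-E[\theta_{m,M}]$ and the numerator to $E[(1-\theta_{m,M})W_{m,M}] = E[W_{m,M}] - E[\theta_{m,M}W_{m,M}]$.

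Next I would assemble these limits. Because $E[\theta_{m,M}]\in(0,1)$, the denominator limit $1-E[\theta_{m,M}]$ is strictly positive, so the denominator is eventually positive almost surely; in particular $\bm{\theta}_M\neq\bm{1}_M$ eventually, and the edge-case convention $\bar{W}_{0,M}=1$ plays no role in the limit. Hence $\mu_0$ equals the ratio of the two limits. Using $E[W_{m,M}]=1$ and the identity $E[\theta_{m,M}W_{m,M}] = Cov(W_{m,M},\theta_{m,M}) + E[\theta_{m,M}]E[W_{m,M}] = Cov(W_{m,M},\theta_{m,M}) + E[\theta_{m,M}]$, a short computation gives
$$
\mu_0 = \frac{\bigl(1-E[\theta_{m,M}]\bigr) - Cov(W_{m,M},\theta_{m,M})}{1-E[\theta_{m,M}]} = 1 - \frac{Cov(W_{m,M},\theta_{m,M})}{1-E[\theta_{m,M}]}.
$$
Since the denominator is positive, both conclusions are then immediate: $Cov(W_{m,M},\theta_{m,M})\geq 0$ forces $\mu_0\leq 1$, while $Cov(W_{m,M},\theta_{m,M})=0$ forces $\mu_0=1$.

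The main obstacle is justifying the separate almost sure convergence of the numerator and denominator, since the hypotheses supply only identical distribution of the $(W_{m,M},\theta_{m,M})$ together with convergence of the ratio $\bar{W}_{0,M}$ itself, rather than independence. I would handle this by appealing to the setting of Model \ref{randeffmodel}, where each $(W_{m,M},\theta_{m,M})$ is a measurable function of the independent and identically distributed prior data $(\theta_m,p_m,\gamma_m)$, so that Kolmogorov's strong law applies to each average; the finiteness of the needed first moments is guaranteed by $E[W_{m,M}]=1<\infty$. The one delicate point here, which I would flag, is that the optimal weights are coupled across $m$ through the common constant $k_M^*$, so strict independence of the summands may fail; in that case the argument should instead rely on the weak-dependence convergence conditions of the type underlying (A4)--(A5) to secure convergence of the subpopulation weight average to the population conditional mean $E[W_{m,M}\mid\theta_{m,M}=0]$, after which the covariance algebra above identifies $\mu_0$ unchanged.
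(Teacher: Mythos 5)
Your proof is correct and takes essentially the same route as the paper's: both arguments identify the limit $\mu_0$ with the conditional mean $E[W_{m,M}\mid\theta_{m,M}=0]$ and then reduce the sign claims to covariance algebra (the paper works through $E[W_{m,M}\mid\theta_{m,M}=1]$ and the law of total expectation, which is equivalent to your explicit formula $\mu_0 = 1 - Cov(W_{m,M},\theta_{m,M})/(1-E[\theta_{m,M}])$). If anything you are more careful on the one delicate point --- justifying that the almost-sure limit of the ratio equals the population conditional mean despite the hypotheses only asserting identical distribution --- which the paper passes over by simply writing $E[\bar{W}_{0,M}\mid\bm{\theta}_M\neq\bm{1}_M]=\mu_0$; your suggestion to secure this via the Model~\ref{randeffmodel} / weak-dependence structure is exactly how it is used downstream in Theorem~\ref{optimal conditions} and Corollary~\ref{independent weights}.
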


\begin{corollary}\label{STS opt}
Suppose that (A4) - (A6) are satisfied and take $\mat{w}_M = \bm{1}_M$.  Then for any fixed $\lambda\in(0,1)$ and $0 < \alpha\leq a_0$,  Claims (C1) - (C3) hold.
\end{corollary}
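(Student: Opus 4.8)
The plan is to reduce the entire statement to Theorem \ref{asymptotic FDP} by pinning down the value of $\mu_0$ and by matching the admissible range of $\alpha$ in the unweighted case $\mat{w}_M = \bm{1}_M$. The single observation driving the argument is that when every weight equals one, the mean of the weights from true null hypotheses is trivially $\bar{w}_{0,M} = M_0^{-1}\sum_{m\in\mathcal{M}_0} 1 = 1$ for every $M$ with $M_0 \geq 1$, so by (A5) we obtain $\mu_0 = 1$. Equivalently, I could invoke Theorem \ref{EFDR} with the degenerate weights $W_{m,M}\equiv 1$, for which $E[W_{m,M}] = 1$ and $Cov(W_{m,M},\theta_{m,M}) = 0$, which again delivers $\mu_0 = 1$.

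First I would verify that (A2) holds so that Theorem \ref{asymptotic FDP} is applicable. Since $w_{(M)} = \max\{\bm{1}_M\} = 1$ for every $M$, I would take $k = 1$ and $u = 1/k = 1$; then any fixed $\lambda\in(0,1)$ produces the constant sequence $\lambda_M = \lambda$ converging to $\lambda\leq u = 1$, so (A2) is satisfied. With (A2) together with the assumed (A4)--(A6), Theorem \ref{asymptotic FDP} immediately yields Claim (C1), and, because $\mu_0 = 1 \leq 1$, it also yields Claim (C2).

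The remaining task is to match the $\alpha$-range for Claim (C3). Theorem \ref{asymptotic FDP} delivers (C3) when $\mu_0 = 1$ (already established) and $0<\alpha\leq FDP_\infty(u)$, so I would compute $FDP_\infty(u)$ under a DU distribution. Under a DU distribution the limiting rejection proportion is $G^{DU}(t) = a_0\mu_0 t + (1-a_0) = a_0 t + (1-a_0)$ for $t\leq u$, hence at $u = 1$ one has $G^{DU}(1) = a_0 + (1-a_0) = 1$ and
$$FDP_\infty(u) = \frac{a_0\mu_0 u}{G^{DU}(u)} = \frac{a_0\cdot 1\cdot 1}{1} = a_0.$$
Thus the hypothesis $0<\alpha\leq a_0$ is exactly the condition $0<\alpha\leq FDP_\infty(u)$ demanded by Theorem \ref{asymptotic FDP}, and Claim (C3) follows.

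Since the proof is essentially a substitution into an already-proved theorem, there is no deep obstacle. The one point that requires care is the interplay between the choice of $u$ and the range of $\alpha$: the quantity $FDP_\infty(u) = a_0 u/[a_0 u + (1-a_0)]$ is strictly increasing in $u$, so the full interval $0<\alpha\leq a_0$ in the statement is attainable \emph{only} at $u = 1$. The choice $u = 1$ is precisely what is permitted by (A2) for the unweighted procedure (via $k = 1$), and it forces $G^{DU}(u) = 1$, which is exactly what makes the stated range $0<\alpha\leq a_0$ coincide with the abstract threshold $FDP_\infty(u)$ on which $\alpha$-exhaustion can hold.
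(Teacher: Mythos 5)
Your proof is correct and follows essentially the same route as the paper's: verify (A2) with $u=1$ and fixed $\lambda<1$, note $\mu_0=1$ for unit weights, and observe that $FDP_\infty(1)=a_0$ under a DU distribution so that $0<\alpha\leq a_0$ is exactly the condition $\alpha\leq FDP_\infty(u)$ required by Theorem \ref{asymptotic FDP}. Your added remark on why $u=1$ is the choice that makes the full range $0<\alpha\leq a_0$ attainable is a nice clarification but not a departure from the paper's argument.
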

This corollary suggests that the procedure in \cite{Sto04} is competitive with the $\alpha$-exhaustive nonlinear procedures in \cite{Fin09}.  That a DU distribution is the least favorable among such (unweighted) adaptive linear step-up procedures under our weak dependence structure is interesting; the search for least favorable distributions remains a challenging problem.  See \cite{Fin07, RoqVil11, Fin12}.

\subsection{Asymptotically optimal weights}
We verify that the conditions allowing for the WAMDF to provide less conservative asymptotic FDP control are satisfied under Model 1, even if the asymptotically optimal weights are perturbed or ``noisy''.
Weight vectors and elements of weight vectors are indexed by $M$ to facilitate asymptotic arguments, and, we sometimes write $\bar{t}_M(k_M^*) = \bar{t}_M(k_M^*,\mat{p},\bm{\gamma})$ for brevity.

Perturbed weights are simulated by multiplying each asymptotically optimal weight by a positive random variable $U_m$,
\begin{equation}
\tilde{w}_{m,M}(k_M^*,\mat{p},\bm{\gamma}) = U_m w_{m,M}(k_M^*,\mat{p},\bm{\gamma})
\label{perturbed w}
\end{equation}
for each $m$.  A perturbed weight is often denoted by $\tilde{w}_{m,M}$ and the vector of perturbed weights is denoted by $\tilde{\mat{w}}_M(k_M^*,\mat{p},\bm{\gamma})$ or $\tilde{\mat{w}}_M$.  To allow for (A2) to be satisfied, assume each triplet $(U_m, \gamma_m,p_m)$ has a joint distribution satisfying $0\leq U_mt_m(k_M^*/p_m,\gamma_m) \leq 1$ almost surely, and that $E[U_m|\mat{p},\bm{\gamma}] = 1$ for each $m$ so that perturbed weights have mean 1.  Here $\tilde{\mat{w}}_M = \mat{w}_M^*$ if $U_m = 1$ for each $m$ (almost surely).  Hence, results regarding perturbed weights immediately carry over to asymptotically optimal weights.
\begin{theorem} \label{optimal conditions}
Suppose that $\Pr(p_m\leq 1-\alpha) = 1$, take $\lambda_M = \bar{t}_M(k_M^*)$, and consider the perturbed weights $\tilde{\mat{w}}_M$.  Under  Model 1 and (A1), (A2) and (A4) - (A6) are satisfied and $\mu_0\leq 1$.  Hence the conditions of Theorem \ref{asymptotic t} are satisfied and (C1) and (C2) hold.
\end{theorem}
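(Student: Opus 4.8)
The plan is to reduce every claim to the strong law of large numbers (SLLN) applied to the i.i.d.\ vectors $(Z_m,U_m,p_m,\gamma_m)$ of Model~\ref{randeffmodel}. The obstacle that organizes the whole argument is that each perturbed weight $\tilde w_{m,M}=U_m t_m(k_M^*/p_m,\gamma_m)/\bar t_M(k_M^*)$ depends on the \emph{global} random quantities $k_M^*$ and $\lambda_M=\bar t_M(k_M^*)$; the summands in $R(t\tilde{\mat{w}}_M)$, $V(t\tilde{\mat{w}}_M)$, and $\bar w_{0,M}$ are therefore not independent, and the SLLN cannot be invoked directly. The first and hardest step is thus a decoupling lemma: $k_M^*\to k^*$ almost surely for a deterministic $k^*$. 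To prove it I would fix $k$ and apply the SLLN to the i.i.d.\ summands to get $\bar t_M(k)\to \bar t_\infty(k):=E[t_m(k/p_m,\gamma_m)]$ and $\bar G_M(\mat{t}(k,\mat{p},\bm{\gamma}))\to \bar G_\infty(k):=E[G_m(t_m(k/p_m,\gamma_m))]$ almost surely, hence $\widetilde{FDP}_M(\mat{t}(k,\mat{p},\bm{\gamma}))\to\widetilde{FDP}_\infty(k)$ pointwise. Since each $k\mapsto\bar t_M(k)$ (and $k\mapsto\bar G_M$) is monotone in $k$ by (A1) and the limit is continuous, the convergence is uniform on compact $k$-intervals (a P\'olya-type argument), so $\widetilde{FDP}_M\to\widetilde{FDP}_\infty$ uniformly on regions where the denominators stay bounded away from $0$. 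The limiting analogue of Theorem~\ref{exist} (valid because $\Pr(p_m\le 1-\alpha)=1$) furnishes a smallest root $k^*$ of $\widetilde{FDP}_\infty(k)=\alpha$; choosing smallest roots throughout and using uniform convergence plus a transversal-crossing argument then yields $k_M^*\to k^*$ almost surely, and consequently $\lambda_M=\bar t_M(k_M^*)\to\lambda:=\bar t_\infty(k^*)\in(0,1)$.

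Granting the decoupling lemma, the conditions follow in sequence. For (A2), the standing assumption $0\le U_m t_m(k_M^*/p_m,\gamma_m)\le 1$ gives $\tilde w_{m,M}\le 1/\lambda_M$, so $w_{(M)}\le 1/\lambda_M\to 1/\lambda$; taking $k=1/\lambda$ and $u=1/k=\lambda$ makes $\lambda_M\to\lambda\le u$ and $\lim_M w_{(M)}\le k$ hold. For (A4) and (A5) I would couple the dependent summands to i.i.d.\ ones by sandwiching: given $\eta>0$, eventually $k_M^*\in[k^*-\eta,k^*+\eta]$ and $\lambda_M\in[\lambda-\eta,\lambda+\eta]$, so by monotonicity of $\delta_m$ in its threshold and of $t_m(\cdot/p_m,\gamma_m)$ in $k$, each $\delta_m(Z_m;t\tilde w_{m,M})$ is trapped between two summands that are genuine i.i.d.\ functions of $(Z_m,U_m,p_m,\gamma_m)$ alone (with $k^*\mp\eta$, $\lambda\pm\eta$ frozen). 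Applying the SLLN to the upper and lower envelopes and letting $\eta\downarrow0$ with dominated convergence and the continuity of $t_m$ in $k$ identifies the common limit $G(t)=E[\delta_m(Z_m;tU_m t_m(k^*/p_m,\gamma_m)/\lambda)]$ for (A4); for (A5) the same sandwich restricted to $\mathcal{M}_0$, together with $E[\delta_m(Z_m;t_m)\mid\theta_m=0]=t_m$ from the NS assumptions, gives $V(t\tilde{\mat{w}}_M)/M\to a_0\mu_0 t$ with $M_0/M\to a_0=1-E[\theta_m]$ and $\mu_0=E[U_m t_m(k^*/p_m,\gamma_m)/\lambda\mid\theta_m=0]$. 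Condition (A6) is then read off from (A1): concavity makes $\pi_{\gamma_m}(t_m)/t_m$ and hence $G(t)/t$ decreasing, so $t/G(t)$ is strictly increasing; $\pi'_{\gamma_m}(0^+)=\infty$ forces $G(t)/t\to\infty$, i.e.\ $t/G(t)\to 0$ as $t\downarrow0$; and the endpoint inequality $u/G(u)\le 1$ at $u=\lambda$ follows from the self-consistent construction $\widetilde{FDP}_\infty(k^*)=\alpha$.

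Finally, to show $\mu_0\le 1$ I would invoke Theorem~\ref{EFDR} with $W_{m,M}=\tilde w_{m,M}$, for which $\bar W_{0,M}=\bar w_{0,M}\to\mu_0$ by the argument just given and $E[\tilde w_{m,M}]\to 1$ since $E[U_m\mid\mat{p},\bm{\gamma}]=1$ and $\bar w_{0,M}$ is normalized. The only thing to check is the covariance sign. Using $E[U_m\mid p_m,\gamma_m]=1$, $E[\theta_m\mid p_m]=p_m$, the conditional independence of $U_m$ and $\theta_m$ given $(p_m,\gamma_m)$, and $p_m\perp\gamma_m$ from Model~\ref{randeffmodel}, one gets $\mathrm{Cov}(\tilde w_m,\theta_m)=\lambda^{-1}\mathrm{Cov}\big(t_m(k^*/p_m,\gamma_m),p_m\big)$, which is nonnegative because $t_m(k^*/p_m,\gamma_m)$ is increasing in $p_m$ (as noted after Example~\ref{example1}) so that, conditioning on $\gamma_m$, it is a monotone function of $p_m$ and Chebyshev's association inequality applies. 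Theorem~\ref{EFDR} then yields $\mu_0\le 1$. Having verified (A2) and (A4)--(A6), the hypotheses of Theorem~\ref{asymptotic t} hold, and Theorem~\ref{asymptotic FDP} delivers Claim (C1) unconditionally and Claim (C2) because $\mu_0\le 1$, completing the proof. I expect the decoupling lemma $k_M^*\to k^*$ to be the genuine difficulty, since the non-monotonicity of $\widetilde{FDP}_M$ in $k$ makes both root existence in the limit and convergence of the smallest root delicate; everything downstream is a sandwiching-plus-SLLN exercise once that lemma is in hand.
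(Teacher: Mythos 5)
Your proposal follows essentially the same route as the paper: the paper's Lemma S2 is exactly your decoupling lemma ($k_M^*\to k^*$ and $\tilde w_{m,M}\to\tilde w_{m,\infty}$ a.s.\ via the SLLN and continuity in $k$), (A4)--(A5) are verified by the same monotone sandwiching of $\delta_m(t\tilde w_{m,M})$ between i.i.d.\ envelopes, and $\mu_0\le1$ is obtained by feeding the same covariance decomposition (first term zero by $E[\theta_m\mid p_m]=p_m$, second term nonnegative because $t_m(k^*/p_m,\gamma_m)$ is increasing in $p_m$) into Theorem \ref{EFDR}. Your worry about non-monotonicity of $\widetilde{FDP}_M$ in $k$ is well placed; the paper's Lemma S2 handles it rather tersely by continuity and the ``smallest root'' convention, and your uniform-convergence-on-compacts argument is, if anything, more careful.

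The one step that does not go through as written is the endpoint inequality $u/G(u)\le1$ in (A6). You derive it ``from the self-consistent construction $\widetilde{FDP}_\infty(k^*)=\alpha$.'' That identity does imply $G(t_0(k^*))\ge t_0(k^*)$, but only for the \emph{unperturbed} approximator $G(t_0(k))=E[\delta_m(t_m(k/p_m,\gamma_m))]$, i.e.\ the $G$ built from thresholds $t_m(k^*/p_m,\gamma_m)$ without the factors $U_m$. The function appearing in (A4)--(A6) for the perturbed WAMDF is $G^{k^*}(t)=E[\delta_m(tU_m t_m(k^*/p_m,\gamma_m)/t_0(k^*))]$, and the self-consistency equation says nothing directly about $G^{k^*}(u)$ when the $U_m$ are nondegenerate. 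The paper instead proves $G^{k^*}(u)\ge u$ directly: conditioning on $(U_m,p_m,\gamma_m)$ and using $\pi_{\gamma_m}(s)\ge s$ from (A1) gives $E[\delta_m(u\tilde w_{m,\infty})]\ge E[u\tilde w_{m,\infty}]=u$, the last equality because $E[U_m\mid p_m,\gamma_m]=1$ and $E[w_{m,\infty}(k^*)]=1$. This is a one-line repair, so the gap is small, but as stated your argument covers only the case $U_m\equiv1$.
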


Next the notion of ``asymptotically optimal'' is formalized and some examples of $\alpha$-exhaustive weighting schemes are provided.  Asymptotically optimal weights are equivalent to optimal fixed-$t$ weights with $t = \bar{t}_M(k_M^*)$, while the asymptotically optimal WAMDF utilizes the asymptotic threshold $t_{\alpha,\infty}^\lambda$ (see Theorem \ref{asymptotic t}).
\begin{theorem} \label{asymp opt w}
Suppose that $\Pr(p_m\leq 1-\alpha) = 1$ and  take $\lambda_M = \bar{t}_M(k_M^*)$.  Then under Model 1 and (A1), $\bar{t}_M(k_M^*) \rightarrow t_{\alpha,\infty}^{\lambda}$ almost surely.
\end{theorem}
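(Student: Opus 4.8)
The plan is to reduce the statement to the convergence of the selection constant $k_M^*$ and then to identify $\lim_M \bar{t}_M(k_M^*)$ as the fixed point of the asymptotic adaptive FDP at level $\alpha$. Note first that the claim has a self-referential flavor: the limit $\lambda := \lim_M \bar{t}_M(k_M^*)$ is also the superscript defining the target $t_{\alpha,\infty}^\lambda$, so what must really be shown is that $\lambda$ satisfies $\lambda = t_{\alpha,\infty}^\lambda$. Since $k_M^*$ is computed from $(\mat{p}, \bm{\gamma})$ alone and $(Z_m, \theta_m, p_m, \gamma_m)$ are i.i.d.\ under Model \ref{randeffmodel}, I would work conditionally on $(\mat{p}, \bm{\gamma})$ and lean on the SLLN throughout.

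First I would establish the pointwise limits of the two empirical averages that build the approximator. As $t_m(k/p_m, \gamma_m)$ solves $\pi_{\gamma_m}'(t_m) = k/p_m$ and (A1) makes $\pi_{\gamma_m}'$ continuous and strictly decreasing, $k \mapsto t_m(k/p_m, \gamma_m)$ is continuous, decreasing, and valued in $[0,1]$. By the SLLN, for each fixed $k$,
$$\bar{t}_M(k) \to \tau(k) := E[t(k/p,\gamma)] \quad\text{and}\quad \bar{G}_M(\mat{t}(k)) \to g(k) := E[G(t(k/p,\gamma); p, \gamma)]$$
almost surely, where $G(s; p, \gamma) = (1-p)s + p\pi_\gamma(s)$. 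Because each summand is monotone in $k$ and the limits $\tau, g$ are continuous (dominated convergence, integrands in $[0,1]$), the convergence is uniform on compact $k$-intervals by a Polya-type argument. Hence $\widetilde{FDP}_M(\mat{t}(k)) \to \widetilde{FDP}_\infty(k) := \frac{1 - g(k)}{1 - \tau(k)}\frac{\tau(k)}{g(k)}$ uniformly on compacta bounded away from the degenerate endpoints $k \downarrow 0$ (where $\tau, g \uparrow 1$) and $k \uparrow \infty$ (where $\tau, g \downarrow 0$).

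Next I would pass to the limit in $k_M^* = \inf\{k : \widetilde{FDP}_M(\mat{t}(k)) = \alpha\}$. Theorem \ref{exist} guarantees a crossing for each $M$, and uniform convergence to the continuous $\widetilde{FDP}_\infty$ should force $k_M^* \to k_\infty^* := \inf\{k : \widetilde{FDP}_\infty(k) = \alpha\}$ almost surely, whence $\bar{t}_M(k_M^*) \to \tau(k_\infty^*) =: \lambda$ by continuity of $\tau$. This is the step I expect to be the main obstacle: the manuscript explicitly notes that $\widetilde{FDP}_M$ need not be monotone in $k$, so the infimal crossing can jump, and I must show the infimal solution is stable under the uniform perturbation --- for instance by arguing $\widetilde{FDP}_\infty$ crosses $\alpha$ transversally at $k_\infty^*$, or by sandwiching $k_M^*$ between the first up- and down-crossings of the approximating level sets.

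Finally I would verify $\lambda = t_{\alpha,\infty}^\lambda$. Evaluating the asymptotic adaptive FDP at $t = \lambda$ gives
$$FDP_\infty^\lambda(\lambda) = \frac{1 - G(\lambda)}{1 - \lambda}\frac{\lambda}{G(\lambda)},$$
and the crux is the identity $G(\lambda) = g(k_\infty^*)$: at overall threshold $\lambda$ the fixed weights $\mat{w}^*$ produce individual thresholds $\lambda w_m^* = t_m(k_M^*/p_m, \gamma_m)$, so conditioning on $(\mat{p}, \bm{\gamma})$ gives $R(\lambda\mat{w}^*)/M \to E[G(t(k_\infty^*/p,\gamma); p, \gamma)] = g(k_\infty^*)$ (using $k_M^* \to k_\infty^*$ and uniform convergence again); matched against (A4) this pins down $G(\lambda) = g(k_\infty^*)$, while $\lambda = \tau(k_\infty^*)$ by construction. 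Substituting yields $FDP_\infty^\lambda(\lambda) = \widetilde{FDP}_\infty(k_\infty^*) = \alpha$. Since (A6) makes $t \mapsto t/G(t)$ strictly increasing, $FDP_\infty^\lambda$ is strictly increasing and meets $\alpha$ exactly once, at $t = \lambda$; as $\lambda \leq u$ by (A2) (confirmed in Theorem \ref{optimal conditions}), the supremum defining $t_{\alpha,\infty}^\lambda$ is attained at $\lambda$, giving $t_{\alpha,\infty}^\lambda = \lambda = \lim_M \bar{t}_M(k_M^*)$, as claimed.
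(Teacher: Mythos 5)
Your proof follows essentially the same route as the paper's: the paper's Lemma S2 establishes $k_M^*\rightarrow k^*$ and hence $\bar{t}_M(k_M^*)\rightarrow t_0(k^*)$ (your $\tau(k_\infty^*)$), and the proof of Theorem \ref{asymp opt w} then identifies $FDP_\infty^\lambda(t_0(k^*)) = \widetilde{FDP}_\infty(t_0(k^*)) = \alpha$ and uses the strict monotonicity of $t/G^{k^*}(t)$ from (A6) to conclude $t_0(k^*) = t_{\alpha,\infty}^\lambda$, exactly as in your final paragraph. The one obstacle you flag --- stability of the infimal crossing $k_M^*$ when $\widetilde{FDP}_M$ need not be monotone in $k$ --- is genuine but is treated no more carefully in the paper, whose Lemma S2 passes from pointwise convergence and continuity directly to $k_M^*\rightarrow k^*$ via the Continuous Mapping Theorem.
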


Two corollaries show that asymptotic $\alpha$-exhaustive FDP control is provided for a variety of weighing schemes.
\begin{corollary}\label{independent weights}
Under Model 1 and (A1) - (A2), if $\mat{w}_{M}$ are mutually  independent weights and independent of $\bm{\theta}_M$ with $E[w_{m,M}] = 1$, then (C1) - (C2) hold for $\alpha \in (0,1)$ and (C3) holds for $0 < \alpha\leq FDP_\infty(u)$.
\end{corollary}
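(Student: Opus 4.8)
The plan is to treat this as a direct application of the machinery already assembled in Theorems \ref{asymptotic FDP} and \ref{EFDR}: once I verify the convergence hypotheses (A4)--(A6) and identify the limiting mean weight $\mu_0$, all three claims fall out in the stated ranges. Since (A2) is assumed outright, the only genuine work is checking (A4)--(A6) and pinning down $\mu_0$.

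First I would verify (A4)--(A6). Under Model \ref{randeffmodel} the vectors $(Z_m,\theta_m,p_m,\gamma_m)$ are i.i.d., and adjoining the mutually independent, $\theta$-independent weights $w_{m,M}$ (bounded by (A2)) yields a row-wise i.i.d. array to which the strong law applies. For (A4), writing $R(t\mat{w}_M)/M = M^{-1}\sum_m \delta_m(tw_{m,M})$ and conditioning as in (\ref{G}), the SLLN gives $R(t\mat{w}_M)/M \to G(t)$ almost surely with $G(t) = E[(1-\theta_m)\,tw_{m,M} + \theta_m\,\pi_{\gamma_m}(tw_{m,M})]$ by the NS assumptions. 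For (A5), $V(t\mat{w}_M)/M = M^{-1}\sum_m(1-\theta_m)\delta_m(tw_{m,M}) \to E[(1-\theta_m)tw_{m,M}]$; because $w_{m,M}$ is independent of $\theta_m$ and $E[w_{m,M}]=1$, this limit equals $a_0 t$, while $M_0/M \to a_0 = E[1-\theta_m]$, consistent with $a_0\mu_0 t$ once $\mu_0=1$ is established below.

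The delicate hypothesis is (A6): I must show that $t\mapsto t/G(t)$ is strictly increasing and continuous on $(0,u)$ with $\lim_{t\downarrow 0} t/G(t)=0$ and $u/G(u)\le 1$. Continuity and strict monotonicity of this size-to-rejection ratio come from the NS assumptions together with the concavity of each $\pi_{\gamma_m}$ in (A1). The lower limit uses the boundary slope $\lim_{t\downarrow 0}\pi'_{\gamma_m}(t)=\infty$, which forces $G(t)/t\to\infty$ and hence $t/G(t)\to 0$. For the upper bound I would use that each power function is concave through the origin with $\pi_{\gamma_m}(1)=1$, so $\pi_{\gamma_m}(s)\ge s$ for $s\in(0,1]$; then, since $uw_{m,M}\le 1$ by (A2), $G(u) \ge E[(1-\theta_m)uw_{m,M} + \theta_m\,uw_{m,M}] = u\,E[w_{m,M}] = u$, giving $u/G(u)\le 1$. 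Interchanging the expectation with the concavity estimate uniformly over $\gamma_m$ is where I expect the main (though routine) technical care to be required.

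Finally I would compute $\mu_0$. Since $w_{m,M}$ is independent of $\theta_m$, we have $Cov(w_{m,M},\theta_m)=0$, so applying Theorem \ref{EFDR} with $W_{m,M}=w_{m,M}$ yields $\bar{w}_{0,M}\to\mu_0=1$; equivalently one may compute directly that $\bar{w}_{0,M} = \big(M^{-1}\sum_m(1-\theta_m)w_{m,M}\big)/(M_0/M) \to E[(1-\theta_m)w_{m,M}]/a_0 = E[w_{m,M}]=1$. With (A2) and (A4)--(A6) verified and $\mu_0=1$, Theorem \ref{asymptotic FDP} delivers Claims (C1) and (C2) for every $\alpha\in(0,1)$ and Claim (C3) for $0<\alpha\le FDP_\infty(u)$, which is precisely the assertion.
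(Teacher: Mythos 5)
Your proposal is correct and follows essentially the same route as the paper's proof: the strong law of large numbers applied to the i.i.d.\ sequences $\delta_m(tw_{m,M})$ and $(1-\theta_m)\delta_m(tw_{m,M})$ for (A4)--(A5), the concavity/boundary-slope argument from (A1) together with $\pi_{\gamma_m}(s)\ge s$ to get (A6) and $u/G(u)\le 1$, zero covariance plus Theorem \ref{EFDR} to obtain $\mu_0=1$, and then Theorem \ref{asymptotic FDP} to conclude. No substantive differences to report.
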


The next setting arises in practice whenever the distributions of the $Z_m$'s from false nulls are heterogeneous, but heterogeneity attributable to prior probabilities for the states of the null hypotheses either does not exist or is not modeled.  For an illustration see Section \ref{sec 8}.  See also \cite{Spj72, Sto07, Pen11} for more on this type of heterogeneity.
\begin{corollary}\label{asymp opt p}
Suppose that the conditions of Theorem \ref{optimal conditions}  are satisfied and consider perturbed weights $\tilde{\bm{w}}_M$. If $p_i = p_j$ for every $i,j$, then (C3) holds for $0<\alpha\leq FDP_{\infty}(u)$.
\end{corollary}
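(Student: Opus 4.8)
The plan is to obtain Claim (C3) by strengthening the conclusion $\mu_0\le 1$ of Theorem \ref{optimal conditions} to the equality $\mu_0=1$, which is precisely the extra hypothesis that Theorem \ref{asymptotic FDP} requires for $\alpha$-exhaustion on the range $0<\alpha\le FDP_\infty(u)$. Under the stated conditions, Theorem \ref{optimal conditions} already supplies Claims (C1) and (C2), the validity of (A2) and (A4)--(A6), and the almost sure convergence $\bar{w}_{0,M}\to\mu_0\le 1$, so nothing further is needed on those fronts. To pin down $\mu_0=1$ I would invoke Theorem \ref{EFDR} with $W_{m,M}=\tilde{w}_{m,M}$ and $\theta_{m,M}=\theta_m$; its hypotheses are met because the perturbed weights are identically distributed across $m$ under Model \ref{randeffmodel}, satisfy $E[\tilde{w}_{m,M}]=E[w_{m,M}(k_M^*,\mat{p},\bm{\gamma})]=1$ (the optimal weights have sample mean one and are identically distributed while $E[U_m\mid\mat{p},\bm{\gamma}]=1$), and obey $E[\theta_m]=E[p_m]=1-a_0\in(0,1)$. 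It then suffices to show $Cov(\tilde{w}_{m,M},\theta_m)=0$.

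To compute this covariance I would condition on $(\mat{p},\bm{\gamma})$ and apply the law of total covariance. Given $(\mat{p},\bm{\gamma})$, the scalar $k_M^*$ and hence each optimal weight $w_{m,M}(k_M^*,\mat{p},\bm{\gamma})$ is determined, while $U_m$ has conditional mean one by assumption and is conditionally independent of the realized state $\theta_m$ (the weighting scheme is built from $\mat{p}$ and $\bm{\gamma}$ only, never from $\bm{\theta}$), and $\theta_m\mid(\mat{p},\bm{\gamma})$ is Bernoulli$(p_m)$ by Model \ref{randeffmodel}. Consequently
\begin{equation*}
E[\tilde{w}_{m,M}\theta_m\mid\mat{p},\bm{\gamma}]=w_{m,M}(k_M^*,\mat{p},\bm{\gamma})\,E[U_m\mid\mat{p},\bm{\gamma}]\,E[\theta_m\mid\mat{p},\bm{\gamma}]=w_{m,M}(k_M^*,\mat{p},\bm{\gamma})\,p_m,
\end{equation*}
and likewise $E[\tilde{w}_{m,M}\mid\mat{p},\bm{\gamma}]\,E[\theta_m\mid\mat{p},\bm{\gamma}]=w_{m,M}(k_M^*,\mat{p},\bm{\gamma})\,p_m$, so the conditional covariance vanishes. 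The law of total covariance then leaves only the between-group term, giving $Cov(\tilde{w}_{m,M},\theta_m)=Cov\bigl(w_{m,M}(k_M^*,\mat{p},\bm{\gamma}),p_m\bigr)$.

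Finally I would use the hypothesis $p_i=p_j$ for every $i,j$: together with the i.i.d.\ structure of Model \ref{randeffmodel} this forces the common prior to be degenerate at a single value $p$, so $p_m$ is non-random and $Cov\bigl(w_{m,M}(k_M^*,\mat{p},\bm{\gamma}),p_m\bigr)=0$. Hence $Cov(\tilde{w}_{m,M},\theta_m)=0$, Theorem \ref{EFDR} yields $\mu_0=1$, and Theorem \ref{asymptotic FDP} delivers Claim (C3) for $0<\alpha\le FDP_\infty(u)$. I expect the main obstacle to lie not in the algebra but in justifying the two probabilistic inputs to the covariance computation: the conditional independence of $U_m$ and $\theta_m$ given $(\mat{p},\bm{\gamma})$, and the reduction of ``$p_i=p_j$ for all $i,j$'' to a non-random common prior. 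This step is the heart of the matter, since it isolates the fact---already visible in Example \ref{example1}, where an optimal weight is monotone in $p_m$---that all correlation between the employed weights and the states $\theta_m$ is channelled through the priors $\mat{p}$; removing the variation in $\mat{p}$ is exactly what kills the correlation and restores $\alpha$-exhaustion.
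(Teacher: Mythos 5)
Your proposal is correct and follows essentially the same route as the paper: both arguments reduce Claim (C3) to showing $Cov(\tilde{w}_{m,M},\theta_m)=0$, then invoke Theorem \ref{EFDR} to get $\mu_0=1$ and Theorem \ref{optimal conditions} (which already supplies (A2), (A4)--(A6)) together with Theorem \ref{asymptotic FDP} to conclude. The only cosmetic difference is the mechanism for the zero covariance: the paper observes that $\pi(\mat{t},\mat{p},\bm{\gamma})$ is proportional to $\pi(\mat{t},\bm{1},\bm{\gamma})$ so the weights are functionally free of $\mat{p}$ and hence independent of $\theta_m$, whereas you run the law of total covariance and note that the between-group term $Cov(w_{m,M}(k_M^*,\mat{p},\bm{\gamma}),p_m)$ vanishes because the i.i.d.\ assumption plus $p_i=p_j$ forces a degenerate common prior --- both are valid and rest on the same implicit conditional uncorrelatedness of $U_m$ and $\theta_m$ that the paper also uses in the proof of Theorem \ref{optimal conditions}.
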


The fact that $\alpha$-exhaustion need not be achieved when $p_i\neq p_j$ in Model 1 for the asymptotically optimal WAMDF, even though it is more powerful than competing MDFs, is noteworthy.  A similar phenomenon was observed in \cite{Gen06} in the unadaptive setting, and it was suggested that one potential route for improvement is to incorporate an estimate of $\mu_0$ into the procedure.  However, it is not clear how this objective could be accomplished without sacrificing FDP control, especially when weights may be perturbed.

\section{Simulation} \label{sec 7}

This section compares weighted adaptive MDFs to other MDFs in terms of power and FDP control via simulation.  In particular, for each of $K = 1000$ replications, we generate $Z_m\stackrel{i.i.d.}\sim N(\theta_m\gamma_m,1)$ for $m = 1, 2, ..., 1000$ and compute $\bm{\delta}(\hat{t}_{\alpha,M}^{\lambda}\mat{w}_M)$, $\bm{\delta}(\hat{t}_{\alpha,M}^{0}\mat{w}_M)$, $\bm{\delta}(\hat{t}_{\alpha,M}^{\lambda}\bm{1}_M)$, and $\bm{\delta}(\hat{t}_{\alpha,M}^{0}\bm{1}_M)$ as in Example 1, where $\alpha = 0.05$ and $\lambda_M = \bar{t}_M(k_M^*,\mat{p},\bm{\gamma})$. The average FDP and average correct discovery proportion (CDP) was computed over the $K$ replications for each procedure, where CDP $= \sum_{m\in\mathcal{M}_1}\delta_m/\max\{M_1, 1\}$.

In each simulation  experiment, $\gamma_m\stackrel{i.i.d.}\sim Un(1, a)$ for $a = 1, 3, 5$, $Un(1,a)$ the uniform distribution over $(1,a)$.  When $a = 1$ the effect sizes were identical, while when $a = 3$ or $a = 5$ they varied.  In Simulation 1, $p_m = 0.5$ for each $m$ and weighted procedures utilized asymptotically optimal weights.   In Simulation 2, weighted procedures used asymptotically optimal weights as before and the effect sizes varied as before, but $p_m\stackrel{i.i.d.}\sim Un(0,1)$.  Thus, though the procedure was optimally weighted and asymptotic FDP control was provided, the conditions of (C3) are no longer satisfied.  In Simulation 3, data were generated according to the same mechanism as in Simulation 2, but asymptotically optimal weights were perturbed via $U_m w_{m,M}(k_M^*,\mat{p},\bm{\gamma})$, where $U_m \stackrel{i.i.d.}\sim Un(0,2)$.  Simulation 4 represents a worst case scenario weighting scheme, in which weights were generated as $w_{m,M}\stackrel{i.i.d.}\sim Un(0,2)$.

Detailed results and discussions of simulations are in the supplemental materials.  The main point is that the WA procedure dominates all other procedures as long as the employed weights are at least positively correlated with the optimal weights, and it performs nearly as well as other procedures otherwise.  In particular, its FDP was less than or equal to $0.05$ in all simulations, as Theorem \ref{optimal conditions} stipulates.  Further, its average CDP was as large as or larger than the CDP of all other procedures in the first three simulations.  The WA procedure did have a slightly smaller average CDP than the UA procedure in the worst case scenario (Simulation 4), as one might expect.

\section{Implementation}\label{sec 8}
In practical applications parameters  $\mat{p}$ and $\bm{\gamma}$ in Model 1 are not (at least fully) observable and hence the asymptotically optimal WAMDF is not readily implementable.  However, these parameters can be estimated or specified based on reasonable assumptions if the nature of the heterogeneity is at least partially observable.  This section illustrates these two implementation approaches on the data in Table \ref{data} and discusses strengths and limitations of each.

\subsection{The Setup}
The goal is to test $H_m:\beta_{m}=0$ for each $m$, where $\beta_{m}$ is the regression coefficient for regressing $\mat{Y}_m = (Y_{1m}, Y_{2m}, ..., Y_{5m})^T$ on $\mat{x} = (x_1, x_2, ..., x_5)^T$ with the log-linear model log$(\mu_{im}) = \alpha_{m} + \beta_{m}x_i$ and where $Y_{im}$ are independent Poisson random variables with mean $\mu_{im}$. Let $N_m = \sum_{i =1}^5Y_{im}$ and $T_m = \sum_{i = 1}^5 x_iY_{im}$.  As per \cite{McCNel89}, we focus on the conditional distribution of $T_m|N_m = n_m$, which is free of the nuisance parameter $\alpha_{m}$.  Given $N_m = n_m$, $\mat{Y}_m$ has a multinomial distribution with mean $n_m \mat{p}(\beta_m)$ and covariance $n_m\left[diag(\mat{p}(\beta_m)) - \mat{p}(\beta_m)\mat{p}(\beta_m)^T\right]$, where $\mat{p}(a) = \left[\frac{\exp(x_1a)}{\sum_{i}\exp\{x_ia\}}, \frac{\exp(x_2a)}{\sum_{i}\exp\{x_ia\}}, ..., \frac{\exp(x_5a)}{\sum_{i}\exp\{x_ia\}}\right]^T$. Thus, the $Z$-score for $T_m = \mat{x}^T\mat{Y}_m$ is
$$Z_m = \left(\frac{T_m - n_m\mat{x}^T\mat{p}(0)}{\sqrt{n_m\mat{x}^T[diag(\mat{p}(0)) - \bm{p}(0)\bm{p}(0)^T]\mat{x}}}\right).$$

To facilitate Model 1 we consider the mixture model introduced in \cite{Hab15}, that assumes apriori that $\Pr(\beta_m = 0) = \pi_0$, $\Pr(\beta_m = \eta_1) = \pi_1$, and $\Pr(\beta_m = \eta_2) = \pi_2$ for some $\eta_1\neq \eta_2\neq 0$ and $\pi_0 + \pi_1 + \pi_2 = 1$.  Denote the mixing proportions by $\bm{\pi}$ and take $\bm{\eta} = (\eta_1, \eta_2)$.  Utilizing a normal approximation for the distribution of $Z_m$ results in normal mixture density for $Z_m|N_m = n_m$:
\begin{equation}\label{mixture}
f(z_m|n_m;\bm{\pi},\bm{\eta}) = \pi_0\phi\left(z_m;0,1\right) + \pi_1\phi\left(z_m;\mu(\eta_1,n_m),\sigma^2(\eta_1)\right) + \pi_2\phi\left(z_m;\mu(\eta_2,n_m),\sigma^2(\eta_2)\right),
\end{equation}
where $$
\mu(a, n_m) = \frac{\sqrt{n_m}\mat{x}^T[\mat{p}(a)-\mat{p}(0)]}{\sqrt{\mat{x}^T\left[diag(\mat{p}(0)) - \mat{p}(0)\mat{p}(0)^T\right]\mat{x}}} \mbox{ and }
\sigma^2(a) = \frac{\mat{x}^T\left[diag(\mat{p}(a)) - \mat{p}(a)\mat{p}(a)^T\right]\mat{x}}{\mat{x}^T\left[diag(\mat{p}(0))- \mat{p}(0)\mat{p}(0)^T\right]\mat{x}}.
$$
In the context of Model 1, $F_0 = \Phi$, $p_m = 1-a_0 = \pi_1 + \pi_2$, $\gamma_m = n_m$ and
$$F_1(z_m|\gamma_m) = F_1(z_m|n_m;\bm{\pi},\bm{\eta}) = \frac{\pi_1}{\pi_1+\pi_2}\Phi\left(\frac{z_m - \mu(\eta_1,n_m)}{\sigma(\eta_1)}\right) + \frac{\pi_2}{\pi_1+\pi_2}\Phi\left(\frac{z_m - \mu(\eta_2,n_m)}{\sigma(\eta_2)}\right).$$
Here $\gamma_m = n_m$ is not an unobservable effect size.  It is observable and indexes a mixture distribution for $Z_m$ when $H_m$ is false, which depends on the parameters $\bm{\pi}$ and $\bm{\eta}$.

The uniformly most powerful unbiased decision function is $\delta_m(Z_m;t_m) = I(|Z_m|\geq \Phi^{-1}(1-t_m/2))$, with power function
$$\pi_{n_m}(t_m) = F_1(\Phi^{-1}(t_m/2)|n_m;\bm{\pi},\bm{\eta}) + [1 - F_1(\Phi^{-1}(1-t_m/2)|n_m;\bm{\pi},\bm{\eta})]$$
To compute optimal fixed-t weights, first note that $\phi(\Phi^{-1}(t_m/2)) = \phi(\Phi^{-1}(1-t_m/2))$ so that the derivative of $\pi_{n_m}(t_m)$ with respect to $t_m$ is
\begin{eqnarray} \label{deriv}
\pi'_{n_m}(t_m) &\propto& \frac{f_1(\Phi^{-1}(t_m/2)|n_m;\bm{\pi},\bm{\eta}) + f_1(\Phi_0^{-1}(1-t_m/2)|n_m;\bm{\pi},\bm{\eta})}{\phi(\Phi^{-1}(t_m/2))}
\end{eqnarray}
Setting this derivative equal to $k/p_m = k/(\pi_1 + \pi_2)$ and solving for $t_m$ gives a collection of optimal fixed-$t$ thresholds.  Denote each such $t_m$ by $t_m(k,\bm{\pi},\bm{\eta},n_m)$.  Then, optimal fixed-t weights are computed as in
$w_m(k^*, \bm{\pi},\bm{\eta},n_m) = \frac{t_m(k^*,\bm{\pi},\bm{\eta},n_m)}{\bar{t}_M(k^*,\bm{\pi},\bm{\eta},\bm{n})}$ where $k^*$ satisfies $\bar{t}_M(k^*,\bm{\pi},\bm{\eta},\mat{n}) \equiv M^{-1}\sum_{m\in\mathcal{M}}t_m(k^*,\bm{\pi},\bm{\eta},n_m) = t$.

The five steps for implementing the WAMDF are:
\begin{enumerate}
\item[1a.] get $(\bm{\pi},\bm{\eta},n_m)$ for each $m$;
\item[1b.] compute $w_m^* = w_m(k_M^*,\bm{\pi},\bm{\eta},n_m)$ as in \eqref{asymp opt w};
\item[2a.] specify $\lambda$ and compute $Q_m = P_m/w_m^* = 2\bar\Phi(|z_m|)/w_m^*$;
\item[2b.] get $j = \max\{m:Q_{(m)}\leq \alpha m/\hat{M}_0(\lambda\bm{w}^*)\}$;
\item[2c.] get $\hat t_{\alpha}^{\lambda *} = \min\{j\alpha/\hat{M}_0(\lambda\mat{w}), \lambda\}$ and reject $H_m$ if $Q_m\leq \hat t_{\alpha}^{\lambda *}$.
\end{enumerate}
The parameters $\bm{\pi}$ and $\bm{\eta}$ are unobservable and hence must be estimated or specified.

\subsection{Parameter Estimation}
Parameters are estimated via maximum likelihood.  Specifically, assuming that $\mat{Y}_1, \mat{Y}_2, ..., \mat{Y}_M$ are independent conditionally upon $N_1, N_2,....,N_M$, then under \eqref{mixture}, the log likelihood is
$$l(\bm{\pi},\bm{\eta}) = \sum_{m = 1}^M log(f(z_m|n_m;\bm{\pi},\bm{\eta}))$$
and maximum likelihood estimates are found using the EM algorithm (\cite{Dem77}).  Results are summarized in Table \ref{MLEs}.  For more details on the EM algorithm and finite mixtures of normal distributions, see \cite{McLPee00} and see, for example, \cite{Ben09} for available software.
\begin{table}
\centering
\begin{tabular}{ccccc}
$\hat\pi_0$ & $\hat\pi_1$ & $\hat\pi_2$ & $\hat\eta_1$ & $\hat\eta_2$ \\ \hline
0.66 & 0.17 & 0.17 & -1.09 & 0.71
\end{tabular}
\caption{\label{MLEs} Maximum likelihood estimates for the model in \eqref{mixture}.}
\end{table}
For $\alpha = 0.05$ and $\lambda = 0.5$, the unweighted adaptive procedure resulted in 86 discoveries.  The weighted adaptive procedure with estimated weights as above (but modified via $\tilde w_m = [w_m^* + 0.1]/[M^{-1}\sum_{m}(w_m^* + 0.1)]$ to avoid impractically small weights) was applied for $\alpha = 0.05$ and $\lambda = 0.5$ and resulted in 85 discoveries. Of course, we cannot know the average power or FDR for the weighted and unweighed adaptive procedures based on this run of the experiment.

Some asymptotic results are readily available.  In particular, because $\hat{\bm{\pi}}$ and $\hat{\bm{\eta}}$ are maximum likelihood estimates, $\hat{\bm{\pi}}\rightarrow \bm{\pi}$ and $\hat{\bm{\eta}}\rightarrow \bm{\eta}$ as $M\rightarrow \infty$ almost surely.   Consequently, $w_m(k^*, \hat{\bm{\pi}},\hat{\bm{\eta}},n_m)\rightarrow w_m(k^*, \bm{\pi},\bm{\eta},n_m)$ as $M\rightarrow \infty$ almost surely.  See for example \cite{Ser80}, pg. 145 - 150.  Thus, this WAMDF is $\alpha$-exhaustive and asymptotically optimal under (\ref{mixture}).  A limitation of this approach is that it can be computationally intense, especially when $M$ is large.  Here parameters $\bm{\pi}$ and $\bm{\eta}$ must be estimated with an iterative procedure, a root finding algorithm is necessary to compute $t_m(k, \bm{\pi},\bm{\eta},n_m)$ for each $m$ and each value of $k$, and 3) a root-finding algorithm is necessary to find the $k^*$ corresponding to the asymptotically optimal weights.

\subsection{Parameter Specification}
One of the advantages of the WAMDF is that computationally simpler versions can be utilized with potentially little loss in efficiency and without sacrificing FDR control.  To illustrate, consider weights computed $w_m^* = t_m/\bar{t}$ where
\begin{equation}\label{simplest}
t_m =2\bar\Phi\left(0.5\bar{\Phi}^{-1}(\alpha/4)\left[\frac{\sqrt{n_m}}{\sqrt{n_\cdot}/M} +\frac{\sqrt{n_\cdot}/M}{\sqrt{n_m}}\right]\right)
\end{equation}
and where $\sqrt{n_\cdot} = \sum_m\sqrt{n_m}$. The WAMDF, with $\alpha = 0.05$, $\lambda = 0.5$, and $\tilde w_m = [w_m^* + 0.1]/[M^{-1} \sum_m (w_m^* + 0.1)]$ to safeguard against impractically small weights, was applied and resulted in 87 discoveries.

These weights utilized were justified as in (\ref{t example}), and by assuming that the average power and prior probability of $H_m$ being false is 1/2.  Specifically, $\mu(a,n_m)/\sigma(a) \propto \sqrt{n}_m$ and leads to approximate power functions as in Example 1 via $\pi_{\gamma_m}(t_m) = \bar\Phi(\bar\Phi^{-1}(t_m/2) - \gamma_m) = \bar\Phi(\bar\Phi^{-1}(t_m/2) - \gamma\sqrt{n_m})$ for $\gamma$ some tuning parameter.  Then, assume $p_m = 0.5$ and $\pi_{\bar\gamma}(t) = \bar\Phi(\bar\Phi^{-1}(t/2) - \bar{\gamma}) = 0.5$.  Approximating the FDR at $t$ when $p_m = 1/2$ and $\pi_{\bar\gamma}(t) =0.5$ with $FDR(t) = 0.5t/[0.5t + (1-0.5)\pi_{\bar\gamma}(t)]$, solving $FDR(t) = \alpha$ and $\pi_{\bar\gamma}(t) = 1/2$ simultaneously gives approximate fixed-$t$ threshold $t = \alpha/[2(1-\alpha)]\approx \alpha/2$ and $\bar{\gamma} = \bar\Phi^{-1}(t/2) \approx \bar{\Phi}^{-1}(\alpha/4)$. Taking the derivative of $\pi_{\gamma_m}(t_m) = \bar{\Phi}(\bar{\Phi}^{-1}(t_m/2) - \gamma_m)$ with respect to $t_m$ and setting it equal to $k/p$ and solving yields $log(k/p) = \bar\Phi^{-1}(t_m/2)\gamma_m - 0.5\gamma_m^2$, and
$$t_m = 2\bar\Phi\left(0.5\gamma_m + \frac{log(k/p)}{\gamma_m}\right). $$
Plugging $\bar{\Phi}^{-1}(t/2)\bar{\gamma} - 0.5\bar{\gamma}^2 = \bar{\gamma}^2 - 0.5\bar{\gamma}^2 = 0.5\bar{\gamma}^2$ in for $log(k/p)$, $\gamma_m = \sqrt{n_m} \gamma$, and $\bar{\gamma} = \gamma \sqrt{n_\cdot}/M$ here, we recover \eqref{simplest}.

These weights need not be asymptotically optimal. However, under \eqref{mixture} this WAMDF it is still $\alpha$-exhaustive (Corollary \ref{asymp opt p}) and simulation studies suggest that it is more efficient than its unweighted version even if these weights are only positively correlated with optimal weights.  The main advantage of this approach is that weights still exploit heterogeneity attributable to the $n_m$'s and are computationally simple.

The fact that weights are so simple allows for a simulation study to gauge the performance of the WAMDF.  In Simulation 5, for each of 1000 replications and $M = 1000$, we sampled $n_m$'s from the $n_m$'s in Table \ref{data} and generate $\theta_m \sim Bernoulli(p)$ and $Z_m\sim N(\gamma \sqrt{n_m} \theta_m, 1)$.  We considered all $p$-$\gamma$ combinations where $\gamma$ is chosen so that $\bar{\gamma} = \gamma M^{-1}\sqrt{n_\cdot} = 1.75, 2, 2.25$ and $p = 0.2, 0.5, 0.8$. For each replication and setting, the unweighted adaptive MDF was applied and the WAMDF was applied with $\alpha = 0.01, 0.05, 0.10$. The average FDP and CDP were recorded over the 1000 replications for each setting. Detailed results are in the supplemental materials.

Although the weights were based on some simplifying assumptions, the WAMDF was more powerful than in its unweighted counterpart even if $p = 0.2$ or $p = 0.8$, as long as the CDP was at least 0.2.  Further, the average FDP was always less than $\alpha$.  Our simplifying assumptions were made merely because they were the least informative and lead to the simplest weights. The resulting WAMDF outperformed its unweighted counterpart in most scenarios, and any other weighting schemes could be considered.  We leave more extensive methodological development of this nature as future work. The goal here was to demonstrate that the theory developed in the previous sections will be useful in developing WAMDFs that are simple and practical.

\section{Concluding remarks} \label{sec 9}

Efforts to improve upon  the original BH procedure have focused on controlling the FDR at a level nearer $\alpha$, or exploiting heterogeneity across tests.  We have combined these objectives using a weighted decision theoretic framework and showed that the resulting procedure is more powerful than procedures which only consider of them.  We have provided weighted adaptive multiple decision functions that satisfy the $\alpha$-exhaustive optimality criterion considered in \cite{Fin09}, but allow for further improvements via an optimal weighting scheme that incorporates heterogeneity.

The proposed WAMDFs are robust, and coupled with the flexibility of the WAMDF framework, allow for multiple testing procedures that exploit heterogeneity to be developed in a wide variety of settings, even when the nature and degree of heterogeneity is not fully observable or known.

The finite sample and  asymptotic results here are valid under independence and weak dependence conditions, respectively. \cite{BenYek01} showed that the unweighted unadaptive BH procedure provides (finite) FDR control under a certain positive dependence structure, and that it can be modified to control the FDR for arbitrary dependence.  One could study the performance of weighted adaptive procedures under other types of dependence, but obtaining finite sample analytical results for adaptive MDFs then appears to be very challenging.  See \cite{BlaRoq09, RoqVil11} for some results.  As for large sample results, \cite{Fan12} and \cite{DesSto12} provide techniques for transforming test statistics so that they are weakly dependent, and our WAMDF framework facilitates weak dependence.  Perhaps these transformed test statistics could be used in conjunction with our WAMDF, but this requires further development.

Other  estimators for $M_0$ could be considered.  For example, it is possible to use the unweighted estimator from \cite{Sto04} in the WAMDF, or to consider data dependent choices of the tuning parameter $\lambda$ as in \cite{LiaNet12}.  A more detailed assessment of $\hat{M}_0(\lambda\mat{w})$, though warranted, is beyond the scope of the present work.

\section*{Supplementary Materials}
Additional details and further discussion regarding simulations referred to in Sections \ref{sec 7} and \ref{sec 8}, and proofs of theorems, lemmas, and corollaries in Sections \ref{sec 3}, \ref{sec 5}, and \ref{sec 6} are in the supplemental materials.
\section*{Acknowledgements}
The author would like to thank referees and the AE for helpful suggestions.


\end{document}




\renewcommand{\baselinestretch}{2}

\markright{ \hbox{\footnotesize\rm Statistica Sinica: Supplement
}\hfill\\[-13pt]
\hbox{\footnotesize\rm
}\hfill }

\markboth{\hfill{\footnotesize\rm  Joshua D Habiger } \hfill}
{\hfill {\footnotesize\rm Adaptive FDR Control for Heterogeneous Data} \hfill}

\renewcommand{\thefootnote}{}
$\ $\par \fontsize{12}{14pt plus.8pt minus .6pt}\selectfont


 \centerline{\large\bf Adaptive False Discovery Rate Control for Heterogeneous Data}
\vspace{.25cm}
 \author{Joshua D Habiger}
\vspace{.4cm}
 \centerline{\it Department of Biostatistics \\ Kansas University Medical Center}
\vspace{.55cm}
 \centerline{\bf Supplementary Material}
\vspace{.55cm}
\fontsize{9}{11.5pt plus.8pt minus .6pt}\selectfont
\noindent
 This supplemental material contains proofs of theorems, lemmas and corollaries in \textit{Adaptive False Discovery Rate Control for Heterogeneous Data} and more details on simulation studies.
\par

\setcounter{section}{0}
\setcounter{equation}{0}
\def\theequation{S\arabic{section}.\arabic{equation}}
\def\thesection{S\arabic{section}}

\fontsize{12}{14pt plus.8pt minus .6pt}\selectfont

\section{Proofs of results in Section 3 \hspace{.1in}}

\noindent\textsc{Proof of Theorem 1.}
Setting up the Lagrangian
$$L(\mat{t},k) = \pi(\mat{t},\mat{p},\bm{\gamma}) - k\left[\left(\sum_{m\in\mathcal{M}}t_m\right) - Mt\right]$$
and taking derivative with respect to $t_m$ and setting it equal to 0 yields equation (5).   Now, recall we denote the solution to equation (5) with respect to $t_m$ by $t_m(k/p_m,\gamma_m)$ and observe $k\mapsto t_m(k/p_m,\gamma_m)$ is continuous and strictly decreasing in $k$ with $\lim_{k\rightarrow \infty}t_m(k/p_m,\gamma_m) = 0$ and $\lim_{k\downarrow 0}t_m(k/p_m,\gamma_m) = 1$ by (A1).  Thus, $\bar{t}_M(k,\mat{p},\bm{\gamma}) = M^{-1}\sum_{m\in\mathcal{M}}t_m(k/p_m,\gamma_m)$ is continuous and strictly decreasing in $k$ with  $\lim_{k\rightarrow \infty}\bar{t}_M(k,\mat{p},\bm{\gamma}) = 0$ and $\lim_{k\downarrow 0}\bar{t}_M(k,\mat{p},\bm{\gamma}) = 1$.  Hence, there exists a unique $k$ satisfying $\bar{t}_{M}(k,\mat{p},\bm{\gamma}) = t$ for any $t\in(0,1)$ and hence a unique collection $[t_m(k/p_m,\gamma_m),m\in\mathcal{M}]$.

To show that the solution is a maximum, it suffices to show that the sequence of the determinants of the principal minors of the bordered hessian matrix, evaluated at the solution, alternates in sign.  The $j$th principle minor of the bordered Hessian matrix is
$$\mat{H}_{j} = \left[\begin{array}{cc}0 & \mat{1}_j^T\\ \mat{1}_j& \mat{D}_j \end{array}\right]$$
where $\mat{D}_j$ is a $j\times j$ diagonal matrix with diagonal elements $d_m = \pi_{\gamma_m}''(t_m)$ and $\bm{1}_j$ is a vector of $1$s of length $j$.  Note that $d_m<0$ at the solution due to (A1).  Now, observe that $|\mat{H}_1| = -1 <0$ where $|\cdot|$ denotes the determinant, and for $j\geq 2$, we have the recursive relation
\begin{equation}
|\mat{H}_j| = d_j|\mat{H}_{j-1}| + (-1)^j\prod_{m=1}^{j-1}(-d_m).\label{H}
\end{equation}
Because $d_j<0$, for $j$ an even (odd) integer each expression on the righthand side of equation (\ref{H}) is positive (negative).  Hence $\{|\mat{H}_j|, j = 1, 2, ...\}$ alternates in sign. $\|$ \\

\lhead[\footnotesize\thepage\fancyplain{}\leftmark]{}\rhead[]{\fancyplain{}\rightmark\footnotesize\thepage}

\noindent\textsc{Proof of Theorem 2.} Observe that $\widetilde{FDP}_M(\mat{t}(k,\mat{p},\bm{\gamma}))$ is continuous in $k$ under (A1).  Hence, it suffices to show that $\widetilde{FDP}_M(\mat{t}(k,\mat{p},\bm{\gamma}))$ takes on values $0$ and $1-p_{(M)}$ by the Mean Value Theorem.  We first show that $$\lim_{k\downarrow 0}\widetilde{FDP}_M(\mat{t}(k,\mat{p},\bm{\gamma}))\geq 1 - p_{(M)}.$$  Observe that (A1) implies $t_m\leq \pi_{\gamma_m}(t_m)\leq 1$ for $t_m\in[0,1]$ and hence
\begin{equation}\label{ineq FDRM}
\bar{t}_M(k,\mat{p},\bm{\gamma}) \leq  \bar{G}_M(\mat{t}(k,\mat{p},\bm{\gamma})) \leq M^{-1}\left[\sum_{m\in\mathcal{M}}(1-p_m)t_m(k/p_m,\gamma_m) + p_m\right].
\end{equation}
The inequalities in (\ref{ineq FDRM}) imply
\begin{eqnarray*}
\widetilde{FDP}_M(\mat{t}(k,\mat{p},\bm{\gamma})) & = & \frac{\sum_{m\in\mathcal{M}}[1 - G_m(t_m(k/p_m,\gamma_m))]}{\sum_{m\in\mathcal{M}}[1-t_m(k/p_m,\gamma_m)]}\frac{\bar{t}_M(k,\mat{p}, \bm{\gamma})}{\bar G_M(\mat{t}(k,\mat{p},\bm{\gamma}))}\\
&\geq& \frac{\sum_{m\in\mathcal{M}}[1-p_m][1-t_m(k/p_m,\gamma_m)]}{\sum_{m\in\mathcal{M}}[1-t_m(k/p_m,\gamma_m)]}\frac{\bar{t}_M(k,\mat{p}, \bm{\gamma})}{\bar G_M(\mat{t}(k,\mat{p},\bm{\gamma}))}\\
& \geq & \left(1-p_{(M)}\right)\frac{\bar{t}_M(k,\mat{p},\bm{\gamma})}{\bar G_M(\mat{t}(k,\mat{p},\bm{\gamma}))},
\end{eqnarray*}
which converges to $1 - p_{(M)}$ as $k\downarrow 0$ if \begin{equation}
\label{ratio}
\frac{\bar{t}_M(k,\mat{p},\bm{\gamma})}{\bar G_M(\mat{t}(k,\mat{p},\bm{\gamma}))} \rightarrow 1
\end{equation}
as $k\downarrow0$.  To verify (\ref{ratio}), observe that $\lim_{k\downarrow 0}t_m(k/p_m,\gamma_m)= 1$ by (A1) and hence $\bar{t}_M(k, \mat{p},\bm{\gamma}) \rightarrow 1$ as $k\downarrow0$.  This, along with the inequalities in (\ref{ineq FDRM}), imply $\bar G_M(\mat{t}(k,\mat{p},\bm{\gamma}))\rightarrow 1$ as $k\downarrow 0$ and hence (\ref{ratio}) is satisfied.

Now if
\begin{equation} \label{c1}
\lim_{k\rightarrow \infty}\frac{\bar{t}_M(k,\mat{p},\bm{\gamma})}{\bar G_M(\mat{t}(k,\mat{p},\bm{\gamma}))}=0,
\end{equation} then by the first inequality in (\ref{ineq FDRM}) and the definition of $\widetilde{FDP}_M(\mat{t}(k,\mat{p},\bm{\gamma}))$
$$\widetilde{FDP}_M(\mat{t}(k,\mat{p},\bm{\gamma}))\leq \frac{\bar{t}_M(k,\mat{p},\bm{\gamma})}{\bar G_M(\mat{t}(k,\mat{p},\bm{\gamma}))}\rightarrow 0$$
as $k\rightarrow \infty$ and the proof would be complete. Hence, it suffices to show (\ref{c1}).
But because $t_m(k/p_m,\gamma_m)\downarrow 0$ as $k\rightarrow \infty$ and $\pi_{\gamma_m}'(t_m)\rightarrow \infty$ as $t_m\downarrow 0$ by (A1), we have $$\frac{\pi_{\gamma_m}(t_m(k/p_m,\gamma_m))}{t_m(k/p_m,\gamma_m)}\rightarrow \infty$$ as $k \rightarrow \infty$ by H{\^o}pital's rule.  Further for $a_m$, $b_m$,  $m\in\mathcal{M}$ any positive constants,
$$\frac{\sum_{m\in\mathcal{M}}a_m}{\sum_{m\in\mathcal{M}}b_m} = \sum_{m\in\mathcal{M}}\frac{a_m}{b_m}\left(\frac{b_m}{\sum_{m\in\mathcal{M}}b_m}\right)\geq \min\left\{\frac{a_m}{b_m}, m\in\mathcal{M}\right\}.$$
Hence,
$$A(k)\equiv \frac{\sum_{m\in\mathcal{M}}\pi_{\gamma_m}(t_m(k/p_m,\gamma_m))}{\sum_{m\in\mathcal{M}}t_m(k/p_m,\gamma_m)}\geq \min\left\{\frac{\pi_{\gamma_m}(t_m(k/p_m,\gamma_m))}{t_m(k/p_m,\gamma_m)}, m\in\mathcal{M}\right\}\rightarrow \infty$$
as $k\rightarrow \infty$ which implies
\begin{eqnarray*}
\frac{\bar{t}_M(k,\mat{p},\bm{\gamma})}{\bar{G}_M(\mat{t}(k,\mat{p},\bm{\gamma}))}&=& \left[\sum_{m\in\mathcal{M}}\frac{(1-p_m)t_m(k/p_m,\gamma_m)}{\bar{t}_M(k,\mat{p},\bm{\gamma})} + \frac{p_m\pi_{\gamma_m}(t_m(k/p_m,\gamma_m))}{\bar{t}_M(k,\mat{p},\bm{\gamma})}\right]^{-1}\\
&\leq&\left[M(1-p_{(M)}) + M p_{(1)}A(k)\right]^{-1}\rightarrow 0
\end{eqnarray*}
as $k\rightarrow \infty$, where $p_{(1)} \equiv \min\{\bm{p}\}$. $\|$

\section{Proofs of results in Section 5 \hspace{.1in}}
\setcounter{equation}{0}

\noindent\textsc{Proof of Lemma 1.}
The proof is based on the reverse martingale techniques in \cite{Sto04} for verifying FDR control in the unweighted adaptive setting and the proof of Theorem 9 in \cite{Pen11} for verifying FDR control in the weighted unadaptive setting.  The main additional challenge in the weighted setting is in verifying that $V(t\mat{w})$ is a martingale.  Also, the final proportion of the proof in \cite{Sto04} uses the fact that $V(t\bm{1})$ has a binomial distribution.  Here, $V(t\mat{w})$ is the sum of heterogeneous Bernoulli random variables and hence a Hoeffding inequality is necessary.

First, observe that because $u = \lambda$, $0\leq \hat{t}_\alpha^\lambda\leq \lambda$ by definition and that if $\hat{t}_\alpha^\lambda = 0$ then $FDR(\hat{t}_\alpha^\lambda\mat{w}) = 0$ trivially.  Let us focus on the setting where $0<\hat{t}_\alpha^\lambda\leq \lambda$.  By the definition of $\hat{t}_\alpha^\lambda$, $\widehat{FDP}^\lambda(\hat{t}_\alpha^\lambda\mat{w})\leq \alpha$ which gives $R(\hat{t}_\alpha^\lambda)\geq \hat{M}_0(\lambda\mat{w}) \hat{t}_\alpha^\lambda/\alpha$ by the definition of $\widehat{FDP}^\lambda(\cdot)$.  Hence,
\begin{eqnarray} 
FDR(\hat{t}_\alpha^\lambda\mat{w}) &=& E\left[\frac{V( \hat{t}_{\alpha}^\lambda\mat{w})}{R(\hat{t}_\alpha^\lambda\mat{w})}\right]\leq  E\left[\alpha\frac{1}{\hat{M}_0(\lambda\mat{w})}\frac{V(\hat{t}_\alpha^\lambda\mat{w})}{\hat{t}_\alpha^\lambda}\right]\\ &\leq& \label{FDReqn} E\left[\frac{\alpha}{\hat{M}_0(\lambda\mat{w})}\frac{V(\lambda\mat{w})}{\lambda}\right],
\end{eqnarray}
where (\ref{FDReqn}) is established as follows.  First, if $\hat{t}_\alpha^{\lambda} = \lambda$, it is true trivially. Now suppose that $0<\hat{t}_\alpha^{\lambda}<\lambda$. Define filtration $\mathcal{F}_t = \sigma\{\bm{\delta}(s\mat{w}), 0 < t\leq s \leq \lambda\}$ and observe that $\hat{t}_\alpha^{\lambda}$ is a stopping time with respect to $\mathcal{F}_{t}$ (with time running backwards).  Further, for $0< t\leq \lambda$,
$V(t\mat{w})/t$ is a reverse martingale with respect to $\mathcal{F}_{t}$.  This can be verified by noting that for $0 < s\leq t \leq \lambda$
\begin{eqnarray*}
E\left[\frac{V(s\mat{w})}{s}|\mathcal{F}_t\right] &=& \frac{1}{s}\sum_{m\in\mathcal{M}_0} E\left[\delta_m(sw_m)|\mathcal{F}_t\right] \\
&=&\frac{1}{s}\sum_{m\in\mathcal{M}_0} \delta_m(tw_m)E[\delta_m(sw_m)|\delta_m(tw_m)=1, \mathcal{F}_t] \\
&=&\frac{1}{s}\sum_{m\in\mathcal{M}_0} \delta_m(tw_m)E[\delta_m(sw_m)|\delta_m(tw_m)=1]\\
&=&\frac{1}{s}\sum_{m\in\mathcal{M}_0} \delta_m(tw_m)\frac{sw_m}{tw_m}\\
&=&\sum_{m\in\mathcal{M}_0}\frac{\delta_m(tw_m)}{t} \\
&=&\frac{V(t\mat{w})}{t},
\end{eqnarray*}
where first equality follows by the definition of $V(\cdot)$ and the second is due to the fact that $\delta_m(sw_m) = 0$ if $\delta_m(tw_m) = 0$ by the NS assumptions. The third equality is satisfied due to (A3).  The forth equality follows by the fact that $\Pr([\delta_m(sw_m)=1]\cap[\delta_m(tw_m)=1]) = E[\delta_m(sw_m)] = sw_m$ for $m\in\mathcal{M}_0$ and $s\leq \lambda$ under the NS assumptions and under (A2). The forth and fifth equalities follow from some algebra and the definition of $V(\cdot)$, respectively.  Hence, by the law of iterated expectation and the Optional Stopping Theorem \citep{Doo53}
\begin{eqnarray*}
E\left[\frac{\alpha}{\hat{M}_0(\lambda\mat{w})}\frac{V(\hat{t}_\alpha^\lambda\mat{w})}{\hat{t}_\alpha^\lambda}\right]& = &
E\left\{\frac{\alpha}{\hat{M}_0(\lambda\mat{w})}E\left[\frac{V(\hat{t}_\alpha^\lambda\mat{w})}{\hat{t}_\alpha^\lambda}|\mathcal{F}_\lambda\right]\right\}\\
&=&E\left[\frac{\alpha}{\hat{M}_0(\lambda\mat{w})}\frac{V(\lambda\mat{w})}{\lambda}\right].
\end{eqnarray*}
Hence, we have established (\ref{FDReqn}).

Now, note that $M - R(\lambda\mat{w}) = M_0 - V(\lambda\mat{w}) + [M_1 - \sum_{\mathcal{M}_1}\delta_m(\lambda w_m)]\geq M_0 - V(\lambda\mat{w})$. Further observe that $V\mapsto V(\lambda\mat{w})/[M_0-V(\lambda\mat{w}) + 1]$ is convex.  Hence, by Theorem 3 in \cite{Hoe56} and with $p = \lambda\bar{w}_0$
\begin{eqnarray*}
E\left[\frac{V(\lambda\mat{w})}{M_0 - V(\lambda\mat{w})+1}\right] &\leq& \sum_{k=0}^{M_0} \frac{k}{M_0 -k+1}\left(\begin{array}{c}M_0\\k\end{array}\right)p^k(1-p)^{M_0 - k}\\
&=&\frac{p}{1-p}(1-p^{M_0}).
\end{eqnarray*}
The last equality follows from basic calculations. Thus,
\begin{eqnarray*}
E\left[\alpha\frac{1}{\hat{M}_0(\lambda\mat{w})}\frac{V(\lambda\mat{w})}{\lambda}\right] & =
&E\left[\alpha\frac{(1-\lambda)}{M - R(\lambda\mat{w})+1}\frac{V(\lambda\mat{w})}{\lambda}\right] \\
&\leq & \alpha\frac{(1-\lambda)}{\lambda}E\left[\frac{V(\lambda\mat{w})}{M_0 - V(\lambda\mat{w})+1}\right]\\
&=&\alpha\frac{(1-\lambda)}{\lambda} \frac{p}{1-p}(1-p^{M_0}). \\
\end{eqnarray*}
The result follows by plugging $\lambda\bar{w}_0$ in for $p$ in the last expression. $\|$ \\

\noindent\textsc{Proof of Theorem 3.}
From Lemma 1 and because $\bar{w}_0\leq w_{(M)}$,
\begin{equation*}
FDR(\hat{t}_{\alpha^*}^\lambda\mat{w}) \leq\alpha^*\bar{w}_0\frac{1-\lambda}{1-\lambda\bar{w}_0} = \alpha
\frac{\bar{w}_0}{w_{(M)}}\frac{1-\lambda w_{(M)}}{1-\lambda\bar{w}_0} \leq \alpha. \|
\end{equation*}

\section{Proofs of results in Section 6}
\setcounter{equation}{0}

Before proving Theorem 4 the following Glivenko-Cantelli-type Lemma regarding the uniform convergence of the FDP estimators and the FDP is presented.  For similar results in the unweighted adaptive setting see Theorem 6 in \cite{Sto04} or see the proof of Theorem 2 in \cite{Gen06} for the weighted, but unadaptive, setting.  See also \cite{Fin09, Fan12} and references therein for additional results on almost sure convergence of the FDP.
\begin{lemma1} Fix $\delta \in (0,u)$.  Under (A2) and (A4) - (A6),
$$\sup_{\delta\leq t \leq u}|\widehat{FDP}_M^{0}(t\mat{w}_M) - FDP_{\infty}^0(t)|\rightarrow 0,$$
$$\sup_{\delta\leq t\leq u}|\widehat{FDP}_M^{\lambda}(t\mat{w}_M) - FDP_\infty^\lambda(t)|\rightarrow 0,$$ and
$$\sup_{\delta\leq t\leq u}|FDP_M(t\mat{w}_M) - FDP_\infty(t)|\rightarrow 0$$ almost surely.
\end{lemma1}
\begin{proof}
In what follows we denote $\max\{R(t\mat{w}_M), 1\}$ by $R(t\mat{w}_M)$ for short.  Observe $R(t\mat{w}_M)$ is nondecreasing in $t$ almost surely by the NS assumptions and $G(t)$ is strictly increasing in $t$ for $0\leq t\leq u$ by (A6).  Hence, for any $\delta\in (0,u),$
\begin{eqnarray*}\sup_{\delta\leq t\leq u}\left|\widehat{FDP}_M^{0}(t\mat{w}_M) - FDP_{\infty}^0(t)\right|&=&
\sup_{\delta\leq t\leq u}\left|\frac{t}{R(t\mat{w}_M)/M} -\frac{t}{G(t)}\right|\\ \\
 &&\hspace{-2.5in}=
\sup_{\delta\leq t\leq u}\left|\frac{t\left[G(t) - R(t\mat{w}_M)/M\right]}{G(t)R(t\mat{w}_M)/M}\right|
 \leq \frac{\sup_{\delta \leq t\leq u}\left|G(t) - R(t\mat{w}_M)/M\right|}{G(\delta)R(\delta\mat{w}_M)/M}\\
&&\hspace{-2.5in} \rightarrow
\frac{0}{G(\delta)^2} = 0 \\
\end{eqnarray*}
almost surely, where the numerator converges to 0 by the Glivenko-Cantelli Theorem and the denominator converges to $G(\delta)^2$ by (A4) and the Continuous Mapping Theorem.

As for the second claim, denote $\hat{a}_{0,M}^\lambda = \hat{M}_0(\lambda_M\mat{w}_M)/M$ and $a_{0,\infty}^\lambda = [1-G(\lambda)]/[1-\lambda]$.  Additionally observe
%
$$\widehat{FDP}_M^\lambda(t\mat{w}_M) = \hat{a}_{0,M}^\lambda \widehat{FDP}_M^0(t\mat{w})\mbox{ \hspace{.1in} and \hspace{.1in} } FDP^\lambda_\infty(t) = a_{0,\infty}^\lambda FDP_{\infty}^0(t),$$
%
Then using the triangle inequality
\begin{eqnarray*}
&&
\sup_{\delta\leq t\leq u}\left|\widehat{FDP}^\lambda_M(t\mat{w}_M) - FDP_\infty^\lambda(t)\right| 
=\sup_{\delta\leq t\leq u}\left|\hat{a}_{0,M}^\lambda \widehat{FDP}_M^0(t\mat{w}_M) - a_{0,\infty}^\lambda FDP_{\infty}^0(t)\right| \\
&&\hspace{.1in}\leq \left|\hat a_{0,M}^\lambda - a_{0,\infty}^\lambda\right|\times \sup_{\delta\leq t\leq u}\left|\widehat{FDP}_M^0(t\mat{w}_M)\right| 
+ \hspace{.1in} a_{0,\infty}^\lambda\times \sup_{\delta\leq t \leq u}\left|\widehat{FDP}_M^0(t\mat{w}_M) - \widehat{FDP}_{\infty}^0(t)\right| \\
&& \hspace{.1in} < 2\epsilon  + \epsilon,
 \end{eqnarray*}
where the last inequality is satisfied for all large enough $M$ for any $\epsilon>0$.  To verify the last inequality note that $\hat{a}_{0,M}^\lambda \rightarrow a_{0,\infty}^\lambda$ almost surely by (A2), (A4) and the Continuous Mapping Theorem, and hence $|\hat a_{0,M}^\lambda - a_{0,\infty}^\lambda|<\epsilon$ for all large enough $M$.  Further, for all large enough $M$, $$\sup_{\delta\leq t \leq u} \widehat{FDP}_M^0(t\mat{w}_M)< \sup_{\delta \leq t \leq u} FDP_\infty^0(t)\ + \epsilon \leq 2$$
by the first claim of the Lemma and (A6).  Additionally, $G(\lambda)\geq \lambda$ by (A6) and consequently $a_{0,\infty}^\lambda\leq 1$.  Lastly, $\sup_{\delta\leq t \leq u}|\widehat{FDP}_M^0(t\mat{w}_M) - FDP_{\infty}^0(t)|<\epsilon$ for all large enough $M$ by the first claim of the Lemma.

To prove the third claim, we first show that
\begin{eqnarray}
&&\sup_{\delta\leq t\leq u}\left|FDP_M(t\mat{w}_M) - FDP_\infty(t)\right| \nonumber \\
&&\leq\sup_{\delta\leq t\leq u}\left|\frac{V(t\mat{w}_M)}{R(t\mat{w}_M)} - \frac{a_0\mu_0 t}{R(t\mat{w}_M)/M}\right|
 + \sup_{\delta\leq t\leq u}\left|\frac{a_0\mu_0 t}{R(t\mat{w}_M)/M} - \frac{a_0\mu_0 t}{G(t)}\right| \nonumber\\
&&= \sup_{\delta\leq t\leq u}\frac{M}{R(t\mat{w})}\left|\frac{V(t\mat{w}_M)}{M} - a_0\mu_0 t\right|\label{S6}\\
 &&\hspace{.2in} + a_0\mu_0\sup_{\delta\leq t\leq u}\left|\widehat{FDP}_M^0(t\mat{w}_M) - FDP_\infty^0(t)\right| \label{S7}.
\end{eqnarray}
The inequality is a consequence of the triangle inequality and the definitions of $FDP_\infty(t)$ and $FDP_M(t\mat{w}_M)$.  The expression in (\ref{S6}) is verified by factoring out $R(t\mat{w}_M)/M$ in the first expression on the previous line while the expression in (\ref{S7}) follows from factoring out $a_0\mu_0$ in the second expression and by the definitions of $\widehat{FDP}^0_M(t\mat{w}_M)$ and $FDP_\infty^0(t)$.  Now, the quantity in (\ref{S7}) converges to 0 almost surely because $a_0\mu_0$ is bounded under (A5) and by the first claim of the Lemma.  To show that the first expression converges to 0 almost surely, first note for any $t\in(\delta,u]$, because $R(t\mat{w}_M)$ is nondecreasing in $t$, $R(t\mat{w}_M)/M>G(\delta/2)$ and hence that $$\frac{M}{R(t\mat{w}_M)}< \frac{1}{G(\delta/2)}$$ for all large enough $M$. Hence, if
\begin{equation}\label{S8}
\sup_{\delta\leq t \leq u}\left|\frac{V(t\mat{w}_M)}{M} - a_0\mu_0t\right|\rightarrow 0
\end{equation}
almost surely, then
$$\sup_{\delta\leq t\leq u}\frac{M}{R(t\mat{w})}\left|\frac{V(t\mat{w}_M)}{M} - a_0\mu_0 t\right|\leq \frac{\epsilon}{G(\delta/2)}$$
for all large enough $M$ and the proof would be completed since $\epsilon$ is arbitrary and $\delta$ is fixed.  To show (\ref{S8}), first observe that
$E[V(t\mat{w}_M)]/M_0 = \bar{w}_{0,M} t$ under the NS conditions. Also note that by the triangle inequality
\begin{eqnarray}\nonumber
\sup_{\delta\leq t \leq u}\left|\frac{V(t\mat{w}_M)}{M_0} - \mu_0 t\right|&\leq&
\sup_{\delta\leq t \leq u}\left|\frac{V(t\mat{w}_M)}{M_0} - \bar{w}_{0,M} t\right| + \sup_{\delta\leq t \leq u}\left|\bar{w}_{0,M}t - \mu_0 t\right|\\
&\leq& \sup_{\delta\leq t \leq u}\left|\frac{V(t\mat{w}_M)}{M_0} - \bar{w}_{0,M} t\right| + u\left|\bar{w}_{0,M} - \mu_0 \right| \rightarrow 0 \nonumber
\end{eqnarray}
almost surely, where the first quantity converges to 0 by the Glivenko-Cantelli Theorem and the second quantity converges to 0 because $\bar{w}_{0,M}\rightarrow \mu_0$ almost surely under (A5) and because $u\leq 1$.  Thus, again using the triangle inequality
\begin{eqnarray*}
&&\sup_{\delta\leq t \leq u}\left|\frac{V(t\mat{w}_M)}{M} - a_0\mu_0 t\right| = \sup_{\delta\leq t \leq u}\left|\frac{V(t\mat{w}_M)}{M_0}\left[\frac{M_0}{M} + a_0 - a_0\right] - a_0\mu_0 t\right|\\
&&\leq \left|\frac{M_0}{M} - a_0\right|\sup_{\delta\leq t\leq u}\left|\frac{V(t\mat{w}_M)}{M_0}\right| + a_0\sup_{\delta\leq t\leq u}\left|\frac{V(t\mat{w}_M)}{M_0} - \mu_0 t\right|\rightarrow 0 \end{eqnarray*}
almost surely, where the first quantity converges to 0 because $M_0/M\rightarrow a_0$ almost surely under (A5) and because $V(t\mat{w}_M)/M_0\leq 1$, while the second quantity converges to 0 because $a_0\leq 1$ and $V(t\mat{w}_M)/M_0\rightarrow \mu_0 t$.  Hence we have established (\ref{S8}).
\end{proof}

\noindent\textsc{Proof of Theorem 4.}  Let us first focus on the equalities.  Suppose that $t_{\alpha,\infty}^0<u$.  Then $FDP_{\infty}^0(t_{\alpha,\infty}^0) = \alpha$ by the definition of $t_{\alpha,\infty}^0$ and by (A6).  Additionally due to (A6), for any $\epsilon>0$ there exists a $0<\delta<\epsilon$ such that
$$ FDP_{\infty}^0(t_{\alpha,\infty}^0+\delta) < \alpha + \epsilon. $$
Now, Lemma S1 gives $\widehat{FDP}_M^0(t\mat{w}_M)< FDP_{\infty}^0(t_{\alpha,\infty}^0+\delta)$ for $0
\leq t<t_{\alpha,\infty}^0+\delta$ and all large enough $M$.  Hence,  this and (A6) imply $$\hat{t}_{\alpha,M}^0= \sup\left[0\leq t\leq u:\widehat{FDP}_M^0(t\mat{w}_M)\leq \alpha\right]\leq t_{\alpha,\infty}^0 +\delta<t_{\alpha,\infty}^0+\epsilon$$ for all large enough $M$.  Similar arguments give
$\hat{t}_{\alpha,M}^0>t_{\alpha,\infty}^0 -\epsilon$ for all large enough $M$.  Now if $t_{\alpha,\infty}^0 = u$ then
$$t_{\alpha,\infty}^0-\epsilon \leq \hat{t}_{\alpha,M}^0\leq t_{\alpha,\infty}^0 = u$$
for all large enough $M$.  Hence, $|\hat{t}_{\alpha,M}^0 - t_{\alpha,\infty}^0|<\epsilon$ for all large enough $M$ and we conclude $\hat{t}_{\alpha,M}^0\rightarrow t_{\alpha,\infty}^0$ almost surely.  As for the second equality, $FDP^\lambda_\infty(t) = a_{0,\infty}^\lambda FDP_\infty^0(t)$ is also continuous and strictly increasing by (A6) and consequently identical argument apply.  Thus $\hat{t}_{\alpha,M}^{\lambda} \rightarrow t_{\alpha,\infty}^{\lambda}$ almost surely. 

As for the inequality, note that (A6) implies $\lambda\leq G(\lambda)$ which implies
\begin{equation}\label{a0 ineq}
a_{0,\infty}^\lambda = \frac{1-G(\lambda)}{1-\lambda} \leq  1. \end{equation}
Hence,
\begin{equation}\label{asym FDP ineq}
FDP_{\infty}^\lambda(t) = a_{0,\infty}^\lambda FDP_\infty^0(t)\leq FDP_\infty^0(t)
\end{equation} for every $t\in(0,u]$.  This, (A6) and the definitions of $FDP_\infty^0(\cdot)$, $t_{\alpha,\infty}^0 $ and $t_{\alpha,\infty}^\lambda$ imply $t_{\alpha,\infty}^0 \leq t_{\alpha,\infty}^\lambda$. $\|$\\

\noindent\textsc{Proof of Theorem 5.}
By Lemma S1 and (A6), for $0<s<t\leq u$
\begin{eqnarray*}
FDP_M(t\mat{w}_M) - FDP_M(s\mat{w}_M)&>& \\
&& \hspace{-2in}a_0\mu_0 t/G(t) - a_0\mu_0 s/G(s) - 2\sup_{0\leq t\leq u}\left|FDP_M(t\mat{w}_M) - a_0\mu_0t/G(t)\right|\\
&&\hspace{-1in}\rightarrow  a_0\mu_0[t/G(t) - s/G(s)]> 0
\end{eqnarray*}
almost surely.  Claim (C1) is then a consequence of Theorem 4 and the Continuous Mapping Theorem.  To verify Claims (C2) and (C3), first observe that by the triangle inequality
\begin{eqnarray*}
&&|FDP_M(\hat{t}_{\alpha,M}^\lambda \mat{w}_M) - FDP_\infty(t_{\alpha,\infty}^\lambda)| \\
&\leq&|FDP_M(\hat{t}_{\alpha,M}^\lambda \mat{w}_M) - FDP_{\infty}(\hat t_{\alpha,M}^\lambda)|
+ |FDP_\infty(\hat{t}_{\alpha,M}^\lambda) - FDP_{\infty}(t_{\alpha,\infty}^\lambda)|.
\end{eqnarray*}
The first quantity converges to 0 almost surely by Lemma S1 and the second quantity converges to 0 almost surely by Theorem 4 and the Continuous Mapping Theorem.
Hence, $FDP_M(\hat{t}_{\alpha,M}^\lambda\mat{w}_M)\rightarrow FDP_\infty(t_{\alpha,\infty}^\lambda)$ almost surely.  Thus to prove Claims (C2) and (C3) it suffices to show that
$FDP_\infty(t_{\alpha,\infty}^\lambda)\leq \alpha$ if $\mu_0\leq 1$, with equality when $G(t)$ is a DU distribution with $\mu_0 =1$ and $FDP_{\infty}^\lambda(u)\geq \alpha$.  To show this, consider the following:
\begin{eqnarray*}
FDP_{\infty}(t_{\alpha,\infty}^\lambda) &=& a_0\mu_0\frac{t_{\alpha,\infty}^\lambda}{G(t_{\alpha,\infty}^\lambda)}\\
&\leq& a_0\frac{t_{\alpha,\infty}^\lambda}{G(t_{\alpha,\infty}^\lambda)}\\
&\leq& \frac{1-G(\lambda)}{1-\lambda}\frac{t_{\alpha,\infty}^\lambda}{G(t_{\alpha,\infty}^\lambda)} \\
&=& FDP^\lambda_{\infty}(t_{\alpha,\infty}^\lambda)\\
&\leq& \alpha.
\end{eqnarray*}
The first equality is due to the definition of $FDP_{\infty}(\cdot)$.  The first inequality is satisfied when $\mu_0\leq 1$ and is an equality when $\mu_0 = 1$.  As for the second inequality, note that $G(\lambda)\leq a_0\lambda + 1-a_0$ when $\mu_0\leq 1$ and $G(\lambda) = a_0\lambda + 1-a_0$ under a DU distribution with $\mu_0 = 1$.  Consequently
$$a_0 = \frac{1-[a_0\lambda + 1-a_0]}{1-\lambda} \leq \frac{1 - G(\lambda)}{1-\lambda}$$
when $\mu_0\leq 1$ and the inequality is an equality when $G$ is a DU distribution with $\mu_0 = 1$.  The last equality is satisfied by the definition of $FDP_{\infty}^\lambda(\cdot)$. The last inequality is satisfied by the definition of $t_{\alpha,\infty}^\lambda$ and is an equality when $G$ is a DU distribution with $\mu_0 = 1$ and $FDP_\infty(u)\geq \alpha$ because these conditions imply $FDP_\infty(u) = FDP_\infty^\lambda(u)\geq \alpha$.  That is, $FDP_\infty(u)$ is continuous and monotone and takes on value $\alpha$.  Hence, $FDP_{\infty}(t_{\alpha,\infty}^\lambda)\leq \alpha$ if $\mu_0\leq 1$ with equality if $G$ is a DU distribution with $\mu_0=1$ and $FDP_\infty(u)\geq \alpha$. $\|$\\

\noindent\textsc{Proof of Theorem 6.}
Under the conditions of the theorem
\begin{eqnarray*}
Cov(W_{m,M},\theta_{m,M}) &=& E[W_{m,M}|\theta_{m,M}=1]E[\theta_{m,M}] - E[W_{m,M}]E[\theta_{m,M}]\\
&=& E[\theta_{m,M}](E[W_{m,M}|\theta_{m,M}=1] - 1).
\end{eqnarray*}
Hence, $Cov(W_{m,M},\theta_{m,M})\geq 0$ implies $E[W_{m,M}|\theta_{m,M}=1]\geq 1$ and consequently $E[W_{m,M}|\theta_{m,M}=0]\leq 1$, with equality if $Cov(W_{m,M},\theta_{m,M}) = 0$.  Hence $E[\bar{W}_{0,M}|\bm{\theta}_M\neq \bm{1}_M] = \mu_0 \leq 1$ with equality if $Cov(W_{m,M},\theta_{m,M})=0$.  The result follows because $\bar{W}_{0,M}\rightarrow \mu_0$ almost surely. $\|$ \\

\noindent\textsc{Proof of Corollary 1.}
Observe that $u = 1$ and $\lambda<1$ is fixed.  Hence (A2) is satisfied and (A4) - (A6) are satisfied by the conditions of the Theorem.  Therefore Claim (C1) holds by Theorem 5.  Now, additionally note that $\mu_0 = 1$ if $\mat{w}_M = \bm{1}_M$ and that $FDP_{\infty}(1) = a_0 \geq \alpha$ under the conditions of the Theorem.  Thus Claims (C2) and (C3) hold by Theorem 5. $\|$ \\

Before proving Theorem 7 we provide Lemma S2.  It will be used to verify that optimal weights are weakly dependent so that decision functions satisfy the weak dependence structure defined in (A4) - (A5).  Below, denote $t_0(k) = E[t_m(k/p_m, \gamma_m)]$ and denote $G(t_0(k)) = E[\delta_m(t_m(k/p_m,\gamma_m))]$, where the expectations are taken over all random quantities, i.e. over $(Z_m,\theta_m,p_m,\gamma_m)$ for some fixed $k>0$.  Further, define
$$\widetilde{FDP}_\infty(t_0(k)) = \frac{1-G(t_0(k))}{1-t_0(k)}\frac{t_0(k)}{G(t_0(k))}$$
and $$k^* = \inf\{k:\widetilde{FDP}_\infty(t_0(k)) = \alpha\},$$
and denote $$\tilde{w}_{m,\infty}(k^*, p_m, \gamma_m) = U_m t_m(k^*/p_m,\gamma_m)/t_0(k^*).$$
\begin{lemma2}\label{as delta}
Suppose that $\Pr(p_m\leq 1-\alpha)$.  Under Model 1 and (A1), $k_M^* \rightarrow k^*$ almost surely and
$$\tilde{w}_{m,M}(k_M^*, \mat{p},\bm{\gamma}) \rightarrow \tilde{w}_{m,\infty}(k^*, p_m, \gamma_m)$$ almost surely.
\end{lemma2}
\begin{proof}
Note that $0<\alpha \leq 1 - p_{(M)}$ with probability 1 so that $k_M^*$ is well defined for $M = 1, 2, ...$ by Theorem 2.  Further, observe that $t_m(k^*/p_m,\gamma_m)$, $m = 1, 2, ...$ are i.i.d. continuous random variables taking values in $[0,1]$ under Model 1.  Hence, by the Strong Law of Large numbers $\bar{t}_M(k^*,\mat{p},\bm{\gamma}) \rightarrow t_0(k^*)$ almost surely.  Likewise, $\bar{G}_M(\bm{t}(k^*,\mat{p},\bm{\gamma})\rightarrow G(t_0(k^*))$ almost surely and by the Continuous Mapping Theorem $$\widetilde{FDP}_M(\mat{t}(k^*,\mat{p},\bm{\gamma}))\rightarrow   \widetilde{FDP}_\infty(t_0(k^*))$$ almost surely.  Because further $\widetilde{FDP}_M(\mat{t}(k,\mat{p},\bm{\gamma}))$ and $\widetilde{FDP}_\infty(t_0(k))$ are both continuous in $k$ by (A1), we have from the Continuous Mapping Theorem and the definitions of $k_M^*$ and $k^*$ that $k_M^*\rightarrow k^*$ almost surely.  Thus,
\begin{eqnarray*}
\tilde{\mat{w}}_{m,M}(k_M^*,\mat{p},\bm{\gamma}) &=& U_m\mat{w}_{m,M}(k_M^*,\mat{p},\bm{\gamma})\\
& = & U_m\frac{t_m(k_M^*/p_m,\gamma_m)}{\bar{t}_M(k_M^*,\mat{p},\bm{\gamma})} \\
&\rightarrow&U_m\frac{t_m(k^*/p_m,\gamma_m)}{t_0(k^*)} = \tilde{w}_{m,\infty}(k^*,p_m,\gamma_m)
\end{eqnarray*}
almost surely by the Continuous Mapping Theorem.

\end{proof}

\noindent\textsc{Proof of Theorem 7.}
First we verify (A2).  Observe $\lambda_M = \bar{t}_M(k_M^*,\mat{p},\bm{\gamma})\rightarrow t_0(k^*)$ almost surely by the Strong Law of Large Numbers and the Continuous Mapping Theorem, where recall $t_0(k^*) = E[t_m(k^*/p_m,\gamma_m)]$.  Thus, by the definition of $\tilde{w}_{m,M}$
$$\lim_{M\rightarrow\infty}\tilde{w}_{m,M} = \lim_{M\rightarrow\infty}\frac{U_m t_{m,M}(k_M^*/p_m,\gamma_m)}{\bar{t}_M(k_M^*,\mat{p},\bm{\gamma})}\leq \frac{1}{t_0(k^*)}$$ almost surely, where the last inequality is due to the Continuous Mapping Theorem, Lemma (S2) and because $U_mt_m(k_M^*,\mat{p},\bm{\gamma})\leq 1$ almost surely by construction.  That is, (A2) is satisfied with $\lambda = u = 1/t_0(k^*)$.

Before verifying (A4) - (A6) we introduce some notation. Denote
$$G^{k^*}(t) = E[\delta_m(t\tilde{w}_{m,\infty}(k^*,p_m,\gamma_m))]$$
where the expectation is taken over all random quantities, i.e. taken over $(Z_m, \theta_m,p_m,\gamma_m,U_m)$.  Further we sometimes suppress $\mat{p}$ and $\bm{\gamma}$ and write $\tilde{w}_{m,\infty}(k^*) = \tilde{w}_{m,\infty}(k^*,p_m,\gamma_m)$, $\tilde{w}_{m,M}(k^*) = \tilde{w}_{m,M}(k^*,\mat{p},\bm{\gamma})$ and $\tilde{\mat{w}}_M(k^*) = [\tilde{w}_{m,M}(k^*), m\in\mathcal{M}]$.  Further, denote $\tilde{\mat{w}}_\infty(k^*) = [\tilde{w}_{m,\infty}(k^*), m\in\mathcal{M}]$.

Now consider (A4). Observe that $\delta_m(t\tilde{w}_{m,\infty}(k^*)), m = 1, 2, ...$ are i.i.d. Bernoulli random variables with success probability $G^{k^*}(t)$ under Model 1 so that
$$\frac{R(t\tilde{\mat{w}}_{\infty}(k^*))}{M} = \frac{\sum_{m\in\mathcal{M}}\delta_m(t\tilde{w}_{m,\infty}(k^*))}{M}\rightarrow G^{k^*}(t)$$ almost surely by the Strong Law of Large Numbers.  Further, by the NS assumptions, Lemma S2, and because $G^{k^*}(t)$ is continuous, we have that for any $\epsilon>0$ there exists an $\epsilon'>0$ such that
\begin{eqnarray*}
\frac{R(t\tilde{\bm{w}}_M(k^*_M))}{M} &=& \frac{\sum_{m\in\mathcal{M}}\delta_m(t\tilde{w}_{m,M}(k_M^*))}{M}<\frac{\sum_{m\in\mathcal{M}}\delta_m(t[\tilde{w}_{m,\infty}(k^*)+\epsilon'])}{M} \\ &<& G^{k^*}(t+t\epsilon')<G^{k^*}(t) + \epsilon
\end{eqnarray*}
for all large enough $M$.  Similar arguments give
$$
\frac{R(t\tilde{\bm{w}}_M(k^*_M))}{M}> G^{k^*}(t) - \epsilon
$$
for all large enough $M$.  Hence, $R(t\tilde{\mat{w}}_{M}(k^*))/M \rightarrow G^{k^*}(t)$ almost surely.  Then because $k_M^*\rightarrow k^*$ almost surely by Lemma S2, $R(t\tilde{\mat{w}}_{M}(k_M^*))/M \rightarrow G^{k^*}(t)$ almost surely by the Continuous Mapping Theorem.

As for (A5), recall the NS conditions give $E[\delta_m(t_m)|\theta_m = 0] = t_m$.  Hence, taking the expectation over all random quantities, we have by the law of iterated expectation
$$E[(1-\theta_m)\delta_m(t \tilde{w}_{m,\infty}(k^*,p_m,\gamma_m))] = a_0\mu_0t,$$
where $a_0 = E[1-\theta_m]$ and $\mu_0 = E[\tilde{w}_{m,\infty}(k^*, p_m,\gamma_m)|\theta_m = 0]$.  Then, arguments akin to those in the proof of (A4) give
$$\frac{V(t\tilde{\mat{w}}_{M}(k_M^*))}{M} = \frac{M_0}{M}\frac{V(t\tilde{\mat{w}}_{M}(k_M^*))}{M_0} \rightarrow a_0 \mu_0 t$$
almost surely.

For (A6), first observe that $G^{k^*}(t) = a_0\mu_0 t + (1-a_0) G_1(t)$ for $t\leq u$,
where $$G_1(t) = E\left[\pi_{{\gamma}_m}(t\tilde{w}_{m,\infty}(k^*,p_m,\gamma_m))\right]$$ and the expectation is taken over all random quantities.  Clearly $t\mapsto G_1(t)$ is concave and twice differentiable because $t\mapsto \pi_{\gamma_m}(t)$ is twice differentiable almost surely by (A1).
To see that $t/G(t)\rightarrow 0$ as $t\downarrow 0$ note that $G_1'(t)\rightarrow \infty$ as $t\downarrow 0$ because $\pi_{\gamma_m}'(t)\rightarrow \infty$ as $t\downarrow 0$ almost surely by (A1).  Hence,
$$\frac{t}{G^{k^*}(t)} = \frac{t}{a_0 \mu_0 t + (1-a_0)G_1(t)}\rightarrow 0$$
as $t\downarrow0$ by an application of H{\^o}ptial's rule.  Clearly, $\lim_{t\uparrow u}t/G^{k^*}(t) \rightarrow u/G^{k^*}(u)$ because $G^{k^*}(t)$ is continuous. To see that $u/G^{k^*}(u) \leq 1$ we establish the following:
\begin{eqnarray*}
G^{k^*}(u)&=& E[\delta_m(u\tilde{w}_{m,\infty}(k^*))]\\
&=& a_0E[\delta_m(u\tilde{w}_{m,\infty}(k^*))|\theta_m=0] \\
&& + (1-a_0)E[\delta_m(u\tilde{w}_{m,\infty}(k^*))|\theta_m=1] \\
&=& a_0E[u\tilde{w}_{m,\infty}(k^*)] + (1-a_0) E[\pi_{\gamma_m}(u\tilde{w}_{m,\infty}(k^*))] \\
&\geq& a_0E[u\tilde{w}_{m,\infty}(k^*)] + (1-a_0)E[u\tilde{w}_{m,\infty}(k^*)] \\
&=& E[u\tilde{w}_{m,\infty}(k^*)] \\
&=&uE[\tilde{w}_{m,\infty}(k^*)] = u.
\end{eqnarray*}
The first equality is by the definition of $G^{k^*}(u)$ while the second equality is due to the law of iterated expectation. The third is a consequence of the definition of $\pi_{\gamma_m}(t)$ and the NS assumptions.  The inequality is satisfied because $\pi_{\gamma_m}(t)\geq t$ almost surely for every $t\in[0,1]$ under (A1). The forth equality is obvious.  As for the fifth, recall $E[U_m|p_m,\gamma_m] = 1$, $\tilde{w}_{m,\infty}(k^*) = U_m w_{m,\infty}(k^*)$ and that $E[w_{m,\infty}(k^*)] = 1$.  Hence, by the law of iterated expectation $E[\tilde{w}_{m,\infty}(k^*)] = E[w_{m,\infty}(k^*)] = 1.$

To verify that $\mu_0\leq 1$ we make use of Theorem 6 and write $W_{m} = w_{m,M}(k_M^*,\mat{p},\bm{\gamma})$ and $\tilde{W}_m = U_m W_m $ for short.  First let us focus on $Cov(W_m,\theta_m)$.   From the law of iterated expectation,
\begin{equation}\label{cov}
Cov(W_{m}, \theta_m) = E[Cov(W_{m},\theta_m|p_m)] + Cov(E[W_{m}|p_m],p_m).
\end{equation}
Observe that
\begin{eqnarray*}
Cov(W_{m},\theta_m|p_m) & = & E[W_{m}\theta_m|p_m] - E[W_{m}|p_m]E[\theta_m|p_m]\\
 &=& p_mE[W_{m}|p_m] - p_mE[W_{m}|p_m] = 0
\end{eqnarray*}
which implies that the first expectation in (\ref{cov}) is 0. To compute the second expectation, first observe $\pi_{\gamma_m}'(t_m)$ is continuous and strictly decreasing and hence the solution to $\pi_{\gamma_m}'(t_m) = a$, denoted $t_m(a,\gamma_m)$, is continuous and strictly decreasing in $a$ almost surely by (A1).  Hence $t_m(k_M^*/p_m, \gamma_m)$ is strictly increasing and continuous in $p_m$ almost surely.  Thus,
$$E[W_{m}|\mat{p},\bm{\gamma}] = E\left[M\frac{t_m(k_M^*/p_m,\gamma_m)}{t_m(k_M^*/p_m,\gamma_m) + \sum_{j\neq m}t_j(k_M^*/p_j,\gamma_j)}\bigg|\mat{p},\bm{\gamma}\right]$$ is also increasing in $p_m$ almost surely because the function $x/(x + a)$ for $a$ any positive constant is increasing in $x$ for $x>0$.  This implies $E[W_{m}|p_m]$ is also increasing in $p_m$ almost surley which implies $Cov(E[W_{m}|p_m],p_m)\geq 0$. As for $\tilde{W}_{m} = U_mW_{m}$,
$$Cov(\tilde{W}_m,\theta_m) = E[Cov(U_mW_m,\theta_m|W_m)] + Cov(E[U_mW_m|W_m], E[\theta_m|W_m])$$
by the law of iterated expectation.  But $$E[Cov(U_mW_m,\theta_m|W_m)] = E[W_mCov(U_m,\theta_m|W_m)] = 0$$ because $Cov(U_m,\theta_m|W_m)$ is 0 by construction.  Additionally, $$Cov(E[U_mW_m|W_m], E[\theta_m|W_m]) =Cov(W_m,E[\theta_m|W_m])\geq 0$$ because $Cov(W_m,\theta_m)\geq 0$.  Hence, $Cov(\tilde{W}_m,\theta_m)\geq 0$ and thus, by Theorem 6, $\mu_0\leq 1$.  $\|$ \\

\noindent\textsc{Proof of Theorem 8.}
First recall from the proof of Theorem 7 (where here we take $U_m = 1$ almost surely for every $m$) that $\lambda_M = \bar{t}_M(k_M^*)\rightarrow t_0(k^*)$ Hence,  we have
$$FDP_{\infty}^{\lambda}(t) = \frac{1 - G^{k^*}(t_0(k^*))}{1 - t_0(k^*)}\frac{t}{G^{k^*}(t)}.$$
Further observe that because $t/G^{k^*}(t)$ is strictly increasing by (A6), then $t_0(k^*) = t_{\alpha,\infty}^\lambda$ by the definition of $t_{\alpha,\infty}^\lambda$.  Hence $\bar{t}_M(k_M^*)\rightarrow t_0(k^*) = t_{\alpha,\infty}^\lambda$ almost surely. $\|$\\


\noindent\textsc{Proof of Corollary 2.}
First observe that $Cov(w_{m,M},\theta_{m,M}) = 0$ and hence $\mu_0 = 1$ by Theorem 6.  It therefore suffices to show that (A4) - (A6) are satisfied. But $\delta_m(tw_{m,M})$, $m = 1, 2, ...$ are i.i.d. Bernoulli random variables under Model 1 and the conditions of the Theorem.  Hence,
$R(t\mat{w}_M)/M \rightarrow G(t)$ for $G(t) = E[\delta_m(t\mat{w}_{m,M})]$ almost surely by the Strong Law of Large Numbers and (A4) is satisfied.  Likewise $(1-\theta_{m,M})\delta_m(tw_{m,M})$, $m = 1, 2, ...$ are i.i.d. random variable with mean $a_0 t$ under the NS assumptions and the conditions of the Theorem.  Hence,
$$\frac{V(t\mat{w}_M)}{M} = \frac{1}{M}\sum_{m\in\mathcal{M}}(1-\theta_{m,M})\delta_m(tw_{m,M})\rightarrow a_0 t$$ almost surely by the Strong Law of Large Numbers and (A5) is satisfied.  Condition (A6) is verified using arguments identical to those used in verifying (A6) in the proof of Theorem 7 with $G^{k^*}(t) = G(t)$ and $w_{m,M} = \tilde{w}_{m,\infty}(k^*)$. $\|$ \\

\noindent\textsc{Proof of Corollary 3.}
Observe that $\pi(\mat{t},\mat{p},\bm{\gamma})$ is proportional to  $\pi(\mat{t},\mat{1},\bm{\gamma})$ and hence the maximization of $\pi(\mat{t},\mat{p},\bm{\gamma})$ with respect to $\mat{t}$ is free of $p_m$.  Thus $\tilde{w}_{m,M}(k,\mat{p},\bm{\gamma})$ is independent of $p_m$ and hence independent of $\theta_m$.  The result then follows from Theorems 6 and 7. $\|$

\section{Title of section 4}
\setcounter{equation}{0}

\section{Simulation details}
\subsection{Simulations 1 - 4}

\begin{table}[h!]\center
\caption{\label{simul table} The average CDP (FDP) for the UU, UA,  WU, and WA procedures in Simulations  1 - 4.}
\begin{tabular}{cccccccc}
\multicolumn{4}{c} {Simulation 1}  \\ \cline{1-4}

& a=1 & a=3 & a=5 \\ \cline{2-4}
UU &0.007(0.021) &0.390(0.025) & 0.709(0.025)  \\
WU &0.007(0.021) &0.397(0.025) & 0.731(0.025)  \\
UA &0.007(0.021) &0.437(0.031) & 0.761(0.039)  \\
WA &0.007(0.021) &0.442(0.032) & 0.793(0.039)  \\ \\

 \multicolumn{4}{c} {Simulation 2} \\ \cline{1-4}

& a=1 & a=3 & a=5 \\ \cline{2-4}

UU & 0.007(0.024) & 0.390(0.025) & 0.709(0.025)\\
WU &  0.011(0.008) & 0.434(0.014) & 0.756(0.016)\\
UA & 0.007(0.024) & 0.430(0.031) & 0.761(0.041) \\
WA & 0.011(0.008) & 0.473(0.018) & 0.814(0.026) \\ \\

\multicolumn{4}{c} {Simulation 3} \\ \cline{1-4}

&a=1 & a=3 & a=5\\ \cline{2-4}

UU & 0.007(0.023)& 0.391(0.025) & 0.709(0.025)\\
WU & 0.013(0.007) & 0.404(0.015) & 0.719(0.016) \\
UA & 0.007(0.023)& 0.430(0.031) & 0.757(0.039)  \\
WA & 0.013(0.007)& 0.439(0.019) & 0.774(0.027) \\ \\

\multicolumn{4}{c} {Simulation 4} \\ \cline{1-4}
&a=1 & a=3 & a=5 \\ \cline{2-4}
UU & 0.007(0.025)& 0.391(0.025) & 0.710(0.025)\\
WU & 0.006(0.023) & 0.354(0.025) & 0.682(0.025) \\
UA & 0.007(0.025)& 0.425(0.030) & 0.756(0.039) \\
WA & 0.006(0.023)& 0.387(0.030) & 0.727(0.039) \\ \hline \\
\end{tabular}
\end{table}

In Simulation 1, observe that the FDP is increasing in $a$ for both adaptive procedures.   For example the FDP of each procedure is $0.021$ when $a=1$ but is $0.039$ when $a = 5$.  This is to be expected as both adaptive procedures are $\alpha$-exhaustive (see Corollaries 1 and 3) and hence we should expect the FDP to be near 0.05 in high power settings, i.e. for large $a$. Additionally, the largest gain in power (in terms of the average CDP) of the weighted adaptive procedure over the unweighted adaptive procedure occurs when effect sizes are most heterogeneous.  When $a = 5$ the average CDP of the WA procedure is $0.793$ while the average CDP of the UA procedure is $0.761$.  When data are homogeneous (a = 1), the CDPs of the procedures are identical.

In Simulation 2, data generating mechanisms are even more heterogeneous as now the $p_m$s also vary. General conclusions regarding the CDP are the same, with the advantages  of the weighted procedures over their unweighted counterparts being more pronounced. For example, the average CDP of the WAMDF for $\gamma_m\stackrel{i.i.d.}\sim Un(1,5)$ increased from 0.793 to 0.814 when allowing $p_m$s to vary, while for the UA procedure the CDP is still 0.761. We also observe that for $a = 5$ the average FDP of the WA procedure is only $0.026$ while the average FDP of the UA procedure is closer to 0.05; it is 0.039.  This is to be expected because, even though the WAMDF will dominate the UA procedure in terms of the average CDP, the UA procedure is $\alpha$-exhaustive while the WAMDF need not be in this setting.

Now consider non-optimal weights in Simulations 3 and 4.  \cite{RoeWas09} concluded that, in the unadaptive setting (the UU and WU procedures), weighted MDFs are robust with respect to weight misspecification in that they generally yield about as many or more rejected null hypotheses as unweighted procedures as long as weights are ``reasonably guessed'' and yield slightly less rejected null hypotheses when weights are poorly guessed. Simulations 3 and 4 confirm their results and further illustrates that the robustness property applies to adaptive procedures.  For example, comparing the unadaptive procedures in Simulation 3, we see that the average CDP of the WU(UU) procedures are 0.013(0.007), 0.404(0.391), and 0.719(0.709) for $a = 1, 3, 5$, respectively.  The average CDP of the WA(UA) procedure is 0.013(0.007), 0.439(0.430), and 0.774(0.757), for $a = 1, 3, 5$, respectively.  That is, when weights are positively correlated with optimal weights, weighted procedures still perform slightly better than their unweighted counterparts.  In the worst case scenario setting in Simulation 4, where weights are independently generated, the FDP is still controlled by the WA procedure, but some loss in power over its unweighted counterpart is observed.  For example, the CDP of the WA(UA) procedure is 0.006(0.007), 0.386(0.425), and 0.727(0.756) when $a = 1, 3,$ and  5, respectively, while the average FDP of the WA(UA) procedure is 0.025(0.023), 0.030(0.030), and 0.039(0.039) when $a = 1, 3, $ and $5$, respectively.  

\subsection{Simulation 5}
The average CDP ratio (weighted/unweighted) vs. the average CDP of the weighted procedure is depicted in Figure \ref{simulationFig} for all settings. Observe that the CDP ratio is greater than or equal to 1 for each value of $p$ and $\alpha$ as long as the CDP is at least 0.2.  \\

\begin{figure}[h!]\centering \vspace{-.5in}
\epsfig{file = simulationplot, width = 3.5in, height = 2.2in}
\caption{\label{simulationFig} The ratio of the average CDP (weighted/unweighted) vs. the average CDP of the weighted procedure for $p = 0.2 (o),
p= 0.5 (\triangle),$ and $p = 0.8 (+)$ for $\bar{\gamma} = 1.75, 2, 2.25$.}
\end{figure}